\documentclass{article}
\usepackage{tikz}
\usepackage{subcaption}
\usepackage{amsfonts}
\usepackage{amssymb}
\usepackage{graphicx}
\usepackage{amsmath}
\usepackage{amsthm}
\usepackage{pifont}
\usepackage{xcolor}
\usepackage{enumitem}
\usepackage{mathtools}
\usepackage[normalem]{ulem}
\usetikzlibrary{arrows.meta, bending}

\usepackage{tikz}
\usepackage{pgfplots}
\pgfplotsset{compat=1.18}

\usepackage[colorlinks=true,linkcolor=blue,citecolor= red,bookmarksopen=true,urlcolor=blue]{hyperref}
\nonstopmode
\newtheorem{theorem}{Theorem}[section]
\newtheorem{proposition}[theorem]{Proposition}
\newtheorem{lemma}[theorem]{Lemma}
\newtheorem*{lemma*}{Lemma}

\newtheorem{corollary}[theorem]{Corollary}

\newtheorem{definition}[theorem]{Definition}
\newtheorem{notation}[theorem]{Notation}
\newtheorem*{standingassumptionI*}{Standing Assumption I}
\newtheorem*{standingassumptionII*}{Standing Assumption II}
\newtheorem{assumption}[theorem]{Assumption}

\theoremstyle{remark}
\newtheorem{example}[theorem]{Example}
\newtheorem{remark}[theorem]{Remark}

\newcommand{\R}{{\mathbb R}}

\newcommand{\norm}[1]{\left\lVert#1\right\rVert}
\newcommand\abs[1]{\left|#1\right|}

\newcommand{\rn}[2]{\frac{\mathrm{d}#1}{\mathrm{d}#2}}

\newcommand{\fit}{{\mathcal{F}^{i}_t}}

\newcommand{\fib}{{\mathbb{F}^{i}}}
\newcommand{\hi}{{\mathcal{H}^{i}}}
\newcommand{\ki}{{K_{i}}}

\newcommand{\mi}{{M^{i}}}
\newcommand{\mib}{{M^{i,\infty}}}
\newcommand{\mie}{M^i_{e}}
\newcommand{\mibe}{M^{i,\infty}_{e}}
\newcommand{\pii}{\rho_{i,+}}

\renewcommand{\emph}[1]{\textit{#1}}
\newcommand{\red}[1]{\textcolor{red}{#1}}

\newcommand{\blue}[1]{{\color{blue}#1}}

\newcommand{\mcY}{\mathcal {Y}}
\newcommand{\mcF}{\mathcal {F}}

\newcommand{\mcR}{\rho^{\mathcal Y}_{+}}

\newcommand{\co}{\mathrm{cone}}
\newcommand{\hY}{Y^{\perp}}

\newcommand{\Me}{\mathcal M_e(\mcY)}

\usepackage{graphicx} % Required for inserting images

\title{Collective completeness and pricing-hedging duality II}
\author{
Alessandro Doldi\thanks{Dipartimento di Matematica, Università degli Studi di Milano, Via Saldini 50, 20133 Milano, Italy,
\emph{alessandro.doldi@unimi.it}.}, $\quad$ Marco Frittelli\thanks{Dipartimento di Matematica, Università degli Studi di Milano, Via Saldini 50, 20133 Milano, Italy,
\emph{marco.frittelli@unimi.it}.  }, $\quad$
Marco Maggis\thanks{Dipartimento di Matematica, Università degli Studi di Milano, Via Saldini 50, 20133 Milano, Italy,
\emph{marco.maggis@unimi.it}.}
}

\date{\today}

\begin{document}
\maketitle

\begin{abstract}
\noindent  
This paper complements and extends Doldi, Frittelli and Maggis, \emph{Collective completeness and pricing–hedging duality}, Math. Finan. Econ. 19, 757–784 (2025), by studying collective pricing and hedging when admissible risk exchanges form a finitely generated convex cone. The collective First Fundamental Theorem of Asset Pricing and the collective pricing–hedging duality are extended to this setting. A key contribution is a closedness result showing that no collective arbitrage implies the closedness of the aggregate feasibility cone combining infinite-dimensional trading opportunities with finite-dimensional exchanges.  The paper also proves that no-collective-arbitrage prices for vectors of contingent claims form a relatively open convex set. Finally, strong collective replicability is introduced and shown to be equivalent to price uniqueness. This leads to an enhanced collective Second Fundamental Theorem of Asset Pricing, providing equivalent characterizations of collective completeness and strong collective completeness in terms of the uniqueness of the collective equivalent martingale measure.
We highlight that  several core aspects of the theory are substantially altered when exchanges belong to a convex cone rather than a vector space.
\end{abstract}

\medskip
\noindent\textbf{Keywords:} Arbitrage, Super-replication, Fundamental Theorem of Asset Pricing, Cooperation Completeness, Segmented Markets.

\medskip
\noindent\textbf{MSC Classification:} 91G15, 91G20, 91G45, 60G42.

\medskip
\noindent\textbf{JEL Classification:} C69, G10, G12, G13.
%%%%%%%%%%%%%%%%%%%%%%%%%%

\section{Introduction}

Many modern financial environments involve several decision makers who can trade in (possibly) different markets and under different information structures, yet are also able to share risk through explicit transfers or contractual arrangements. Market segmentation, heterogeneous trading constraints, and asymmetric information may prevent any individual agent from achieving efficient risk allocation on their own. At the same time, cooperation through \emph{zero--sum} exchanges can expand the set of feasible payoffs available to the group. The theory of \emph{collective finance} formalizes this interaction by embedding individual trading opportunities within a multi-agent system in which agents may reallocate wealth among themselves subject to an aggregate budget constraint.

\medskip

Traditional asset pricing theory is primarily formulated in terms of a single representative agent operating in a frictionless market. Following the seminal contributions of Kreps \cite{Kreps81}, Harrison--Kreps \cite{HarrisonKreps79}, and Harrison--Pliska \cite{HarrisonPliska81}, market viability and asset valuation have been characterized through the Fundamental Theorem of Asset Pricing (FTAP), which establishes a deep connection between the absence of arbitrage opportunities and the existence of suitable martingale measures.

\medskip

Expanding upon the framework established in \cite{BDFFM25}, \cite{DFM25}, and \cite{F25}, we further develop the theory of collective finance along three primary directions. First, we extend the First FTAP (Theorem \ref{IFTAP:cone}) and the Pricing Hedging Duality (Theorem \ref{duality}) to a larger class of admissible exchanges. Second, we investigate the properties of the set of No-Collective-Arbitrage prices (Proposition \ref{propRO}). Third, we introduce the concept of strong collective replicability, establishing its equivalence with price uniqueness (Proposition \ref{propUnique}). These results deliver an enhanced version of the collective Second Fundamental Theorem of Asset Pricing (Theorem \ref{completeTHbis}), which formally links the notions of completeness and strong completeness to the uniqueness of the collective equivalent martingale measure.

\medskip

Before turning to the formal setting, it is useful to situate this viewpoint within a broader literature on cooperation in multi-agent systems.
 While this body of work is not directly comparable to the present setting, it rests on a similar idea: cooperation can be consistent with individual rationality and may enable agents to reach outcomes that would be inaccessible separately. In the present framework, mutually advantageous exchanges are characterized by compatibility between agents’ preferences and collective pricing measures  \cite{DFM26c}; in multi-agent systems, cooperation is typically understood as coordinated behavior directed toward common or mutually reinforcing objectives. Relevant contributions include \cite{DORAN_FRANKLIN_JENNINGS_NORMAN_1997}, shared intentionality \cite{Bratman1992}, commitments and conventions \cite{Jennings1993}, coordination through partial global planning \cite{DurfeeLesser1988,DeckerLesser1995}, and spontaneous cooperation via self-organization \cite{Steels1990}.
\medskip

Returning to the financial model, we work in discrete time and consider $N\ge 1$ agents. Agent $i$ trades using a personal filtration $\mathbb{F}^i$ and a set of discounted assets
available to her; the associated terminal trading gains form a  vector space $K_i$. Classical viability for agent $i$
is encoded by the individual no--arbitrage condition $\mathbf{NA}_i$. 
In addition to trading in their respective financial markets, agents may cooperate through \emph{risk exchanges} represented by vectors
\[
Y=(Y^1,\dots,Y^N)
\]
of random variables. The component \(Y^i\) denotes the net payoff transferred to agent \(i\) through the exchange mechanism, so that positive realizations correspond to capital inflows and negative realizations to capital outflows. These exchanges provide a mechanism for reallocating risks across agents, thereby creating opportunities that may not be attainable through individual market trading alone.
In this paper, the set $\mcY$ of
\emph{allowed} exchanges  is assumed to be a finitely generated convex cone
containing deterministic zero-sum transfers
\[
\R^N_0 := \left\{x\in\R^N : \sum_{i=1}^N x_i = 0 \right\},
\]
see Example \ref{excone}. The notion of \emph{Collective Arbitrage} captures the possibility that coordinated actions create gains that are invisible at the individual level. A collective arbitrage occurs if there exist trading
gains $k=(k^1,\dots,k^N)\in K_1\times\cdots\times K_N$ and an exchange $Y\in \mcY$ such that $k^i+Y^i\ge 0$ $P$-a.s. for all $i$, and $k^j+Y^j>0$ with positive probability for at least one agent
$j$. The corresponding viability condition, \emph{No Collective Arbitrage} $\mathbf{NCA}(\mcY)$, first introduced in \cite{BDFFM25}, requires
that no such favorable outcome can be achieved. 
In contrast to the classical single-agent framework, the condition $\mathbf{NCA}(\mcY)$ may still be satisfied even if the aggregate market, obtained by pooling all assets, admits arbitrage opportunities. This highlights the impact of market segmentation and restrictions on exchanges.

\medskip

A central object in the theory is the set of \emph{collective equivalent martingale measures} $\mathcal{M}_e(\mcY)$. A vector $\mathbf{Q}=(Q^1,\dots,Q^N)$ of probability measures is called a collective equivalent martingale measure if each $Q^i$ is equivalent to $P$, prices correctly the assets available to agent $i$, and $\mathbf{Q}$
satisfies the aggregate exchange condition
\[
\sum_{i=1}^N E_{Q^i}[Y^i]\le 0
\qquad \text{for all } Y\in \mcY,
\]
together with suitable integrability requirements (see \eqref{MartingaleMeasures}). When $\mcY$ is a vector space, the above inequality reduces to equality, whereas for a cone it captures the asymmetric nature of admissible transfers. 

\medskip

The first key contribution (Proposition \ref{C:closed}) of the paper is a closedness result for the aggregate feasibility cone
\[
{\sf X}_{i=1}^{N} (K_i - L^0_+(\Omega,\mathcal{F}^i_T,P)) + \mcY,
\]
which combines infinite-dimensional trading opportunities with finite-dimensional exchanges. We show that $\mathbf{NCA}(\mcY)$ already guarantees the required closedness, leading to the extension
of the collective First Fundamental Theorem of Asset Pricing. This result supports a collective pricing-hedging duality  for possibly unbounded claim vectors and highlights the role of cone-valued exchanges in breaking
the symmetry between super\slash subhedging, see Theorem  \ref{duality} and Remark \ref{remMinus}.

\medskip

\noindent Apart from the mathematical challenge that working with a cone $\mcY$ poses, this choice is economically relevant, as it models more realistic cooperation: If $\mcY$ is a convex cone, exchanges are directional: agents can agree on transfers that redistribute risk in allowed directions, but they cannot force arbitrary repayments in every direction. 
If instead $\mcY$ were a vector space, then for every admissible exchange $Y$ the opposite exchange 
$-Y$ would also be admissible. Economically, this means any transfer can be reversed at no cost and agents could neutralize any constraint by flipping the sign of exchanges.
\\ The cone structure prevents this drawback and ensures that exchanges represent credible and enforceable cooperation agreements (e.g. insurance-like risk sharing). This asymmetry is what allows collective arbitrage, pricing bounds, and fairness results to be economically more meaningful in the paper’s framework.
From a technical viewpoint, we emphasize that the proofs of Proposition \ref{C:closed}, Theorem \ref{IFTAP:cone} and Theorem \ref{duality} follow a substantially different line of argument from those developed in the corresponding results in the vector space case for $\mcY$ (respectively, Theorem 2.8, Theorem 2.1 and Theorem 2.7, \cite{DFM25}),  although they still rely on methods that are classical within this area of research.

\medskip

The second part of the paper focuses on no-collective-arbitrage prices. Given a vector of contingent claims $f=(f^1,\dots,f^N)$, we consider the no-collective-arbitrage prices set $\Pi(f)\subseteq\R^N$, as introduced in \cite{DFM25}. $\Pi(f)$ is the collection of those price vectors for which the enlarged market obtained
by introducing $f^i$ as a traded asset for agent $i$ still satisfies $\mathbf{NCA}(\mcY)$. When $\mcY$ is a vector space, we prove in Section \ref{property:pi} that $\Pi(f)$ is a \textit{relatively open} convex set.  As explained in Remark \ref{comparison}, this extends the classical one-dimensional result to the multi-agent setting.

\medskip

One of the key differences between collective and classical asset-pricing theory is that collective replicability does not, in general, imply price uniqueness, even when the admissible exchanges form a vector space. More precisely, for \(N\geq 2\), a collectively replicable claim may still admit multiple arbitrage-free prices; see Example~\ref{ex1}. This contrasts with the familiar single-agent framework, where replicability is typically associated with a unique arbitrage-free valuation.

To overcome this issue, for a given convex cone of exchanges \(\mcY\), we introduce the restricted cone
$\widehat{\mcY}\subseteq\mcY$,
consisting of exchanges that are individually fair under every collective equivalent martingale measure, rather than merely fair in aggregate.  This restriction removes the ambiguity generated by admissible exchanges whose aggregate value is zero but whose individual components may have nonzero values under different collective martingale measures. By restricting attention to these fair exchanges, we define strong collective replicability with respect to \(\widehat{\mcY}\). Our main structural result, presented in Section \ref{sec52}, shows that \textit{strong collective replicability is equivalent to price uniqueness and to the coincidence of collective superhedging and subhedging prices}. This leads to an enhanced collective Second
Fundamental Theorem of Asset Pricing, characterizing collective completeness and strong
collective completeness in terms of uniqueness of the collective equivalent martingale measure.

\medskip

We emphasize that, in the more general framework in which \(\mathcal Y\) is a convex cone rather than a vector space, several key aspects of the theory undergo significant changes.
A particularly relevant instance is the collective Second FTAP. Indeed, natural conjectures suggested by the corresponding results for  vector spaces of exchanges in \cite{DFM25} would be that: (i) under \(\mathbf{NCA}(\mcY)\), collective replicability---that is, replicability in each agent's market up to admissible exchanges---of a vector of contingent claims is equivalent to the equality of its collective subhedging and superhedging prices; or that (ii) collective completeness of the markets holds if and only if the set of equivalent collective martingale measures is a singleton. Both claims are, in fact, false in general, unless the joint assumptions of \(\mathbf{NCA}(\mathcal Y)\) and \(\mathbf{NCA}(-\mathcal Y)\) hold (compare Proposition \ref{uguali} with Proposition \ref{diversi} and the examples in Section \ref{example1}).
The same strengthened conditions are also needed in the statement of the collective Second Fundamental Theorem of Asset Pricing \ref{completeTHbis}. Even under these additional assumptions, the proofs of the aforementioned results require non-negligible adaptations relative to their counterparts in the vector space setting.

\section{Collective arbitrage in discrete time}\label{secsetting}

Let $\mathcal T=\{0, \dots, T\}$ be the finite set of discrete times and consider a given filtered probability space $(\Omega, \mathcal{F}, \mathbb F ,P)$, with  $\mathbb F =\{\mathcal{F}_t\}_{t \in  \mathcal T}$,  
%\sout{$\mathcal{F}_0=\{\emptyset, \Omega\}$} 
and $\mathcal{F}=\mathcal{F}_T$. We also assume that $\mcF_0$ is trivial, namely that $P(A)=0$ or $P(A)=1$ for all $A \in \mcF_0$. 
 We say that a probability measure $Q$ on $(\Omega, \mathcal{F})$ belongs to $\mathcal{P}_e$ if $Q \sim P$, or to $\mathcal{P}_{ac}$ if $Q\ll P$, respectively.
 Unless otherwise stated, all inequalities between random variables are meant to hold $P$-a.s.. \\

  The (global) securities market comprises a zero-interest rate riskless asset $X^0$ and $J$ risky assets with discounted price processes $X^j=(X^j_t)_{t\in \mathcal T}$, $j=1, \dots ,J$, $J\geq 1$. We set $X^0_t=1$ for all $t \in \mathcal T$.

Consider $N$ agents operating within the market, and assume that each agent $i$, $i=1,\dots,N$, can invest in the riskless asset and in the risky assets $X^j$, $j\in (i)$, where $(i)$ denotes a specified subset of $\{1,\dots,J\}$. We denote the cardinality of $(i)$ by $d_i$, representing the number of risky assets in which agent $i$ may invest, and assume $1 \leq d_i \leq J$.
We assume (without loss of generality) that $\cup_{i=1}^N (i)=\{1,\dots,J\}$, as
we may ignore the assets that cannot be used by any agent. We are not excluding that different agents may invest in the same risky assets nor that one or more agents may invest in the full market. Let $X^{(i)}:=(X^j,\, j \in (i))$ be the risky assets available for trading to agent $i$.
We denote by $$\fib=(\fit)_{t\in\mathcal{T}}\subseteq \mathbb F$$ the filtration representing the information available to agent $i$. Thus for all $i$, $\mathcal F^i_0=\mathcal F_0$, the trivial sigma-algebra.
We assume that all processes $X^j$, with $ j \in (i)$, are adapted with respect to the  filtration $\fib$.

For a detailed and exhaustive motivation behind the segmented market setup, the reader may consult \cite{BDFFM25}, \cite{Carassus23}, \cite{DFM25}  and  \cite{DFM26c}, Remark 4.1.

\begin{notation}
\label{productnotation}
If $\mathbf Q=(Q^1,\dots,Q^N) \in (\mathcal P_{ac})^{N} $ and $p \in [0,\infty]$ we set $$L^{p }(\Omega, \mathbf{F}_t,\mathbf Q):=L^{p }(\Omega, \mathcal{F}_t^{1},Q^1) \times \dots \times  L^{p }(\Omega, \mathcal{F}_t^{N},Q^N), \quad t\in \mathcal T .$$ 
With a slight abuse of notation, when we consider a single probability measure $Q \in \mathcal P$, we still use the same notation $$L^{p }(\Omega, \mathbf{F}_t,Q):=L^{p }(\Omega, \mathcal{F}_t^{1},Q) \times \dots \times  L^{p }(\Omega, \mathcal{F}_t^{N},Q),\quad t\in \mathcal T .$$
\end{notation}

 We say that a set $K \subseteq L^{0 }(\Omega, \mathbf{F}_T,P)$ is closed in $L^{0 }(\Omega, \mathbf{F}_T,P)$ if it is closed for the convergence in probability. 
 
 We adopt the following partial order among random vectors: for every $f,g\in L^{0}(\Omega, \mathbf{F}_T, P)$ we write $f \leq g$ if and only if $P(f^i\leq g^i)=1$ for every $i=1,\ldots, N$. We also let $L_+^{p }(\Omega, \mathbf{F}_T,P):=\{f \in L^{p }(\Omega, \mathbf{F}_T,P) \mid f \geq 0 \}$.
 %\blue{For any $a\in\R$ we shall always indicate with $\mathbf{a}$ the vector $\mathbf{a}=(a,a,\ldots,a)\in\R^N$.} 
 \\

A stochastic process $H=(H_t)_{t\in \mathcal T}$ is called an \label{admisstrat}\emph{admissible trading strategy for the agent} $i$ if it is $d_i$-dimensional and  predictable with respect to $\fib$.
The space of admissible trading strategies for the agent $i$ is denoted by $\hi$. 
If $H \in \hi$, we set 
\begin{equation*}
(H\cdot X^{(i)})_t:=\sum_{h \in (i) } (H^h\cdot X^h)_t,  \quad t \in \mathcal T,  
\end{equation*} 
where $(H^h\cdot X^h)_t:=\sum_{s=1}^t H^h_s(X^h_s-X^h_{s-1})$ denotes the stochastic integral of $H^h$ with respect to  the asset $X^h$, $h\in (i)$, and we write

\begin{equation}
\label{def:KiBis}
\ki:=\{(H\cdot X^{(i)})_T \mid H \in {\hi} \} \subseteq L^{0 }(\Omega, \mathcal{F}_T^{i},P).
\end{equation}

\noindent The sets of martingale measures for  the assets in $(i)$ are defined by
\begin{equation*}
\mi:=\left\{ Q \in \mathcal{P}_{ac} \mid     X^j  \text { is a } (Q, \fib) \text {-martingale for all } j \in (i) \right\},
\end{equation*}
\begin{equation}
\label{mifty}
M^{i,\infty}(P):=\left\{ Q \in M^i \mid  \frac{dQ}{dP} \in L^{\infty}(\Omega, \mathcal{F}_T^{i},P)  \right\},
\end{equation}
Moreover, we set $\mie:=\mathcal P_e \cap \mi $ and $M_e^{i,\infty}(P):=\mathcal P_e \cap M^{i,\infty}(P)$.
%\text {and} \quad \mibe(P):=\mathcal P_e \cap \mib(P). $$
The classical No Arbitrage condition for agent $i$ holds if:

\begin{equation*}
\label{eq:NAi}
 \mathbf{NA}_{i}: \text{  } \ki \cap L_{+}^{0}(\Omega, \mathcal{F}^i_T , P)=\{0\}.
 \end{equation*}
While the space $K_i$ is closed in probability by Stricker's Theorem, only under the $\mathbf{NA}_{i}$ condition also the set $(\ki - L_{+}^{0}(\Omega, \mathcal{F}^i_T , P))$ is closed in $L^{0}(\Omega, \mathcal{F}^i_T , P)$ (see \cite{DS2006} Theorem 6.9.2), which is an essential property for the proofs of both the first FTAP and the superhedging duality.
\begin{theorem}[Dalang, Morton and Willinger (1990) \cite{DMW90}]\label{DMW}
In the discrete time setting described above fix any $i$. If $X^{(i)}$ is integrable under $P$   then  
\begin{equation*}
 \mathbf{NA}_{i} \;\Longleftrightarrow\;  \mie \not = \emptyset \;\Longleftrightarrow\;M_e^{i,\infty}(P) \not = \emptyset.
 %\quad \mathbf{NA} \Longleftrightarrow  M \cap \mathcal{P}_e \not = \emptyset.
\end{equation*}
\end{theorem}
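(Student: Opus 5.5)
The plan is to run the classical Dalang--Morton--Willinger argument for the fixed agent $i$, inside the filtered space $(\Omega,\mathcal F^i_T,\fib,P)$ with the $d_i$-dimensional process $X^{(i)}$. The trivial implications come first. $M_e^{i,\infty}(P)\subseteq\mie$ gives the third condition $\Rightarrow$ the second. For the second $\Rightarrow$ $\mathbf{NA}_i$, take $Q\in\mie$ and $k=(H\cdot X^{(i)})_T\in\ki$ with $k\ge 0$. The process $H\cdot X^{(i)}$ is a $Q$-martingale transform with terminal value bounded below, so by the standard discrete-time result (Jacod--Shiryaev / Ansel--Stricker) it is a $Q$-local martingale that is a true martingale. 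Hence $E_Q[k]=0$. Since $k\ge0$ and $Q\sim P$, this forces $k=0$ $P$-a.s.

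The substantial implication is $\mathbf{NA}_i\Rightarrow M_e^{i,\infty}(P)\neq\emptyset$, and I will follow the Kreps--Yan route.

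\textbf{Step 1: reduce to bounded integrable prices.} Pick $Z=c\,e^{-\sum_t|X^{(i)}_t|}$, normalized so that $\tilde P:=Z\,P$ is a probability measure. Then $\tilde P\sim P$ with bounded density, every $X^{(i)}_t$ is $\tilde P$-integrable, $\mathbf{NA}_i$ is unchanged, and $L^0$-closedness is unchanged. Integrability of $X^{(i)}$ under $P$ is actually needed only to recover a bounded density with respect to $P$ at the end. If $\tilde Q$ is found with $d\tilde Q/d\tilde P$ bounded, then $d\tilde Q/dP=Z\,d\tilde Q/d\tilde P$ is bounded because $Z$ is bounded. So I will work under $\tilde P$ throughout.

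\textbf{Step 2: closedness.} Under $\mathbf{NA}_i$, the set $C:=(\ki-L^0_+(\Omega,\mathcal F^i_T,P))$ is closed in $L^0$; this is \cite{DS2006} Theorem 6.9.2, already quoted in the paper. It follows that $C\cap L^1(\tilde P)$ is a convex cone, closed in $L^1(\tilde P)$, containing $-L^1_+$, with $C\cap L^1_+=\{0\}$.

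\textbf{Step 3: Kreps--Yan separation.} For each nonzero $f\in L^1_+$, Hahn--Banach in the duality $(L^1,L^\infty)$ separates $f$ from $C\cap L^1$. This gives $g_f\in L^\infty_+$ with $E[g_f f]>0$ and $E[g_f c]\le0$ on $C\cap L^1$. Positivity of $g_f$ comes from $-L^1_+\subseteq C$. Next comes the exhaustion step: take a countable family $g_n$ for which $\{g_n>0\}$ approaches the essential supremum of the supports, and form $g=\sum_n 2^{-n}g_n/(1+\|g_n\|_\infty)$. Then $g>0$ a.s.; otherwise the indicator of $\{g=0\}$ would itself be an $f$ giving a contradiction. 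Normalize $g$ into a density $dQ/d\tilde P$, which is bounded and strictly positive.

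\textbf{Step 4: martingale property.} Apply $E_Q[k]\le0$ to $k=\pm\,\mathbf 1_A(X^j_t-X^j_{t-1})$ for $A\in\mathcal F^i_{t-1}$ and $j\in(i)$. These are elements of $\ki\cap L^1(\tilde P)$, and they give $E_Q[\mathbf 1_A(X^j_t-X^j_{t-1})]=0$. So each $X^j$, $j\in(i)$, is a $(Q,\fib)$-martingale, with integrability following from the bounded density.

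I expect the main obstacle to be the closedness in Step 2, the delicate measurable-selection / compactness argument in DMW. Since the paper already cites it, I will invoke it directly rather than reprove it. The exhaustion in Step 3 is routine.
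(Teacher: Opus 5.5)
The paper gives no proof of this statement; it simply quotes it from \cite{DMW90}. Your argument is correct, but it is not the original Dalang--Morton--Willinger proof. That proof builds the martingale measure one period at a time, using measurable selection on the conditional laws of the increments, and then pastes the one-step densities together. You instead take the Kreps--Yan route, as in \cite{DS2006}, Section 6.10, or Kabanov--Stricker. All the measure-theoretic difficulty goes into the $L^0$-closedness of $K_i-L^0_+(\Omega,\mathcal F^i_T,P)$. The standard proof of that closedness, by induction on $T$ with a selection lemma, does not use martingale measures, so citing it is not circular. After that step your argument is pure functional analysis. In fact it is exactly the case $N=1$, $\mathcal Y=\{0\}$ of the paper's own proof of Theorem \ref{IFTAP:cone}; this case is admissible under Assumption \ref{ass:cone} because $\mathbb R^1_0=\{0\}$. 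The ingredients are the same: an equivalent change of measure, the closedness input, and Kreps--Yan. The differences are that you carry out the exhaustion by hand, and you check the martingale property directly on the elementary gains $\pm\mathbf 1_A(X^j_t-X^j_{t-1})$ instead of through the $L^1$-characterization of $M^{i,\infty}$. The original route gives a self-contained, explicit construction. Yours is shorter and carries over verbatim to the collective setting, which is precisely how the paper uses it. One small inaccuracy: you say integrability of $X^{(i)}$ under $P$ is needed to recover a bounded density with respect to $P$. Your own identity $dQ/dP=Z\,dQ/d\tilde P$ with $0<Z\le c$ shows it is not used anywhere. So your proof gives the equivalences without that hypothesis, which is consistent with the paper's remark that integrability can always be obtained by an equivalent change of measure.
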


It is well-known that the integrability of $X^{(i)}$
  can be ensured by a change to an equivalent probability measure.

Finally, we recall the following standard fact: if \(Q \in M^i\) and \(k \in K_i\) is such that
$
k^- \in L^1(\Omega,\mathcal F_T^i,Q),
$
then it follows that
$
k \in L^1(\Omega,\mathcal F_T^i,Q)
\quad\text{and}\quad
E_Q[k]=0
$
(see \cite[Theorem 5.14]{FollmerSchied2}). We shall use this result repeatedly without further mention.

\subsection{Setting for collective arbitrage}

One of the key principles introduced in \cite{BDFFM25} is that, besides trading in their own individual markets, $N$ agents may improve their portfolios by engaging in cooperative risk transactions.
Let $\mathcal{Y}_0$ denote the set of all zero-sum risk exchanges, defined as
\begin{equation*} \label{2345}
\mathcal{Y}_0 = \left\{ Y \in L^0(\Omega, \mathbf{F}_T, P) \mid \sum_{i=1}^N Y^i = 0  \right\}.
\end{equation*}

For every $Y \in \mathcal{Y}_0$, the aggregate sum of its components is $P$-almost surely equal to zero, whereas the single entries $Y^i$ are generally random variables. A positive value of $Y^i$ on a given event represents a transfer of capital from the group to agent $i$, while a negative value corresponds to a transfer from agent $i$ to the others. Therefore, each element $Y \in \mathcal{Y}_0$ describes a possible redistribution of wealth among the agents under the requirement of zero total net transfer. 
An example of permissible exchanges is a convex cone $\mathcal{Y} \subseteq \mathcal{Y}_0$ (see Assumption \ref{ass:cone} for the exact class we shall consider).

When agent $i$ follows an investment strategy in its own market $(i)$, she will obtain a terminal payoff $k^i \in {K}_i$.
The agents may also enter in the risk exchange corresponding to a vector $Y \in \mathcal Y $. This procedure leads to the terminal time value $k^i + Y^i$ for agent $i$.
In \cite{BDFFM25} a \emph{Collective Arbitrage} consists of vectors $(k^1, \dots , k^N) \in {K}_1 \times \cdots \times {K}_N$ and $Y =(Y^1, \dots, Y^N) \in \mcY$ satisfying
\begin{align*}
 k^i+Y^i\geq 0  \quad P\text{-a.s. } & \quad \forall \, i\in\{1,\dots,N\} \, \text{ and } 
\\ P(k^n+Y^n>0)>0   &\quad \text{for at least one } n \in \{1,\dots,N\}.
\end{align*}

 \begin{definition}[No Collective Arbitrage, Def 3.1 \cite{BDFFM25}]
\label{NCA} 
No Collective Arbitrage for $\mathcal{Y}$ ($\mathbf{NCA}(\mathcal{Y})$) holds if
\begin{equation*}
({\sf X}_{i=1}^{N} K_i+\mathcal Y )\cap  L^{0 }_+(\Omega, \mathbf{F}_T,P)=\{0\}, \label{NCAY} 
 \end{equation*}
 where ${\sf X}_{i=1} ^{N} \ki$ denotes the Cartesian product of the sets $\ki$ defined in \eqref{def:KiBis}.
 \end{definition}
 
 As shown in Proposition 3.2  \cite{BDFFM25},
 \begin{equation}\label{NCACC}
      \mathbf{NCA}(\mathcal{Y}) \Leftrightarrow K^\mathcal Y \cap  L^{0 }_+(\Omega, \mathbf{F}_T , P)=\{0\} \Leftrightarrow C^\mathcal Y \cap  L^{1 }_+(\Omega, \mathbf{F}_T , P)=\{0\}, 
  \end{equation}where 
 \begin{equation}\label{KK}
      K^\mathcal Y:=  {\sf X}_{i=1} ^{N} ( K_i - L^{0 }_+(\Omega, \mathcal{F}^i_T,P) ) + \mathcal Y \text{ and } C^\mathcal Y := K^\mathcal Y \cap L^{1 }(\Omega, \mathbf{F}_T, P)
 \end{equation}

\begin{remark}
As explained in the aforementioned references where $\mathbf{NCA}(\mcY)$ was introduced, the concept of no collective arbitrage is a genuine novel notion that lies (strictly) between no classical arbitrage for the global market and no classical arbitrage for all individual agents.  In addition, (see Remark 4.5 \cite{DFM26c}) no collective arbitrage for $\mcY $ is not in general equivalent to classical
no–arbitrage for a suitably defined representative (social–planner) agent. 
\end{remark}

\begin{remark} The notion of arbitrage developed in the strand of literature initiated by
\cite{Page95} and further pursued, among others, in \cite{Page2006,DanaLeVan2010}
is a preference-based collective concept, formulated in a finite-state one-period setting. It does not rely on the existence of a
market of dynamically traded assets. Rather, an exchange is regarded as useful
when it improves agents' utilities asymptotically, a property that, in the
framework of \cite{DanaLeVan2010}, can be characterized in terms of
risk-adjusted sets of priors. In particular, the components of this type of arbitrage opportunity need
not be statewise nonnegative for each agent; it is sufficient that each agent
evaluates her own component as a useful direction according to her preferences,
ambiguity attitude, and risk-adjusted beliefs.

By contrast, the notion of collective arbitrage considered in the present paper
is market-based. Each agent trades in her own financial submarket and may also
participate in a collective exchange mechanism. A collective arbitrage occurs
when the resulting final payoff of every agent, after individual trading and
collective exchange, is statewise nonnegative, and strictly positive for at least one agent on an event of positive probability. Preferences play no explicit role in this definition. Accordingly, the associated duality is formulated in terms of
vectors of martingale measures satisfying a compatibility condition induced by
the admissible exchange set, rather than in terms of utility gradients or
risk-adjusted priors.

\end{remark}

\subsection{The collective FTAP for convex cones $\mcY$}

In this section we propose an extension of  \cite{DFM25} Theorems  2.1 and  2.7,  by relaxing the assumption on $\mcY$ being a vector space. The idea of adopting directional exchanges was already inspected in the original contribution \cite{BDFFM25}, even though those results were obtained under the additional assumption of the closure in probability of $K^{\mcY}$. Here we show that this assumption is redundant if $\mcY$ is finitely generated.
To this aim we briefly recall the main definition and statements.

\begin{definition}[\cite{DFM25} Definition 1.9]
The sets of equivalent collective martingale measures are defined by
\begin{equation}\label{MartingaleMeasures}
\begin{split}
    \Me=\bigg\{ \mathbf Q=(Q^1,\dots,Q^N) \in {\sf X}_{i=1} ^{N}\mie   \mid  \mcY \subseteq L^{1}(\Omega, \mathbf{F}_T,\mathbf Q)  \text { and  } \sum_{i=1}^N E_{Q^i}[Y^i]\leq 0 \;\forall\, Y \in \mcY  \bigg\}
\end{split}
\end{equation}
\end{definition}

 \begin{assumption}
 \label{ass:cone} The set of allowed exchanges $\mcY $ is a finitely generated convex cone satisfying $\R^N_0 \subseteq \mcY \subseteq L^{0 }(\Omega, \mathbf{F}_T,P)$,  where 
 \begin{equation*}\label{R0}
\mathbb R_0 ^N:=\left \{   x \in \mathbb R^N \mid \sum_{i=1} ^N x^i =0  \right \}=\mathrm{span}\left \{ e_i-e_j \mid i,j \in \{1,\dots,N\} \right \}
\end{equation*}
and $\{e_i\}_{i=1,\dots,N}$ is the canonical basis in $\R^N$.
 \end{assumption}

Notice that a finitely generated convex cone can be assumed to satisfy\footnote{This can be easily obtained by standard arguments via an equivalent change of measure $\widehat P\sim P$} $\mcY \subseteq L^{1}(\Omega, \mathbf{F}_T,P)$. It can be written as 
\begin{equation}\label{coY}
    \mcY:=\mathrm{cone}(\{Y_1, \dots , Y_R\} )=\bigg \{ \sum_{m=1}^R \alpha_m Y_m  \mid \alpha_m \in \R, \,\, \alpha_m\ \geq 0 \text{ for all } m\bigg \},
\end{equation}
%for $Y^0:=\mathbb R^N_0 $ and
namely the cone generated by a finite number $R \in \mathbb N$ of vectors $Y_m \in L^{1 }(\Omega, \mathbf{F}_T,P), $ $m=1,\dots,R $. When we want to ensure that $\R^N_0 \subseteq \mcY $, we assume that the vectors  $(e^i-e^j)_{i,j}$  belong to the collection $\{Y_1,\dots,Y_R\}$ generating $\mcY$.

\begin{example}\label{excone}

The following examples illustrate admissible exchange sets that are finitely generated convex cones rather than vector spaces.
Fix \(t\in\{0,\dots,T\}\).

\begin{enumerate}
\item Let
$\mcY=\operatorname{cone}(\{Y_1,\dots,Y_R\})$,
where each $Y_m\in L^0(\Omega,\mathbf F_t,P)$ satisfies $\sum_{i=1}^N Y_m^i \leq 0,
\, P\text{-a.s.}$,
for \(m=1,\dots,R\). In this case, every admissible exchange can only redistribute wealth among agents while generating a non-positive aggregate transfer.

\item Let $\mcY=\operatorname{cone}(\{Y_1,\dots,Y_R\})$,
where each \(Y_m\in L^0(\Omega,\mathbf F_t,P)\) satisfies
$F(Y_m)\leq 0,
\, m=1,\dots,R,
$
for some sublinear value functional
\[
F:L^0(\Omega,\mathbf F_t,P)\to\mathbb R,
\]
for which $F(x) \leq 0$ if  $x \in \mathbb R_0 ^N$. Here, sublinearity means that
$F(\alpha Y+\beta Z)
\leq
\alpha F(Y)+\beta F(Z),
\,
\alpha,\beta\geq0,$
 \(Y,Z\in L^0(\Omega,\mathbf F_t,P)\).
This framework includes, for instance, $F(Y):=\rho(-Y)$, for systemic coherent risk measures $\rho$, for which numerous examples can be found in the literature, for instance in \cite{BFFMB}. We stress that the use of $-Y$ inside the risk measure functional is motivated by the fact that here positive values stand for profits, while in the risk measures' literature positive values stand for losses. A particularly simple specification is
$
F(Y):=\sum_{i=1}^N \rho^i(-Y^i),
$
where $\rho^1,\dots,\rho^N$ are coherent risk measures (e.g. $\rho^i(Y^i)=E_{P^i}[-Y^i]$ with $P^i\ll P$).
\end{enumerate}
\end{example}

The main mathematical challenge in the results of Section \ref{secsetting} is to show that under $\mathbf{NCA(\mathcal{Y})}$ the cone $K^{\mcY}$ defined in \eqref{KK} is actually closed in probability. The proof  is non-trivial as $K^{\mcY}$ is the sum of an infinite dimensional closed cone and a finite dimensional one.
Indeed, the no-arbitrage condition plays a crucial role to obtain the closure property of $K^{\mcY}$, as can be easily understood by the following example, which shows that the closure of the sum of two closed convex cones easily fails already in finite dimensional spaces.   

\begin{example}
Consider $\mathbb{R}^4$ with its usual topology. Define
\[
K := \Bigl\{(w,x,y,z)\in\mathbb{R}^4 \mid
w\ge 0,\ x\ge 0,\ y^2 + (z-x)^2 \le x^2 \Bigr\}.
\]
It is easy to verify that $K$ is a closed convex cone. Similarly for
\[
C := \{\, t(-1,-1,0,0) \mid t\ge 0 \,\}.
\]
We show that the convex cone $K+C$ is not closed. For $n\in\mathbb{N}$, set
\[
x_n := \frac{n^2+1}{2n}, \qquad
k_n := (x_n,x_n,1,1/n), \qquad
c_n := x_n(-1,-1,0,0).
\]
Then $k_n\in K$, $c_n\in C$, and
\(
k_n + c_n = (0,0,1,1/n) =: s_n \in K + C.
\)
Moreover,
\(
s_n \longrightarrow s := (0,0,1,0) \,\,\text{in } \mathbb{R}^4\), so that $s$ belongs to the closure of $K+C$.

If $s\in K+C$, then $s=k+c$ for some $k\in K$ and $c=t(-1,-1,0,0)\in C$
with $t\ge 0$. Hence \(k = (t,t,1,0).
\)
But $k\in K$ would imply
\(
1^2 + (0-t)^2 \le t^2,
\)
which is impossible. Therefore $s\notin K+C$.
\end{example}

\begin{proposition}\label{C:closed}
Suppose that the set of allowed exchanges $ \mcY \subseteq L^{0 }(\Omega, \mathbf{F}_T,P)$ is a finitely generated convex cone   and $\mathbf{NCA(\mathcal{Y})}$ holds true. Then 
$K^{\mcY}$ is closed in  $L^{0}(\Omega, \mathbf{F}_T,P)$.
\end{proposition}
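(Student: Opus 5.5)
We take a sequence $f_n = k_n - h_n + Y_n \in K^{\mcY}$ converging in probability to some $f$, with $k_n\in{\sf X}_i K_i$, $h_n\in L^0_+(\Omega,\mathbf F_T,P)$ and $Y_n=\sum_{m=1}^R \alpha^m_n Y_m$, $\alpha_n=(\alpha^1_n,\dots,\alpha^R_n)\in\R^R_+$. The plan is to reduce the problem to the closedness of the individual cones $K_i - L^0_+(\Omega,\mathcal F^i_T,P)$. By Dalang--Morton--Willinger these are closed under $\mathbf{NA}_i$, and $\mathbf{NA}_i$ follows from $\mathbf{NCA}(\mcY)$ by taking $Y=0$. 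The only remaining issue is controlling the finite-dimensional coefficients $\alpha_n$. We distinguish two cases according to whether $\sup_n|\alpha_n|$ is finite.

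\emph{Bounded case.} Passing to a subsequence, $\alpha_n\to\alpha\in\R^R_+$. Since the generators $Y_m$ are fixed random vectors, $Y_n\to Y:=\sum_m\alpha^mY_m\in\mcY$ a.s. Then $k_n-h_n = f_n-Y_n\to f-Y$ in probability, componentwise. Closedness of each $K_i-L^0_+$ gives $f^i-Y^i\in K_i-L^0_+(\Omega,\mathcal F^i_T,P)$ for every $i$, hence $f\in K^{\mcY}$.

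\emph{Unbounded case.} Suppose $|\alpha_n|\to\infty$ along a subsequence. Divide by $|\alpha_n|$ and set $\beta_n=\alpha_n/|\alpha_n|$, so that $\beta_n\to\beta$ with $|\beta|=1$ and $\beta\ge0$. Then
\[
\frac{k_n-h_n}{|\alpha_n|} = \frac{f_n}{|\alpha_n|} - \sum_m \beta^m_n Y_m \longrightarrow -\sum_m\beta^mY_m=:-\bar Y .
\]
Closedness of the product cone gives $-\bar Y\in{\sf X}_i(K_i-L^0_+)$, that is, $\bar k - \bar h + \bar Y = 0$ for some $\bar k\in{\sf X}_iK_i$ and $\bar h\ge 0$, where $\bar Y=\sum_m\beta^mY_m\in\mcY$. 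So $\bar k+\bar Y=\bar h\ge0$, and $\mathbf{NCA}(\mcY)$ forces $\bar h=0$, i.e. $\bar k+\bar Y=0$. This does not yet give a contradiction: $\bar Y$ can be a nonzero element of $\mcY$ which is offset by trading. This is the main obstacle, and it is where the argument departs from the vector-space case. I would handle it with the standard Schachermayer/Kabanov-style induction on the number of generators, a "redundant direction" argument.

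\emph{Redundant direction.} Here $\bar Y=-\bar k\in -{\sf X}_iK_i$, and ${\sf X}_iK_i$ is a vector space, so $\bar Y\in{\sf X}_iK_i$. Choose an index $m_0$ with $\beta^{m_0}>0$. For each $n$ let $\lambda_n=\min_{m:\beta^m>0}\alpha^m_n/\beta^m\ge0$, and rewrite
\[
Y_n=\sum_m(\alpha^m_n-\lambda_n\beta^m)Y_m+\lambda_n\bar Y .
\]
The new coefficients are nonnegative, and at least one of them vanishes. The term $\lambda_n\bar Y$ is absorbed into the trading part, since $k_n+\lambda_n\bar Y\in{\sf X}_iK_i$. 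Passing to a subsequence on which the vanishing index is the same $m_1$, each $f_n$ lies in $K^{\mcY'}$, where $\mcY'=\mathrm{cone}(\{Y_m\}_{m\neq m_1})\subseteq\mcY$. The smaller cone still satisfies $\mathbf{NCA}(\mcY')$, because $\mcY'\subseteq\mcY$. We therefore argue by induction on $R$. For $R=0$ the claim is just the closedness of ${\sf X}_i(K_i-L^0_+)$. In the inductive step, the bounded case applies directly, while in the unbounded case the reduction above places the sequence in some $K^{\mcY'}$ with fewer generators, which is closed by the induction hypothesis. Hence $f\in K^{\mcY'}\subseteq K^{\mcY}$.

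The delicate points are measurability-free, because the coefficients are deterministic. One only has to check that the rewriting keeps every coefficient nonnegative, which holds by the choice of $\lambda_n$ as a minimum, and that the induction hypothesis is applied to a cone that still satisfies the no-collective-arbitrage condition.
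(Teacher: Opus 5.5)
Your proof is correct, and it takes a genuinely different route from the paper's.

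\textbf{The paper's route.} The paper proves an auxiliary lemma in several steps:
\begin{enumerate}
\item It changes to a measure $\tilde P$ under which the generators lie in $L^2$.
\item It replaces each $Y_m$ by its orthogonal projection onto $(L^2(\Omega,\mathbf F_T,\tilde P)\cap K)^\perp$, absorbing the $K$-component into the trading gains.
\item It uses a Carath\'eodory-type reduction so that, along a subsequence, the exchanges are written through a fixed linearly independent family $\hY_1,\dots,\hY_M$.
\item It shows by contradiction that the coefficient sums $\beta_n$ stay bounded. A normalized limit $Y_\infty$ would satisfy $k_\infty+Y_\infty\ge 0$, hence $k_\infty+Y_\infty=0$ by $\mathbf{NCA}(\mcY)$, hence $Y_\infty\in K\cap K^\perp=\{0\}$, which contradicts linear independence.
\item The bounded case is closed by compactness and the closedness of finitely generated cones.
\end{enumerate}

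\textbf{Your route.} You rely on the same key fact: under $\mathbf{NCA}(\mcY)$, a recession direction $\bar Y$ of the exchanges must be exactly offset by trading, so $\bar Y\in{\sf X}_{i}K_i$. The difference is in how you dispose of such directions. The paper removes them once and for all by orthogonal projection. You instead absorb $\lambda_n\bar Y$ into the trading gains, drop one generator, and finish by induction on $R$.

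\textbf{What each approach buys.} Your argument is pure bookkeeping on deterministic coefficients. It needs no change of measure, no Hilbert-space structure, no Carath\'eodory theorem, and no closedness result for finitely generated cones. It uses only that ${\sf X}_iK_i$ is a vector space and that ${\sf X}_i(K_i-L^0_+)$ is closed; closedness of $K$ itself is never used. The paper's argument is direct rather than inductive. It gives the explicit information that the coefficients of a linearly independent representation modulo $K$ remain bounded.

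\textbf{A point to state explicitly.} Your induction must range over arbitrary finitely generated cones, since the subcones $\mathrm{cone}(\{Y_m\}_{m\neq m_1})$ need not contain $\R^N_0$. This matches the proposition, which assumes only that $\mcY$ is finitely generated. $\mathbf{NCA}$ passes to subcones trivially, as you note.
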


\begin{proof}

 We recall from \cite{DFM25}, equation (4), that the implication $\mathbf{NCA}(\mcY) \Rightarrow \mathbf{NA}_i$ for every $i=1,\ldots,N$ holds trivially. The proof  is then an immediate consequence of the following general lemma, as in our financial setup, for each $i$, the vector space $K_i$ and the cone $K_i-L^0_+(\Omega, \mathcal{F}^i_T,P)$ are closed in probability, as a consequence of $\mathbf{NA}_i$.  Thus the vector space $ K:=({\sf X}_{i=1} ^{N}  K_i)\subseteq L^0(\Omega, \mathbf{F}_T,P)$ and the cone  $K- L^{0 }_+(\Omega, \mathbf{F}_T,P)=  {\sf X}_{i=1} ^{N} ( K_i - L^{0 }_+(\Omega, \mathcal{F}^i_T,P) ) \subseteq L^0(\Omega, \mathbf{F}_T,P)$  are closed in probability. Moreover, with these notations, $\mathbf{NCA}(\mcY)$ can be written as $(K+\mcY)\cap L^0_+(\Omega, \mathbf{F}_T,P)=\{0\} $ and
 $K^{\mcY}=K+\mcY- L^0_+(\Omega, \mathbf{F}_T,P)$.

 \end{proof}

\begin{lemma}
    Let
    \begin{enumerate}
        \item $K \subseteq L^0(\Omega, \mathbf{F}_T,P)$ be any vector space with both $K$ and $K-L^0_+(\Omega, \mathbf{F}_T,P)$ closed in probability.
        \item $\mcY =\mathrm{cone}(\{Y_1,\dots,Y_R\}) \subseteq L^0(\Omega, \mathbf{F}_T,P)$ be any finitely generated convex cone, with $Y_m \in L^0(\Omega, \mathbf{F}_T,P)$, $m=1,\dots,R$.
        \end{enumerate}

If $(K+\mcY)\cap L^0_+(\Omega, \mathbf{F}_T,P)=\{0\} $ then $K+\mcY- L^0_+(\Omega, \mathbf{F}_T,P)$ is closed in probability.
\end{lemma}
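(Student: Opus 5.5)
We argue by induction on the number $R$ of generators of $\mcY$. For $R=0$ the set $K+\mcY-L^0_+(\Omega,\mathbf F_T,P)=K-L^0_+(\Omega,\mathbf F_T,P)$ is closed by hypothesis 1. Assume the statement holds for cones generated by $R-1$ vectors. Set $\mcY'=\mathrm{cone}(\{Y_1,\dots,Y_{R-1}\})$ and $C':=K+\mcY'-L^0_+(\Omega,\mathbf F_T,P)$. Then $(K+\mcY')\cap L^0_+\subseteq (K+\mcY)\cap L^0_+=\{0\}$, so $C'$ is closed by the inductive hypothesis. Moreover $K+\mcY-L^0_+=C'+\{\alpha Y_R\mid \alpha\ge 0\}$. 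It therefore suffices to show the following: if $C'$ is a closed convex cone in $L^0$ and $Z:=Y_R$, then $C'+\R_+Z$ is closed, using the no-arbitrage condition.

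Take $c_n+\alpha_nZ\to g$ in probability, with $c_n\in C'$ and $\alpha_n\ge0$. We split into two cases according to whether $(\alpha_n)$ is bounded.

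\textbf{Bounded case.} Pass to a subsequence with $\alpha_n\to\alpha\ge0$. Then $c_n=(c_n+\alpha_nZ)-\alpha_nZ\to g-\alpha Z$ in probability. Since $C'$ is closed, $g-\alpha Z\in C'$, so $g\in C'+\R_+Z$.

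\textbf{Unbounded case.} Pass to a subsequence with $\alpha_n\to\infty$ and divide by $\alpha_n$. Then $c_n/\alpha_n+Z\to 0$ in probability. Since $C'$ is a cone, $c_n/\alpha_n\in C'$, and these elements converge to $-Z$. Closedness of $C'$ gives $-Z\in C'$, i.e. $-Z=k+y'-h$ with $k\in K$, $y'\in\mcY'$ and $h\ge0$. Hence $k+y'+Z=h\in(K+\mcY)\cap L^0_+=\{0\}$, so $h=0$ and $-Z=k+y'\in K+\mcY'$.

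In this degenerate case $-Z\in K+\mcY'$, and so $\R_+Z\subseteq K+\mcY+\R_+(-Z)$. We then claim
\[
C'+\R_+Z=C'+\R Z.
\]
The inclusion $\subseteq$ is trivial. For $\supseteq$, take $c-\beta Z$ with $\beta\ge0$. Since $-\beta Z=\beta k+\beta y'\in K+\mcY'$ and $C'+K+\mcY'=C'$, we get $c-\beta Z\in C'$. Conversely $C'\subseteq C'+\R_+Z$, so in fact $C'+\R_+Z=C'+\R Z$. The original sequence $c_n+\alpha_nZ$ lies in $C'+\R Z$, and the problem is now closedness of $C'+\R Z$. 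Here we use $C'=C'-\R_+Z$, which follows from $-Z\in K+\mcY'$. This means only the positive direction of $Z$ matters, and that direction is handled by a normalization argument like the one above. If it again diverges, we obtain $Z\in C'$ as well. Then $C'+\R Z=C'$, which is closed.

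More cleanly, combine the two cases. Either $-Z\notin C'$, in which case the unbounded case is impossible and the bounded case yields closedness. Or $-Z\in C'$, in which case $C'+\R_+Z$ equals $C'+\R_+Z$ with $-Z\in C'$. Then apply the same normalization to the pair ($Z$ direction). If $Z\in C'$ the set equals $C'$, which is closed. If $Z\notin C'$, then for a sequence $c_n+\alpha_nZ$ we may rewrite it by subtracting multiples of the line. The main obstacle is exactly this degenerate case where $-Z\in C'$ but $Z\notin C'$.

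I expect a cleaner route to be a lineality-space reduction in the spirit of the Kabanov–Stricker argument. First replace $K$ by the vector space $\widetilde K:=K+\mathrm{span}\{Y_m\mid -Y_m\in K+\mcY\}$. Under NCA this space still satisfies the hypotheses: $\widetilde K$ is closed, being the sum of a closed space and a finite-dimensional space, and we check by the bounded/unbounded dichotomy that $\widetilde K-L^0_+$ is closed. Then run the induction only over the remaining generators, for which $-Y_m\notin \widetilde K+\mcY-L^0_+$. With that reduction, the unbounded case always yields a contradiction with NCA, and only the bounded case survives, giving closedness.
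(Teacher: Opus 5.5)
Your plain induction has a genuine gap, exactly at the degenerate case you isolate. When $-Z\in K+\mcY'$, a divergent coefficient carries no contradiction. For example, take $K=\{0\}$, $\mcY'=\mathrm{cone}(\{-Z\})$ and $c_n=-\alpha_nZ\in C'$: then $c_n+\alpha_nZ=0$ converges while $\alpha_n\to\infty$. Your patch is circular. The identity $C'+\R_+Z=C'+\R Z$ only renames the same set, and renormalizing $c_n+\alpha_nZ$ by $\alpha_n$ again gives $-Z\in C'$, nothing more. The claim that a second divergence yields $Z\in C'$ is unfounded: in the example above, $Z\notin C'$ for every $Z\neq0$ compatible with NCA. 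What is missing is a way to change the representation so that reversible directions stop producing divergent coefficients.

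The lineality reduction in your last paragraph is the right repair and does give a proof, but the steps you leave as \emph{we check} are where the work lies.
\begin{itemize}
\item Put $S=\{m\mid -Y_m\in K+\mcY\}$ and $\widetilde K=K+\mathrm{span}\{Y_m\mid m\in S\}$. Since $K+\mcY$ is a convex cone containing $\pm Y_m$ for $m\in S$, one gets $K+\mcY=\widetilde K+\mathrm{cone}(\{Y_m\mid m\notin S\})$.
\item Prove that $\widetilde K-L^0_+$ is closed by adding the directions $W=Y_m$, $m\in S$, one line at a time. Let $V$ be the space built so far and suppose $W\notin V$. If a coefficient diverges, normalizing and using closedness of $V-L^0_+$ gives $k\in V$ and $\epsilon\in\{\pm1\}$ with $k+\epsilon W\geq 0$. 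Since $k+\epsilon W\in K+\mcY$, NCA forces $k+\epsilon W=0$, so $W\in V$, a contradiction. A simultaneous normalization over a spanning family that is not independent modulo $K$ would give no contradiction, so this detail matters.
\item For $m\notin S$, NCA gives $-Y_m\notin K+\mcY-L^0_+$: if $-Y_m=k+y-h$, then $k+y+Y_m=h\geq0$ forces $h=0$.
\item Hence, in the induction over the remaining generators, the unbounded case gives $-Z\in C'\subseteq K+\mcY-L^0_+$, which is impossible, and the bounded case concludes.
\end{itemize}
Equivalently, you can rescue your original induction directly. In the degenerate case, $K+\mcY-L^0_+=(K+\R Z)+\mcY'-L^0_+$. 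Apply the inductive hypothesis to the pair $(K+\R Z,\mcY')$, after checking that $K+\R Z-L^0_+$ is closed by the same one-line argument.

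With these details, your route is correct and genuinely different from the paper's. The paper changes measure so that $\mcY\subseteq L^2(\tilde P)$. It then replaces each generator by its orthogonal projection onto $(K\cap L^2(\tilde P))^\perp$, which leaves $K+\mcY$ unchanged, and reduces along a subsequence to a linearly independent subfamily (a Carath\'eodory-type argument). A single normalization then handles all generators at once: a divergent sequence produces a nontrivial nonnegative combination $Y_\infty$ of independent vectors with $Y_\infty\in(K\cap L^2(\tilde P))\cap(K\cap L^2(\tilde P))^\perp=\{0\}$. Projection plus independence is what neutralizes the reversible directions that stall your plain induction; your $\widetilde K$ plays that role. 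Your argument stays in $L^0$, needs no change of measure or Hilbert structure, and makes explicit which generators are reversible, at the price of a two-stage induction. The paper's is a one-shot compactness argument.
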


\begin{proof}
We use the notation $\norm{x}$ for the usual $2-$norm on $\R^N$ and define the probability measure $\tilde P \in \mathcal P_e$ by $\rn{\tilde{P}}{P}=\frac{c}{1+\sum_{m=1}^R \norm{Y_m}^2}$. Then $\mcY\subseteq  L^2(\Omega, \mathbf{F}_T,\tilde{P})$.
Since $K$ is closed in probability, then $L^2(\Omega, \mathbf{F}_T,\tilde{P})\cap K$ is closed in $L^2(\Omega, \mathbf{F}_T,\tilde{P})$. Let $\pi,\pi_\perp$ be the projections, defined on the Hilbert space $L^2(\Omega, \mathbf{F}_T,\tilde{P})$, onto the closed subspaces $L^2(\Omega, \mathbf{F}_T,\tilde{P})\cap K, \,(L^2(\Omega, \mathbf{F}_T,\tilde{P})\cap K)^\perp$ respectively.
    We first show that 
    \begin{equation}
    \label{useperp}
        K+\mcY=K+\mathrm{cone}\Big(\{\pi_\perp(Y_1),\dots, \pi_\perp(Y_R)\}\Big).
    \end{equation}
    Indeed, to prove $(\subseteq)$, it is easy to see that any $k+Y\in K+\mcY$ can be written as
    $k+\sum_{m=1}^R\alpha_m Y_m$ for $\alpha_m\geq 0,\, m=1,\dots,R$. Then it is enough to write  $Y_m=\pi(Y_m)+\pi_\perp(Y_m)$ and to note that $\pi(Y_m)$ belongs to $K$, which is a vector space. As to $(\supseteq)$, observe that for $\alpha_m\geq0, m=1,\dots,R$
    $$k+\sum_{m=1}^R\alpha_m\pi_\perp(Y_m)=k-\sum_{m=1}^R\alpha_m\pi(Y_m)+\sum_{m=1}^R\alpha_m\big(\pi(Y_m)+\pi_\perp(Y_m)\big)=k-\sum_{m=1}^R\alpha_m\pi(Y_m)+\sum_{m=1}^R\alpha_mY_m$$ and $k-\sum_{m=1}^R\alpha_m\pi(Y_m)\in K$, while the last term in the right hand side belongs to $\mcY$. 

\medskip
    
\noindent \textbf{Proof of the closure property.} We now prove the thesis in two steps and to this aim take a sequence 
    $k_n+Y_n -l_n$ in $K+\mcY- L^0_+(\Omega, \mathbf{F}_T,P)$ converging in probability to some random vector $f$. Passing to a subsequence and relabeling we take the convergence to be a.s. ($P$ and $\tilde{P}$ by equivalence):
\begin{equation}
    \label{convtofBis}
    k_n+Y_n-l_n\longrightarrow_n f\quad \text{a.s.}.
\end{equation}
As in the proof of \cite{Aliprantis} Corollary 5.25, without loss of generality we may and will assume the existence of a linearly independent subset of $\{\pi_\perp(Y_1),\dots, \pi_\perp(Y_R)$ \}.
If \(\pi^\perp(Y_m)=0\) for every \(m=1,\ldots,R\), and hence
\(Y_m=\pi(Y_m)\in K\), we have \(\mathcal Y\subseteq K\), so that
\[
K+\mathcal Y-L^0_+(\Omega,\mathbf F_T,P)
=
K-L^0_+(\Omega,\mathbf F_T,P),
\]
which is closed by assumption. Thus there is nothing to prove, and we may
assume without loss of generality that at least one of the vectors $\{\pi_\perp(Y_1),\dots, \pi_\perp(Y_R)\}$ is not null. In particular, the linearly independent subset of $\{\pi_\perp(Y_1),\dots, \pi_\perp(Y_R)\}$ will consist of at least one non-zero element.
In the remainder of the proof we denote such a subset of linearly independent elements by
$$\{\hY_1, \dots, \hY_M\},$$ 
with $\hY_m \in (L^2(\Omega,\mathbf{F}_T,\tilde{P})\cap K)^\perp$, $m=1,\dots,M \leq R$, and a subsequence, which we shall denote by $Y_n$ again, such that 
\begin{equation}
    \label{YY}
   Y_n=\sum_{m=1}^M\alpha_m^n \hY_m,
\end{equation}
with all coefficients $\alpha_m^n \geq 0$. Set $\beta_n=\sum_{m=1}^M\alpha_m^n \geq 0$ and note that
$\sum_{m=1}^M \frac{\alpha_m^n}{\beta_n} =1 $ for each $n$. 

\medskip

\noindent\textbf{Step 1:} We first show $\limsup{\beta_n}<\infty$.  Suppose by contradiction that $\limsup{\beta_n}=\infty$. Then by passing to a subsequence we may always assume that $\lim \beta_n=\infty$ and so from \eqref{convtofBis} and \eqref{YY}
\begin{equation}
    \label{convtof}
    \frac{k_n}{\beta_n}-\frac{l_n}{\beta_n}+\frac{\sum_{m=1}^M\alpha_m^n \hY_m}{\beta_n}\longrightarrow_n 0\quad \text{a.s.}.
\end{equation}
Since $0\leq \frac{\alpha_m^n}{\beta_n} \leq 1 $ for all $m=1,\dots,M$, the vector $(\frac{\alpha_1^n}{\beta_n},\ldots,\frac{\alpha_M^n}{\beta_n})$ belongs to the compact set of $\R^M$ 
\[\left\{(a_1,\ldots,a_M)\in \R^M\mid a_m\geq 0 \;\forall m=1,\ldots,M \text{ and } \sum_{m=1}^M a_m=1\right\}. \]
We can therefore pick a further subsequence $h_n$ for which $\frac{\alpha_m^{h_n}}{\beta_{h_n}}$ is converging for all $m$ and their sum is still $1$. Thus by relabeling again we obtain 
$$\frac{\alpha_m^n}{\beta_n}\rightarrow_n \alpha_m^\infty\geq 0,\,\forall m=1,\dots,M, \,\,  \text { and } \sum_{m=1}^M\frac{\alpha_m^n}{\beta_n} \hY_m \rightarrow_n \sum_{m=1}^M \alpha_m^\infty \hY_m:=Y_\infty, \,\, a.s.$$
with $\sum_{m=1}^M \alpha_m^\infty=1$ and $Y_\infty\in (L^2(\Omega, \mathbf{F}_T,\tilde{P})\cap K)^\perp \cap \co\Big(\hY_1,\dots, \hY_M \Big)$.
Furthermore, since $\sum_{m=1}^M\frac{\alpha_m^n}{\beta_n} \hY_m$ is a.s. converging, we have from \eqref{convtof} that
 $\frac{1}{\beta_n}\Big(k_n-l_n\Big)$ is itself a.s. converging. Since such a sequence belongs to $K-L^0_+(\Omega, \mathbf{F}_T,P)$, closed in probability by hypotheses, we infer that its limit belongs to the latter cone and can be written as $k_\infty-l^+_\infty$ with obvious notation.
\\To sum up, for the limits we have
\begin{equation}
    \label{limits}
    k_\infty-l_\infty^++Y_\infty=0 \quad \Rightarrow \quad  k_\infty+Y_\infty=l_\infty^+  
\end{equation}
so that $k_\infty +Y_\infty \geq0$.
% \blue{and we can use $$(K+\mcY)\cap L^0_+(\Omega, \mathbf{F}_T,P)=\Big(K+\co(\{\hY_1,\dots, \hY_M\})\Big)\cap L^0_+(\Omega, \mathbf{F}_T,P)=\{0\}$$ (recall \eqref{useperp}) to conclude that $k_\infty+Y_\infty=0$. }
% \red{ credo che dovremmo evitare di far pensare, come potrebbe sembrare qui sopra, che $\mcY=\co(\{\hY_1,\dots, \hY_M\})$, perchè è falso (non tutti i finitely generated convex cone sono generati da vettori lin indipendenti).   quindi suggerisco di cancellare la parte sopra in blu e lasciare la seguente formulazione:}

From \eqref{useperp} we know that
$\Big(K+\co(\{\hY_1,\dots, \hY_M\})\Big) \subseteq \Big(K+\mathrm{cone}(\{\pi_\perp(Y_1),\dots, \pi_\perp(Y_R)\}\Big)=K+\mcY$. As $k_\infty \in K$ and $Y_\infty \in  \co\Big(\hY_1,\dots, \hY_M \Big)$, we can use $(K+\mcY)\cap L^0_+(\Omega, \mathbf{F}_T,P)=\{0\}$
to conclude that $k_\infty+Y_\infty=0$. 
This in turn implies that $Y_\infty\in K$ since the latter is a vector space. Thus 
$$Y_\infty\in \big(L^2(\Omega, \mathbf{F}_T,\tilde{P}) \cap K\big)\cap \big( L^2(\Omega, \mathbf{F}_T,\tilde{P}) \cap K\big)^\perp$$
i.e. $Y_\infty=0$. At the same time we have $$0=Y_\infty=\sum_{m=1}^M\alpha^\infty_m \hY_m\quad \text{ with } \alpha_m^{\infty} \geq 0 \, \, \forall m, \,\,\text{ and  }\sum_{m=1}^M\alpha_m^\infty=1$$
which contradicts the linear independence of $(\hY_1, \dots,\hY_M)$.

\medskip

\noindent \textbf{Step 2}: We have shown that $\beta:=\limsup{\beta_n}<+\infty$. We may pass to a subsequence and relabel in such a way that $0\leq \beta_n\rightarrow_n \beta $. By the triangular inequality
\[\norm{\sum_{m=1}^{M}\alpha^n_m\hY_m}_{L^2}\leq \sum_{m=1}^M\alpha^n_m\norm{\hY_m}_{L^2}\leq \max_{m=1,\ldots, M}\norm{\hY_m}_{L^2}\cdot \sum_{m=1}^M\alpha^n_m \rightarrow_n \beta \max_{m=1,\ldots, M}\norm{\hY_m}_{L^2}\]
so that the sequence $\sum_{m=1}^M\alpha^n_m\hY_m$ is norm bounded in the finite dimensional closed subspace $\mathrm{span}(\{\hY_m, m=1,\ldots,M\})$ and thus it admits a converging subsequence. Relabel again the indices ensuring that $\sum_{m=1}^M\alpha^n_m\hY_m\rightarrow_n Y$ where convergence takes place in $L^2(\Omega, \mathbf{F}_T,\tilde{P})$ and, up to a subsequence, also a.s.. Since $\sum_{m=1}^M\alpha^n_m\hY_m$ is a sequence in $\co(\hY_1,\dots, \hY_{M})$ and the latter is norm closed by \cite{Aliprantis} Corollary 5.25, we have $Y\in\co(\hY_1,\dots, \hY_{M})$. 
\\ From \eqref{convtofBis} and \eqref{YY} we see that $k_n-l_n\rightarrow_n f-Y $ a.s., and by the assumption of $K-L^0_+(\Omega, \mathbf{F}_T,P)$ being closed, we conclude that $f-Y=k-l$ for some $k\in K$ and $l\in L^0_+(\Omega, \mathbf{F}_T,P)$. 
Then we have 
$$f=k-l+Y\in K-L^0_+(\Omega, \mathbf{F}_T,P)+\co(\hY_1,\dots, \hY_{M})$$
and using \eqref{useperp} we get also $f\in K-L^0_+(\Omega, \mathbf{F}_T,P)+\mcY $ as desired.
\end{proof}

\begin{remark}
The condition $(K+\mcY)\cap L^0_+(\Omega,\mathbf{F}_T,P)=\{0\}$
is required in the proof of the closedness of the set
$K+\mcY-L^0_+(\Omega,\mathbf{F}_T,P)$.
However, the set $K+\mcY$ remains closed even in the absence of this assumption. Indeed, it suffices to repeat the previous argument after replacing $l_n$, $l_\infty$, and $l$ by $0$, and substituting $K-L^0_+(\Omega,\mathbf{F}_T,P)$ with $K$ throughout.
Under these modifications, the only potentially delicate step concerns \eqref{limits}. In the present case, however, that identity simply reduces to
$k_\infty+Y_\infty=0$,
so that no additional assumption is needed. All remaining steps then follow verbatim from the preceding proof after the above changes.
\end{remark}

\begin{remark} \label{remPP}
    Consider a finitely generated convex cone $$\mcY=\mathrm{cone}(\{ Y_1,\dots,Y_R \}),$$ containing $\mathbb R^N_0$, with $ Y_m \in L^{0}(\Omega, \mathbf{F}_T,P)$, $m=1,\dots,R$ and let $f \in L^{0}(\Omega, \mathbf{F}_T,P)$. Fix any   $\varphi \in L^{0}(\Omega, \mathbf{F}_T,P)$ such that
$|f|\leq \varphi$.
    By setting 
    \begin{equation}
        \label{changemeas}
      \frac{d \widehat P}{dP}:= \frac{c}{1+\sum_{j,t}| X^j_t |+\sum_{m,i}|Y^i_m|+\sum_{i}|\varphi^i|}  
    \end{equation}
     for some positive normalizing constant $c$, we see that $\widehat P \in \mathcal P_e$, $\frac {d\widehat P} {dP} \in L^{\infty }(\Omega, \mathcal{F},P)$, all processes $X^1,\dots,X^J$ are integrable under $ \widehat P$, each element $ Y_m \in L^{1 }(\Omega, \mathbf{F}_T, \widehat P)$,  $\varphi \in L^{1 }(\Omega, \mathbf{F}_T, \widehat P)$ and $f \in L^{1 }(\Omega, \mathbf{F}_T, \widehat P)$. In addition, $\mcY \subseteq L^{1 }(\Omega, \mathbf{F}_T, \widehat P) \subseteq L^{1 }(\Omega, \mathbf{F}_T, \mathbf{Q}) $ and $f \in L^{1 }(\Omega, \mathbf{F}_T, \mathbf{Q})$ for each  $\mathbf{Q}$ with $Q^i \in \mathcal P_{ac}$ and $ \rn{Q^i}{ \widehat P} \in L^{\infty}(\Omega, \mathcal{F}^i_T, \widehat P) $, $i=1,\dots,N$.
\end{remark}

We state the First FTAP in the collective setting when $\mcY$ is a cone.

\begin{theorem}\label{IFTAP:cone}
    Under  Assumption \ref{ass:cone} 
    \begin{equation*}
      \mathbf{NCA(\mathcal{Y})} 
      %\iff \mathcal M_e^{\infty}(\mcY)\not = \emptyset 
      \iff \Me \not = \emptyset.  
    \end{equation*}
    Furthermore, if $\mathbf{NCA}(\mcY)$ holds, for every vector $\mathbf{P}=(P^1,\dots, P^N)$ of probability measures with $P^i\sim P$, we have
    \begin{equation}\label{MMM}
\mathcal M_{e}^{\infty}(\mcY,\mathbf{P}):=\left\{ \mathbf Q \in {\sf X}_{i=1} ^{N}\mibe( P^i)   \mid \mcY \subseteq L^{1}(\Omega, \mathbf{F}_T,\mathbf Q), \sum_{i=1}^N E_{Q^i}[Y^i]\leq 0 \text {  }\forall Y \in \mcY  \right\} \neq \emptyset
\end{equation}
for $\mibe(P^i)$ given in \eqref{mifty}.
\end{theorem}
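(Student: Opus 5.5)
The easy direction is $(\Leftarrow)$. Take $\mathbf Q\in\Me$ and suppose $k\in{\sf X}_i K_i$ and $Y\in\mcY$ satisfy $k^i+Y^i\ge 0$ for all $i$. Since $Y\in L^1(\mathbf Q)$, we have $(k^i)^-\le (Y^i)^-\in L^1(Q^i)$. By the standard fact recalled after Theorem \ref{DMW}, this gives $E_{Q^i}[k^i]=0$. Hence
\[
0\le \sum_i E_{Q^i}[k^i+Y^i]=\sum_i E_{Q^i}[Y^i]\le 0 .
\]
So each $k^i+Y^i$ is a nonnegative variable with zero $Q^i$-expectation, and it vanishes because $Q^i\sim P$. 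This proves $\mathbf{NCA}(\mcY)$.

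For $(\Rightarrow)$, and for the stronger claim \eqref{MMM}, I fix $\mathbf P=(P^1,\dots,P^N)$ with $P^i\sim P$. Following Remark \ref{remPP}, I first replace $P$ by an equivalent $\widehat P$ with bounded density such that the $X^j$ and the generators $Y_m$ are $\widehat P$-integrable. I would choose $\widehat P$ so that the $P^i$-dependence is absorbed: the target is a bounded density $dQ^i/dP^i$, so I work on the product $L^1(\Omega,\mathbf F_T,\widehat{\mathbf P})$, with $\widehat P^i\sim P^i$ having bounded density with respect to $P^i$. The geometric input is Proposition \ref{C:closed}: under $\mathbf{NCA}(\mcY)$ the cone $K^{\mcY}$ is closed in $L^0$. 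Hence $C^{\mcY}=K^{\mcY}\cap L^1(\widehat{\mathbf P})$ is a closed convex cone in $L^1(\widehat{\mathbf P})$, since $L^1$ convergence implies convergence in probability. By \eqref{NCACC} it satisfies $C^{\mcY}\cap L^1_+=\{0\}$.

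Next comes the Kreps–Yan step. For each nonzero $g\in L^1_+$, Hahn–Banach separation of the compact set $\{g\}$ from the closed cone $C^{\mcY}$ yields $z_g=(z_g^1,\dots,z_g^N)\in L^\infty(\widehat{\mathbf P})$ such that $E[z_g\cdot c]\le 0$ for all $c\in C^{\mcY}$ and $E[z_g\cdot g]>0$. Because $-L^1_+\subseteq C^{\mcY}$, we get $z_g\ge 0$. The standard exhaustion argument then produces a countable convex combination $z$ with $z^i>0$ a.s. for every $i$. The exhaustion maximizes $\sum_i \widehat P(z^i>0)$ over the countable convex hull of the $z_g$, using $g=\mathbf 1_A e_i$ for all $A$ and $i$. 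The separation inequalities pass to the combination because each is linear and the $z_g$ can be normalized in $L^\infty$.

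It remains to normalize componentwise, and this is where $\R^N_0\subseteq\mcY$ is used. Since $\pm(e_i-e_j)\in\mcY\subseteq C^{\mcY}$, we get $E[z^i]=E[z^j]$ for all $i,j$. We may therefore define $dQ^i/d\widehat P^i=z^i/E[z^i]$ with a common normalizing constant. The property $\sum_i E_{Q^i}[Y^i]\le0$ for all $Y\in\mcY$ survives this normalization because the constant is common to all components; without $\R^N_0$ we could not normalize each $Q^i$ separately. The martingale property follows by testing against $\pm H\cdot X^{(i)}$ embedded in the $i$-th coordinate, using bounded $\mathcal F^i_{t-1}$-measurable $H$. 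For that test we need $\pm(\mathbf 1_A\Delta X^j_t)e_i\in C^{\mcY}$, which holds by integrability. Each $Q^i$ is equivalent to $P$, has bounded density with respect to $\widehat P^i$ and hence with respect to $P^i$, and has $\mcY\subseteq L^1(\mathbf Q)$. So $\mathbf Q\in\mathcal M_e^\infty(\mcY,\mathbf P)\subseteq\Me$.

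The substantive obstacle, the closedness of $K^{\mcY}$, is already handled by Proposition \ref{C:closed}. The remaining delicate points are bookkeeping: choosing the reference measures so that the densities are bounded with respect to each given $P^i$ while the generators $Y_m$ stay integrable, and carrying out the exhaustion argument coordinatewise in the product space.
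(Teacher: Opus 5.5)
Your proposal is correct and follows essentially the same route as the paper. Both proofs use the same argument for $\Me\neq\emptyset\Rightarrow\mathbf{NCA}(\mcY)$, the same componentwise change of measure to $\widehat{\mathbf P}$, closedness of $K^{\mcY}\cap L^1(\Omega,\mathbf F_T,\widehat{\mathbf P})$ in $L^1$ via Proposition \ref{C:closed}, a Kreps--Yan separation, and the common normalization made possible by $\R^N_0\subseteq\mcY$. The one difference is that you re-derive the Kreps--Yan step by Hahn--Banach and exhaustion, whereas the paper cites \cite{BDFFM25}, Theorem A.3. There is one harmless slip: from $k^i+Y^i\ge 0$ you get $(k^i)^-\le (Y^i)^+$, not $(Y^i)^-$, and $(Y^i)^+$ still lies in $L^1(\Omega,\mathcal F^i_T,Q^i)$.
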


\begin{proof}
We first prove $\Me \not = \emptyset \Rightarrow \mathbf{NCA(\mathcal{Y})}$.  Take $(Q^1, \dots , Q^ N) \in  \Me$ and let $(k+Y) \in  ({\sf X}_{i=1} ^{N}  K_i   + \mathcal Y ) \cap  L^{0 }_+(\Omega,  \mathbf{F}_T,P)$. Then $k^i+Y^i \geq 0$ and thus $k^i \geq -Y^i  \in L^{1 }(\Omega, \mathcal{F}^i_T , Q^i) $ for all $i$. Hence $(k^i)^- \in L^{1}(\Omega, \mathcal{F}^i_T,Q^i)$,
 $k^i \in L^{1}(\Omega, \mathcal{F}^i_T,Q^i) $ and $E_{Q^i}[k^i]=0$ (see \cite{FollmerSchied2}, Theorem 5.14). Therefore $E_{Q^i}[k^i+Y^i] = E_{Q^i}[k^i]+E_{Q^i}[Y^i]=E_{Q^i}[Y^i]$, $\sum_{i=1}^N E_{Q^i}[k^i+Y^i] = \sum_{i=1}^N E_{Q^i}[Y^i]  \leq 0$, as $(Q^1, \dots , Q^ N) \in  \Me$.
From $(k^i+Y^i) \geq 0 $ for all $i$, we also get that $\sum_{i=1}^N E_{Q^i}[k^i+Y^i]  \geq 0$, so that $\sum_{i=1}^N E_{Q^i}[k^i+Y^i] = 0$, which then implies $E_{Q^i}[k^i+Y^i] = 0$ for all $i$ and $k^i+Y^i=0$ for all $i$. Thus $\mathbf{NCA(\mathcal{Y})}$ holds true. 

\noindent We now prove $\mathbf{NCA(\mathcal{Y})} \Rightarrow \mathcal M_e(\mcY) \not = \emptyset  $.

    \textbf{Step 1} Let
    $$ \mcY=\mathrm{cone}(\{ Y_1,\dots,Y_R \}),$$ with $ Y_m \in L^{0 }(\Omega, \mathbf{F}_T,P)$, $m=1,\dots,R$, be the finitely generated convex cone containing $\mathbb R^N_0$ and let $\widehat{\mathbf P}=(\widehat{P}^1,\ldots, \widehat{P}^N)$ be the vector obtained with changes of measures given by  $$\rn{\widehat{P}^i}{P^i}=\frac{c^i}{1+\sum_{j\in (i),t}| X^j_t |+\sum_{m}|Y^i_m|}$$
     for some positive normalizing constant $c^i$. We see that $\widehat P^i \in \mathcal P_e$, $\frac {d\widehat P^i} {dP^i} \in L^{\infty }(\Omega, \mathcal{F}^i_T,P^i)$, all processes $(X^j),j \in (i)$ are integrable under $ \widehat P^i$, $ Y^i_m \in L^{1 }(\Omega, \mathcal{F}^i_T, \widehat P^i)$ for every $m$, so that $\mcY \subseteq L^{1 }(\Omega, \mathbf{F}_T, \widehat{\mathbf P})$.
    Since $\mcY$  is a finitely generated convex cone in a Hausdorff  topological vector space,  then it is closed in $L^{1 }(\Omega, \mathbf{F}_T, \widehat {\mathbf P})$, by  \cite{Aliprantis} Corollary 5.25. Moreover, from $\mcY \subseteq L^{1 }(\Omega, \mathbf{F}_T, \widehat {\mathbf P})$, we get 
    \begin{equation}\label{CY}
        K^\mathcal Y \cap L^{1 }(\Omega, \mathbf{F}_T,\widehat{\mathbf P})=  ({\sf X}_{i=1} ^{N} ( K_i - L^{0 }_+(\Omega, \mathcal{F}^i_T, \widehat P^i) )) \cap L^{1 }(\Omega, \mathbf{F}_T,\widehat{\mathbf P}) +\mcY.  
    \end{equation}
       As $\mathbf{NCA(\mathcal{Y})}$ holds,
        %holds under $P$ as well under $\widehat{P}$ and thus, 
        from Proposition \ref{C:closed} and \eqref{NCACC} we deduce that $ \widehat C^{\mcY}:=K^\mathcal Y \cap L^{1 }(\Omega, \mathbf{F}_T,\widehat {\mathbf P})$ is closed in $L^{1 }(\Omega, \mathbf{F}_T,\widehat {\mathbf P})$ and  that  $\widehat C^\mathcal Y \cap  L^{1 }_+(\Omega, \mathbf{F}_T , \widehat{{\mathbf P}})=\{0\}$. By the multidimensional version of Kreps-Yan Theorem (see \cite{BDFFM25} Theorem A.3), we deduce the existence of a vector $z=(z^1,\dots,z^N) \in L^{\infty }(\Omega, \mathbf{F}_T,\widehat {\mathbf P})$, $z^i>0$ for all $i$, such that 
    \begin{equation}\label{102}
    \sum_{i=1}^N  E_{\widehat{P}^i}[z^if^i] \leq 0 \text { for all } f \in \widehat{C}^\mathcal Y.    
    \end{equation}

    \textbf{Step 2}  
    Conditions \eqref{CY} and \eqref{102} imply $\sum_{i=1}^N  E_{\widehat{P}^i}[z^iY^i] \leq 0 \text { for all } Y \in \mathcal Y$ and from $\mathbb R^N_0 \subseteq \mcY$ we get $ E_{\widehat{P}^i}[z^i]= E_{\widehat{P}^j}[z^j]$ for all $i,j$. Set $\frac {dQ ^i} {d \widehat P^i}:=\frac {z^i} {  E_{\widehat{P}^i}[z^i]}>0$ for all $i$.  Thus $Q^i \in \mathcal P_e$ for all $i$, $\frac {dQ ^i} {d \widehat P^i} \in L^{\infty }(\Omega, \mathcal{F}^i,\widehat P^i)$, $\mcY \subseteq L^{1 }(\Omega, \mathbf{F}_T, \mathbf Q) $ for $\mathbf Q=(Q^1,\dots,Q^N)$  and $\sum_{i=1}^N  E_{Q^i}[Y^i] \leq 0 \text { for all } Y \in \mathcal Y$.
    \noindent Moreover, \eqref{CY} and \eqref{102} imply $\sum_{i=1}^N  
    E_{Q^i}[k^i] \leq 0 \text { for all } k \in ({\sf X}_{i=1} ^{N} ( K_i - L^{0 }_+(\Omega, \mathcal{F}^i_T, \widehat P^i) )) \cap L^{1 }(\Omega, \mathbf{F}_T,\widehat {\mathbf P} )$, so that $ E_{Q^i}[k^i] = 0 \text { for all } k^i \in K_i \cap L^{1 }(\Omega, \mathcal{F}^i_T, \widehat P^i), $ for all $i$. 
    As $ X^{j} \subseteq L^{1 }(\Omega, \mathcal{F}^i_T, \widehat P^i)$, for all $j \in (i)$ and all $i$, we recall from  \cite{DS2006}, Section 6.11, or \cite{FollmerSchied2} Theorem 5.14 that $\mib(\widehat P)$ can be written as  
\begin{equation}\label{MartingaleMeasures}
\mib(\widehat P^i)=\left\{ Q \in \mathcal{P}_{ac} \mid \frac {dQ} {d\widehat P^i} \in L^{\infty }(\Omega, \mathcal{F}^i_T ,\widehat P^i)  \text { and } E_Q[k]= 0  \text {  } \forall k \in \ki \cap L^{1 }(\Omega, \mathcal{F}_T^{i},\widehat P^i) \right\}.
\end{equation}
Thus we conclude that $Q^i \in \mibe(\widehat P^i)$, for all $i$. Also, since $\frac{dQ^i}{d{P}^i}=\frac{dQ^i}{d\widehat{P}^i} \frac{d\widehat P^i}{d{P}^i}$ and both terms in right-hand side belong to $ L^{\infty}(\Omega,\mathcal{F}^i_T, P)=L^{\infty}(\Omega,\mathcal{F}^i_T, \widehat P^i)=L^{\infty}(\Omega,\mathcal{F}^i_T, P^i)$, we deduce $\frac{dQ^i}{d{P}^i}\in L^{\infty}(\Omega,\mathcal{F}^i_T, P^i)$.  Moreover,  if $k \in \ki \cap L^{1 }(\Omega, \mathcal{F}_T^{i}, P^i)$, then  $k \in \ki \cap L^{1 }(\Omega, \mathcal{F}_T^{i},\widehat P^i)$, so that $Q^i \in \mibe(P^i)$. 
We conclude that $\mathcal M_{e}^{\infty}(\mcY,\mathbf{P})\neq \emptyset$. 

In particular also 
 $ \emptyset \neq \mathcal M_{e}^{\infty}(\mcY,\widehat{\mathbf{P}})\subseteq  \Me.$ which was the only thing left to prove.
\end{proof}

\subsection{Collective superreplication}

\begin{definition}
For a random variable $f \in L^{0}(\Omega, \mathcal{F}^i_T , P)$, we define the classical superreplication  price for agent $i$ as
\begin{equation*}
\pii(f):=\inf\{x \in \mathbb{R} \mid \exists k \in \ki \text{ s.t. } x+k \geq f \}.
\end{equation*}
\end{definition}
\noindent Under $\mathbf{NA}_i$, the following classical superhedging duality holds true 
%(see \cite{BDFFM25} page 12). 
\begin{equation}\label{superclassic}
\pii(f)=\sup_{Q \in M^i} E_{Q}[f] \ \text { for all } f \in L^{\infty}(\Omega, \mathcal{F}^i_T , P).
\end{equation}

In order to state the collective version of the pricing-hedging duality, we recall from \cite{BDFFM25} the concept of collective super/subreplication price.

\begin{definition}[Definition 4.1 and 4.16 \cite{BDFFM25}]\label{defsupsub}
    For $\mcY\subseteq L^{0}(\Omega, \mathbf{F}_T,P)$ and $(f^1,\dots,f^N)=f\in L^{0}(\Omega, \mathbf{F}_T, P)$ we define the collective superhedging and subhedging price as
\begin{align}
    \label{super:rho} \rho^{\mcY}_+(f) & =\inf\left\{\sum_{i=1}^N x^i \mid x\in\R^N \text{ and }  x+k+Y \geq f \;\text{for }  k\in {\sf X}_{i=1} ^{N}  K_i,\, Y\in \mcY \right\},
    \\ \notag \rho^{\mcY}_-(f) & =\sup\left\{ \sum_{i=1}^N x^i\mid x\in\R^N \text{ and }  x+k-Y \leq f \;\text{for }  k\in {\sf X}_{i=1} ^{N}  K_i,\, Y\in \mcY \right\}.
\end{align}

The classical superreplication price of the $N$ claims $f$ is defined as $\rho^{{N}}_+(f):=\rho^{{\mcY}}_+(f)$ for $\mcY=\{0\}$. 
\end{definition}

We defer the reader to \cite{BDFFM25} Section 4 for the interpretation of $\rho^{\mcY}_+(f)$ and the difference with the classical superhedging price $\rho^{{N}}_+(f)$. We also note that for the convex cone $\mcY$ it holds
\begin{equation}
\label{rem:rhopm}
    \rho^{\mcY}_-(f)=-\rho^{\mcY}_+(-f)\text{ for every }f\in L^{0}(\Omega, \mathbf{F}_T,P).
\end{equation}

\begin{remark}\label{remMinus}
The minus sign $-Y$ appearing in the definition of the subhedging price is essential when the set of admissible exchanges $\mcY$ is a \emph{convex cone} rather than a vector space. Indeed, this sign is precisely what ensures that \eqref{rem:rhopm} remains valid, in analogy with the classical single-agent framework.

To clarify its financial interpretation (in the individual agent case), consider first the simplest situation in which no hedging opportunities are available. The selling price $p_+(f)$ is the smallest amount $x\in\mathbb R$ that an agent is willing to accept in order to sell the claim $f$, namely
\[
p_+(f):=\inf\{x\in\mathbb R\mid x-f\ge 0\}
=\operatorname*{ess\,sup}(f).
\]
Similarly, the buyer price $p_-(f)$ is the largest amount $x\in\mathbb R$ that the agent is willing to pay in order to purchase $f$, that is
\[
p_-(f):=\sup\{x\in\mathbb R\mid f-x\ge 0\}
=\operatorname*{ess\,inf}(f).
\]
Naturally, the no-loss requirement encoded in the almost sure inequalities above may be replaced by alternative acceptability criteria, leading for instance to indifference selling and buyer prices. The economic interpretation, however, remains the same.

Suppose now that trading is allowed in both a vector space  $\mathbb K$, representing a liquid  \textquotedblleft market", and through other admissible investments belonging to a cone $\mathbb Y$. Then the selling price is the smallest amount $x\in\mathbb R$ that the agent is willing to receive in order to sell $f$, after taking suitable positions in $\mathbb K$ and $\mathbb Y$:
\[
p_+(f)
:=
\inf\left\{
x\in\mathbb R \;\middle|\;
x-f+k+Y\ge 0
\text{ for some } k\in\mathbb K,\; Y\in\mathbb Y
\right\}.
\]
Likewise, the buyer price $p_-(f)$ is the largest amount $x\in\mathbb R$ that the agent is willing to pay in order to buy $f$, while still being able to invest in $\mathbb K$ and $\mathbb Y$:
\begin{align*}
p_-(f)
&:=
\sup\left\{
x\in\mathbb R \;\middle|\;
f-x+k+Y\ge 0
\text{ for some } k\in\mathbb K,\; Y\in\mathbb Y
\right\} \\
&=
\sup\left\{
x\in\mathbb R \;\middle|\;
f \geq x+k-Y
\text{ for some } k\in\mathbb K,\; Y\in\mathbb Y
\right\}.
\end{align*}
In the last inequality, we replaced, w.l.o.g., $-k$ with $k$ because $\mathbb K$ is a vector space. By contrast, the minus sign in front of $Y$ cannot be removed, since $\mathbb Y$ is only assumed to be a cone.
This shows that the superhedging and subhedging are genuinely distinct notions when $\mcY$ is a cone and explains the two expressions in Definition \ref{defsupsub}. 
\end{remark}

To handle the superhedging duality of potentially unbounded or non-integrable contingent claim vectors $f \in L^0(\Omega, \mathbf{F}, P)$ we introduce the following convex set of vectors of martingale measures. For any arbitrary $(\varphi^1,\dots,\varphi^N)=\varphi\in L^{0}(\Omega, \mathbf{F}_T,P)$ let
 \begin{equation}\label{Mfi}\mathcal{M}^{\varphi}_e(\mcY):=\{\mathbf Q\in \mathcal{M}_e(\mcY)\mid E_{Q_i}[|\varphi^i|]<\infty\;\forall\,i=1,\ldots,N\} \subseteq \mathcal{M}_e(\mcY) .
 \end{equation}
 Recalling Remark \ref{remPP}, we observe that for $\mcY=\mathrm{cone}(\{Y_1,\dots, Y_R\})$ and  $\varphi^i=\max \{|Y^i_1|,\ldots, |Y^i_R|\}$ we have $\mathcal{M}^{\varphi}_e(\mcY)=\Me $. \\

We shall use the set \(\mathcal M_e^{\varphi}(\mathcal Y)\) in the following way. Let
$f\in L^0(\Omega,\mathbf F_T,P)$
and suppose that $f=x+k+Y$, for some
$x\in\mathbb R^N,\,k\in {\sf X}_{i=1}^{N}K_i,
\,Y\in\mathcal Y$.
If \(\varphi\in L^0(\Omega,\mathbf F_T,P)\) is such that
$\max\big\{|f^i|,\ |Y^i_1|,\ldots,|Y^i_R|\big\}\le \varphi^i\, \text{for each } i,$
then for every $\mathbf Q\in\mathcal M_e^{\varphi}(\mathcal Y)$ it follows that
$f\in L^1(\Omega,\mathbf F_T,\mathbf Q)$
and $\mathcal Y\subseteq L^1(\Omega,\mathbf F_T,\mathbf Q).$
Hence
$k \in L^1(\Omega,\mathbf F_T,\mathbf Q),$
which implies $E_{Q^i}[k^i]=0
\, \text{for all } i=1,\ldots,N.$

The following result is akin to Theorem 4.12 in \cite{BDFFM25}, where the pricing-hedging duality was proved for a general convex cone $\mcY$ under $\mathbf{NCA(\mathcal{Y})}$, but with the additional requirement of $K^{\mcY}$ being closed in probability. Thanks to Proposition \ref{C:closed}, such a property follows directly from $\mathbf{NCA(\mathcal{Y})}$ leading to a genuine extension of \cite{DFM25} Theorem 2.7  to the case of a convex cone $\mcY$. 

\begin{theorem}[Pricing-Hedging duality]\label{duality} Let Assumption \ref{ass:cone} hold true for $\mcY $ be given in \eqref{coY} and take $f\in L^{0}(\Omega, \mathbf{F}_T,P)$.  Under  $\mathbf{NCA(\mathcal{Y})}$,  $\rho^{\mcY}_+(0)=0$ and, for any $\varphi\in L^{0}(\Omega, \mathbf{F}_T,P)$ such that  
$\max \{|f^i|,|Y^i_1|,\ldots, |Y^i_R|\}\leq \varphi^i$ for every $i=1,\dots,N$,  we have $$\mathcal{M}^{\varphi}_e(\mcY) \neq \emptyset$$ and
\begin{equation}\label{pricing:hedging}
  \rho^{\mcY}_+(f)=\sup \left \{\sum_{i=1}^N E_{Q^i}[f^i]\mid \mathbf{Q}\in \mathcal{M}^{\varphi}_e(\mcY) \right \}>-\infty. 
\end{equation}
  Moreover, when finite, $ \rho^{\mcY}_+(f) $ defined in \eqref{super:rho} is attained by some $x \in \R^N$.  
\end{theorem}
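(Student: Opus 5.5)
The plan follows the classical route: change measure, invoke closedness, separate with Hahn--Banach, and normalize the separating functional.

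\emph{Reduction and weak duality.} First pass to the measure $\widehat P$ of Remark \ref{remPP}, built with the given $\varphi$. Under $\widehat P$ all $X^j$, all $Y_m$, $\varphi$ and $f$ are integrable, and $\mathbf{NCA}(\mcY)$ is unchanged. Theorem \ref{IFTAP:cone}, applied with $\mathbf P=(\widehat P,\dots,\widehat P)$, then gives some $\mathbf Q\in\mathcal M_e^\infty(\mcY,\widehat{\mathbf P})$. Since its densities are bounded relative to $\widehat P$, we have $E_{Q^i}|\varphi^i|<\infty$, so $\mathcal M^\varphi_e(\mcY)\neq\emptyset$.

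Weak duality is the following computation. Suppose $x+k+Y\ge f$ and $\mathbf Q\in\mathcal M^\varphi_e(\mcY)$. Then $(k^i)^-$ is $Q^i$-integrable because $f^i,Y^i\in L^1(Q^i)$, hence $E_{Q^i}[k^i]=0$. Therefore $\sum_i E_{Q^i}[f^i]\le\sum_i x^i+\sum_iE_{Q^i}[Y^i]\le \sum_i x^i$. This gives ``$\ge$'' in \eqref{pricing:hedging}, and also $\rho^{\mcY}_+(f)>-\infty$ since the set of measures is nonempty. Taking $f=0$ gives $\rho^{\mcY}_+(0)\ge0$, and the trivial choice $x=0$ gives $\rho^{\mcY}_+(0)\le 0$.

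\emph{Attainment.} Let $x_n$ be such that $f-x_n\in K^{\mcY}$ and $\sum_i x^i_n\downarrow\rho^{\mcY}_+(f)<\infty$. Since $\R^N_0\subseteq\mcY$, we may replace $x_n$ by $(\sum_i x^i_n)e_1$, moving the rest into $\mcY$. Then $x_n\to\rho^{\mcY}_+(f)e_1$ in $\R^N$, and Proposition \ref{C:closed} (closedness of $K^{\mcY}$ in $L^0$) gives $f-\rho^{\mcY}_+(f)e_1\in K^{\mcY}$. So the infimum is attained.

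\emph{Strong duality.} Suppose $\pi:=\sup_{\mathbf Q}\sum_iE_{Q^i}[f^i]<\rho^{\mcY}_+(f)$; if $\pi=+\infty$ there is nothing to prove. Pick $a$ with $\pi<a<\rho^{\mcY}_+(f)$, and set $\widehat C:=K^{\mcY}\cap L^1(\Omega,\mathbf F_T,\widehat P)$. By Proposition \ref{C:closed} this is a closed convex cone in $L^1(\widehat P)$. Because $a<\rho^{\mcY}_+(f)$, we have $f-ae_1\notin\widehat C$ (note $f\in L^1(\widehat P)$). Hahn--Banach then yields $z\in L^\infty(\widehat P)$ with $\sum_iE_{\widehat P}[z^ig^i]\le 0<\sum_iE_{\widehat P}[z^i(f^i-a\delta_{i1})]$ for all $g\in\widehat C$. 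Since $-L^1_+\subseteq\widehat C$ we get $z\ge0$, and since $\R^N_0\subseteq\mcY$ we get $E[z^i]=E[z^j]=:c$.

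\emph{Main obstacle.} The separating vector $z$ need not be strictly positive, and $c$ could be $0$. I would first try to rule out $c=0$. If $c=0$ then $z=0$, which contradicts the strict inequality, so $c>0$. Normalizing $z$ by $c$ gives $\mathbf Q^z$ with $Q^{z,i}\ll\widehat P$. This vector is martingale on each $K_i\cap L^1$ by the argument in Step 2 of Theorem \ref{IFTAP:cone}, satisfies $\sum_iE[Y^i]\le 0$ on $\mcY$, and has $\sum_iE_{Q^{z,i}}[f^i]>a$. To repair equivalence, take $\mathbf Q^0\in\mathcal M_e^\infty(\mcY,\widehat{\mathbf P})$ and set $\mathbf Q^\lambda=\lambda\mathbf Q^0+(1-\lambda)\mathbf Q^z$ for $\lambda\in(0,1)$, mixing componentwise. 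Each $Q^{\lambda,i}$ is equivalent to $P$ and has bounded density with respect to $\widehat P$, so $\varphi$, $\mcY$ and $X$ are integrable under it. The martingale property and the cone inequality are preserved by convexity, so $\mathbf Q^\lambda\in\mathcal M^\varphi_e(\mcY)$. Finally $\sum_iE_{Q^{\lambda,i}}[f^i]\to\sum_iE_{Q^{z,i}}[f^i]>a>\pi$ as $\lambda\to0$, which contradicts the definition of $\pi$.
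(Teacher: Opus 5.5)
Your proof is correct and follows essentially the same route as the paper. It uses closedness of $K^{\mcY}$ from Proposition \ref{C:closed}, separation in $L^1(\Omega,\mathbf{F}_T,\widehat P)$ giving absolutely continuous dual elements, a convex combination with an element of $\mathcal M^\infty_e(\mcY,\widehat{\mathbf P})$ to upgrade them to equivalent measures, weak duality, and attainment via closedness after shifting by an element of $\R^N_0$; the only difference is that you carry out the Hahn--Banach step (including ruling out $c=0$) and the identity $\rho^{\mcY}_+(0)=0$ yourself, whereas the paper imports these from Theorem 4.12 and Lemma 4.4 of \cite{BDFFM25}.
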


\begin{proof}
The condition $\rho^{\mcY}_+(0)=0$ was established in \cite{BDFFM25} Lemma 4.4. 
Take $\widehat{P}$ as in Remark \ref{remPP}, so that $\mcY \subseteq L^{1}(\Omega, \mathbf{F}_T,\widehat{P})$.  The condition $\mathbf{NCA(\mathcal{Y})}$ holds equivalently under $P$ or $\widehat{P}$ and by Proposition \ref{C:closed} the cone $K^{\mcY}$ is closed in  probability. All the assumptions of Theorem 4.12 in \cite{BDFFM25} are satisfied and thus, from \cite{BDFFM25} eq. (4.9),
\begin{equation}\label{rho23}
  \rho^{\mcY}_+(f)=\sup \left \{\sum_{i=1}^N E_{Q^i}[f^i]\mid \mathbf{Q}\in \mathcal{M}^{\infty}(\mcY,\widehat{P}) \right \}>-\infty, 
\end{equation}
where $\mathcal{M}^{\infty}(\mcY,\widehat{P})$ is  
the collection of $(Q^1,\ldots,Q^N)\in {\sf X}_{i=1} ^{N}M^{i,\infty}(\widehat{P})$ such that 
$\sum_{i=1}^N E_{Q^i}[Y^i]\leq 0$   for all $Y \in \mcY $. By $\mathbf{NCA(\mathcal{Y})}$ and \eqref{MMM}, 
let $\mathbf Q_e\in\mathcal{M}^{\infty}_e(\mcY,\widehat{P}) \neq \emptyset$. By Remark \ref{remPP}, $\varphi\in L^1(\Omega, \mathbf{F}_T,\mathbf Q)$ for all $\mathbf Q=(Q^1,\ldots,Q^N)\in {\sf X}_{i=1} ^{N}M^{i,\infty}(\widehat{P})$. Recalling that $\mathcal{M}^{\varphi}_e(\mcY)$ is the collection of $\mathbf Q=(Q^1,\ldots,Q^N)\in {\sf X}_{i=1} ^{N}M^i_e$ such that $\varphi\in L^1(\Omega, \mathbf{F}_T,\mathbf Q)$ and   
$\sum_{i=1}^N E_{Q^i}[Y^i]\leq 0$ for all $Y \in \mcY$, we obtain that $\mathbf Q_e \in \mathcal{M}^{\varphi}_e(\mcY)$.
Moreover for any $\mathbf Q\in \mathcal{M}^{\infty}(\mcY,\widehat{P})$ and $\lambda\in (0,1)$ we have $\lambda \mathbf Q_e+(1-\lambda) \mathbf Q\in \mathcal{M}^{\infty}_e(\mcY,\widehat{P})$. Set $\frac{dQ_{\lambda}^i}{d\widehat{P}}:=(1-\lambda) \frac{dQ^i}{d\widehat{P}}+\lambda\frac{dQ^i_e}{d\widehat{P}}$ for any $i=1,\ldots,N$, then   $\frac{dQ_{\lambda}^i}{d\widehat{P}}\to \frac{dQ^i}{d\widehat{P}}$ as $\lambda\to 0$ with respect to the norm $\|\cdot\|_{\infty}$ on $L^{\infty}(\Omega,\mcF_T^i,\widehat{P})$. This automatically implies
\[\left(\frac{dQ_{\lambda}^1}{d\widehat{P}},\ldots, \frac{dQ_{\lambda}^N}{d\widehat{P}}\right)\to \left(\frac{dQ^1}{d\widehat{P}},\ldots, \frac{dQ^N}{d\widehat{P}}\right)\]
with respect to any equivalent norm on the product space $L^{\infty}(\Omega, \mathbf{F}_T,\widehat{P})$, for example
$\|Z\|_{\infty,N}= \sup_{i=1,\ldots,N}\|Z^i\|_{\infty}$.
As the map 
\[\left(\frac{dQ^1}{d\widehat{P}},\ldots, \frac{dQ^N}{d\widehat{P}}\right)\mapsto \sum_{i=1}^N E_{\widehat{P}}\left[\frac{dQ^i}{d\widehat{P}}f^i\right]\]
is $\|\cdot\|_{\infty,N}$ continuous for any $f\in L^1(\Omega,\mathbf{F}_T,\widehat{P})$, then from \eqref{rho23}
\begin{align*}
\rho^{\mcY}_+(f)&=\sup \left \{\sum_{i=1}^N E_{\widehat{P}}\left[\frac{dQ^i}{d\widehat{P}}f^i\right]\mid \mathbf Q\in \mathcal{M}^{\infty}(\mcY,\widehat{P}) \right \}\\
 & =  \sup \left \{\sum_{i=1}^N E_{Q^i}[f^i]\mid \mathbf Q\in \mathcal{M}^{\infty}_e(\mcY,\widehat{P}) \right \}
  \\& =  \sup \left \{\sum_{i=1}^N E_{Q^i}[f^i]\mid \mathbf Q\in \mathcal{M}^{\infty}_e(\mcY,\widehat{P}) \text{ and } \varphi\in L^1(\Omega, \mathbf{F}_T,\mathbf Q)\right \}\\
    % \end{align*}
    %   \begin{align*}
   & \leq  \sup \left \{\sum_{i=1}^N E_{Q^i}[f^i]\mid \mathbf Q\in \mathcal{M}^{\varphi}_e(\mcY) \right \}
  \\ & \leq  \inf\left\{\sum_{i=1}^N x^i \mid x\in\R^N \text{ and }  x+k+Y \geq f \;\text{for }  k\in {\sf X}_{i=1} ^{N}  K_i,\, Y\in \mcY \right\}
  \\ & = \rho^{\mcY}_+(f)
  \end{align*}

Now we proceed proving the attainment when $\rho^{\mcY}_+(f)$ is finite. By definition of $\rho^{\mcY}_+(f)$, there exists a sequence $(x_n)_n$ in  $\mathbb R^N$ such that $\widehat x_n=:\sum_{i=1}^N x^i_n \downarrow \mcR (f)$ and $f-x_n \in K^\mathcal Y$ (as defined in \eqref{KK}). Letting $\mathbf{1}=(1,\dots,1)$ we see that $(\frac{\widehat x_n}{N}\mathbf 1-x_n) \in \R^N_0 \subseteq K^\mathcal Y$. From  $f-x_n=f-\frac{\widehat x_n}{N}\mathbf 1 +\frac{\widehat x_n}{N} \mathbf 1-x_n $, we see that also $f-\frac{\widehat x_n}{N}\mathbf 1 \in  K^\mathcal Y$. 
From the closure of $K^\mathcal Y$ and $f-\frac{\widehat x_n}{N}\mathbf 1 \rightarrow f-\frac{\mcR (f)}{N}\mathbf 1$ we obtain $f-\frac{\mcR (f)}{N}\mathbf 1 \in K^\mathcal Y$. 
Thus $x:=\frac{\mcR (f)}{N}\mathbf 1 \in \R^N$ attains the infimum.

\end{proof}

\begin{remark}
    Applying the formula \eqref{pricing:hedging} when  $\mcY=\R^N_0$, we obtain for $f\in L^{0}(\Omega, \mathbf{F}_T,P)$
    \begin{equation}\label{rhoNN}
        \rho^{{\mcY}}_+(f)=\rho^{{N}}_+(f)=\sum_{i=1}^N \sup \left \{ E_{Q^i}[f^i]\mid Q^i\in M^i_e \,\text{ and }\, E_{Q_i}[|\varphi^i|]<\infty \right \},
    \end{equation}as in this case $\Me=M^1_e \times \dots \times M^N_e$. 
\end{remark}

\begin{remark}\label{monotonia}
Let Assumption \ref{ass:cone} and $\mathbf{NCA(\mathcal{Y})}$ hold true and let $f\in L^{0}(\Omega, \mathbf{F}_T,P)$. By simple properties of convex functional vanishing at $0$, it is easy to show, see for instance \cite{FrittelliRosazza} Remark 8 l), that  $ -\rho^{{\mcY}}_+(-f) \leq \rho^{{\mcY}}_+(f)$. Therefore,  $\rho^{ \mcY}_-(f) \leq\rho^{ \mcY}_+(f)$, for every $f\in L^{0}(\Omega, \mathbf{F}_T,P)$.
\end{remark}

%%%%%%%%%%%%%%%%%%%%%%%%%%%%%%%%%%%
\section{Collectively replicable claims and market completeness}\label{completeness}

In this section we investigate the notion of replicability in the collective market introduced above, with particular emphasis on the effects generated by the fact that the set of admissible exchanges $\mathcal Y$ is now a convex cone rather than a vector space. This distinction is not merely technical: when $\mathcal Y$ is a vector space as in \cite{DFM25}, every admissible exchange can be reversed, and the usual symmetry between superhedging and subhedging is preserved. By contrast, when $\mathcal Y$ is only a cone, admissible exchanges are directional: an exchange $Y\in\mathcal Y$ need not be accompanied by its opposite $-Y$. As a consequence, collective replicability alone no longer guarantees equality between superhedging and subhedging prices. In Proposition \ref{diversi} we 
clarify the precise relation between collective replicability and the coincidence of collective superhedging and subhedging prices. In particular, we show that price equality forces $f$ to admit both a positive and a negative exchange representation, while the converse requires the additional viability condition $\mathbf{NCA}(-\mathcal Y)$.

 Secondly we recall that the $\mathbf{NCA}(\mathcal{Y})$-prices of vectors $f=(f^1,\dots,f^N) \in L^{0 }(\Omega, \mathbf{F}_T,P)$ of contingent claims can be characterized  as expectations under equivalent collective martingale measures (see Theorem \ref{BaseFTAPII}). In Proposition \ref{propRO}, we establish that, under $\mathbf{NCA}(\mathcal{Y})$ and when $\mcY$ is a vector space, the set $\Pi(f) \subseteq \R^N$ of such $\mathbf{NCA}(\mathcal{Y})$-prices is always a relatively open nonempty convex set, regardless of whether $f$ is replicable or not. This is in agreement with the classical theory for $N=1$, as in this case relatively open nonempty convex sets are either open (and this occurs for non replicable claims) or are a singleton (and this occurs for replicable claims).
As discussed in Remark \ref{remrelopen},
 this property may fail when $\mcY$ is merely a convex cone.

\medskip

To recover a genuine equivalence between replicability and price uniqueness, we introduce the restricted cone $\widehat{\mathcal Y}$, consisting of exchanges whose individual components are fairly valued under every collective equivalent martingale measure, rather than merely requiring the aggregate exchange to be fairly assessed.
This restriction eliminates the ambiguity caused by exchange components with nonzero individual values. We therefore call a claim strongly collectively replicable if it is replicable using exchanges in $\widehat{\mathcal Y}$. The main result of this section shows that strong collective replicability is exactly the appropriate multi-agent analogue of classical replicability. Finally, we use these results to formulate a collective version of the Second Fundamental Theorem of Asset Pricing for convex cones of exchanges. The theorem characterizes collective completeness, strong collective completeness, uniqueness of the collective equivalent martingale measure, and equality between superhedging and subhedging functionals,  under the appropriate no-arbitrage assumptions for both $\mathcal Y$ and $-\mathcal Y$.

\subsection{Collectively replicable claims}

\begin{definition}[\cite{DFM25} Definition 2.9]\label{defrep}
 Consider the vector of contingent claims $ f=( f^1,\dots,  f^N) \in L^{0}(\Omega, \mathbf{F}_T, P)$, $N \geq 1$ and let $\mcY \subseteq L^{0 }(\Omega, \mathbf{F}_T , P)$ be a convex cone. We say that:
 \begin{enumerate}
     \item $f$ is $\mcY$-collectively replicable if $ f\in \R^N+ {\sf X}_{i=1} ^{N} K_i+\mathcal Y$   \item $ f$  is classically replicable if it is $\mcY$-collectively replicable for $\mcY=\{0\}$, namely if $ f\in \R^N+ {\sf X}_{i=1} ^{N} K_i  $.
 \end{enumerate}
  
\end{definition}

\begin{proposition}[\cite{DFM25} Proposition 3.4]
\label{uguali}
 Assume $\mcY$ is a finite dimensional vector space containing $\mathbb R^N_0$ and for which $\mathbf{NCA}(\mcY)$ holds. Then $f \in L^{0 }(\Omega, \mathbf{F}_T , P)$ is $\mcY$-collectively replicable if and only if $\rho^{ \mcY}_-(f)=\rho^{ \mcY}_+(f)$.
\end{proposition}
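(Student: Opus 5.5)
The plan is to prove the two implications separately, using the pricing--hedging duality (Theorem \ref{duality}), the attainment of $\rho^{\mcY}_+$, and the symmetry $\mcY=-\mcY$ that holds because $\mcY$ is a vector space. Since $\mcY$ is finite dimensional, it is in particular a finitely generated convex cone containing $\R^N_0$. Hence Assumption \ref{ass:cone} holds, and Proposition \ref{C:closed} and Theorem \ref{duality} apply.

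($\Rightarrow$) Suppose $f=x+k+Y$ with $x\in\R^N$, $k\in{\sf X}_{i=1}^N K_i$ and $Y\in\mcY$. First, $f\ge x+k+Y$ gives $\rho^{\mcY}_+(f)\le\sum_i x^i$. Second, since $-Y\in\mcY$, we can write $f=x+k-(-Y)$, so $f\le x+k-(-Y)$ gives $\rho^{\mcY}_-(f)\ge \sum_i x^i$. Combining these with Remark \ref{monotonia} ($\rho^{\mcY}_-\le\rho^{\mcY}_+$) yields equality. An alternative route is via duality: choose $\varphi$ dominating $|f^i|$ and $|Y^i_m|$ (taking generators $\pm$ a basis of $\mcY$). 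Every $\mathbf Q\in\mathcal M^{\varphi}_e(\mcY)$ then satisfies $\sum_i E_{Q^i}[Y^i]=0$ by the vector-space property, and $E_{Q^i}[k^i]=0$, so $\sum_i E_{Q^i}[f^i]=\sum_i x^i$ is constant.

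($\Leftarrow$) Suppose $\rho^{\mcY}_-(f)=\rho^{\mcY}_+(f)=:p$. By Theorem \ref{duality}, $p>-\infty$. Also $\rho^{\mcY}_-(f)=-\rho^{\mcY}_+(-f)>-\infty$ applied to $-f$ gives $p<\infty$, so $p$ is finite. By the attainment part of Theorem \ref{duality}, with $x:=\frac{p}{N}\mathbf 1$ we get $f-x\in K^{\mcY}$, i.e.\ $f-x=k_1+Y_1-l_1$ with $l_1\ge0$. Applying attainment to $-f$ (finite with value $-p$) gives $-f+x=k_2+Y_2-l_2$ with $l_2\ge0$. Adding the two relations gives
\[
0=(k_1+k_2)+(Y_1+Y_2)-(l_1+l_2),
\]
so $(k_1+k_2)+(Y_1+Y_2)=l_1+l_2\in (K+\mcY)\cap L^0_+$. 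By $\mathbf{NCA}(\mcY)$ this forces $l_1+l_2=0$, hence $l_1=l_2=0$. Therefore $f=x+k_1+Y_1$ is $\mcY$-collectively replicable.

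The main obstacle is the attainment/closedness step, namely ensuring that the infimum defining $\rho^{\mcY}_+$ is attained both for $f$ and for $-f$ with finite value. This is exactly what Proposition \ref{C:closed} and the last part of Theorem \ref{duality} provide, so I expect it to go through. The rest is bookkeeping: finiteness of $p$ follows from the duality lower bound applied to $f$ and $-f$, and the final cancellation uses NCA.
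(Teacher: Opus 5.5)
Your proof is correct and follows essentially the same route as the argument the paper gives for the cone version (Proposition \ref{diversi}), specialized to $\mcY=-\mcY$. Your ($\Leftarrow$) direction is Item 1 there: attain the super- and subhedging prices for $f$ and $-f$, then use $\mathbf{NCA}(\mcY)$ to kill the slack $l_1+l_2$. Your ($\Rightarrow$) direction is Item 2 with the choice $\tilde x=x$, $\tilde k=k$, $\tilde Y=-Y$, which makes the $\mathbf{NCA}(-\mcY)$ step trivial. The only blemish is the finiteness sentence: it should read $\rho^{\mcY}_+(-f)>-\infty$, hence $p=\rho^{\mcY}_-(f)=-\rho^{\mcY}_+(-f)<+\infty$.
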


For vector spaces \(\mathcal Y\), there is no need to distinguish between \(\mathbf{NCA}(\mathcal Y)\) and \(\mathbf{NCA}(-\mathcal Y)\), as the two conditions are equivalent. Similarly, a claim \(f\) is collectively \(\mathcal Y\)-replicable if and only if \((-f)\) is. As we shall see, this symmetry no longer holds when \(\mathcal Y\) is only assumed to be a convex cone, and these distinctions become relevant.

\begin{remark}
Notice that both $ f$ and $(-f)$ are $\mcY$-collectively replicable if and only if $$f\in (\R^N + {\sf X}_{i=1} ^{N} K_i +\mcY) \cap (\R^N + {\sf X}_{i=1} ^{N} K_i -\mcY).$$  
Indeed, if the condition holds then clearly $f$ is $\mcY$ collectively replicable, but also $f=x+k-Y$ with obvious notation, so that $-f=(-x)+(-k)+Y\in \R^N + {\sf X}_{i=1} ^{N} K_i +\mcY$, i.e. $-f$ is $\mcY$-collectively replicable. Conversely, if both $\pm f$ are $\mcY$-collectively replicable then  $f\in \R^N + {\sf X}_{i=1} ^{N} K_i +\mcY$ by definition, and $-f=x+k+Y\in \R^N + {\sf X}_{i=1} ^{N} K_i +\mcY$. Hence $f=(-x)+(-k)-Y\in \R^N + {\sf X}_{i=1} ^{N} K_i -\mcY$, and $f$ belongs to the desired intersection
\end{remark}

Suppose that $\mathcal{Y}$ is a convex cone, rather than a vector space, satisfying Assumption \ref{ass:cone}.
Then the conclusion in Proposition \ref{uguali} fails. Indeed, by Item 1 in the following proposition, if $f$ is $\mcY$-collectively replicable  but $(-f)$ is not, then a bid-ask spread necessarily arises, that is,
$\rho^{ \mcY}_-(f)<\rho^{ \mcY}_+(f)$.
Moreover, the condition $\mathbf{NCA}(\mcY)$ alone is not any more sufficient to guarantee that $\rho_-^{\mathcal Y}(f)=\rho_+^{\mathcal Y}(f)$, even if both $f$ and $(-f)$ are $\mcY$-replicable claims. A counterexample is described in Section \ref{example1},  Example \ref{example42}.

\begin{proposition}\label{diversi} Suppose  Assumption \ref{ass:cone} holds true and $f \in L^{0 }(\Omega, \mathbf{F}_T , P)$.
\begin{enumerate}
    \item Assume $\mathbf{NCA}(\mcY)$ holds. If
$\rho^{ \mcY}_-(f)=\rho^{ \mcY}_+(f)$, then $$f\in (\R^N + {\sf X}_{i=1} ^{N} K_i +\mcY) \cap (\R^N + {\sf X}_{i=1} ^{N} K_i -\mcY),$$ i.e. both $ f$ and $(-f)$ are $\mcY$-collectively replicable.
\item Assume $\mathbf{NCA}(\mcY)$ and $\mathbf{NCA}(-\mcY)$ hold. 
Suppose that both $ f$ and $(-f)$ are $\mcY$-collectively replicable. 
Then $$\rho^{ \mcY}_-(f) = \rho^{ \mcY}_+(f).$$ 
\end{enumerate}
\end{proposition}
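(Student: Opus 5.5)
For Item 1, I would rely on the attainment statement in Theorem \ref{duality} together with the symmetry \eqref{rem:rhopm}. Under $\mathbf{NCA}(\mcY)$, Theorem \ref{duality} gives $\rho^{\mcY}_+(f)>-\infty$, and Remark \ref{monotonia} gives $\rho^{\mcY}_-(f)\le\rho^{\mcY}_+(f)$. So if the two prices are equal, their common value $p$ must be finite. Indeed, $\rho^{\mcY}_-(f)=-\rho^{\mcY}_+(-f)$, and $\rho^{\mcY}_+(-f)>-\infty$ by Theorem \ref{duality} applied to $-f$, so $p<+\infty$.

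By the attainment part of Theorem \ref{duality}, $f-\frac{p}{N}\mathbf 1\in K^{\mcY}$, that is,
$$f=\tfrac{p}{N}\mathbf 1+k+Y-l,\qquad k\in {\sf X}_{i=1}^N K_i,\; Y\in\mcY,\; l\ge 0.$$
Similarly, $-f+\frac{p}{N}\mathbf 1\in K^{\mcY}$, since $\rho^{\mcY}_+(-f)=-p$ is attained. This gives
$$-f=-\tfrac{p}{N}\mathbf 1+k'+Y'-l',\qquad k'\in {\sf X}_{i=1}^N K_i,\; Y'\in\mcY,\; l'\ge0.$$
Adding the two representations yields
$$k+k'+Y+Y'=l+l'\ge 0,$$
where $k+k'\in{\sf X}_{i=1}^N K_i$ and $Y+Y'\in\mcY$ because $\mcY$ is a cone. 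By $\mathbf{NCA}(\mcY)$ this forces $l+l'=0$, hence $l=l'=0$. Therefore $f\in\R^N+{\sf X}_{i=1}^N K_i+\mcY$ and $-f\in\R^N+{\sf X}_{i=1}^N K_i+\mcY$. The Remark preceding the proposition then turns this into membership of $f$ in the stated intersection.

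For Item 2, write $f=x+k+Y$ and $f=x'+k'-Y'$ with $Y,Y'\in\mcY$. From the first representation, $\rho^{\mcY}_+(f)\le\sum_i x^i$. From the second, $\rho^{\mcY}_-(f)\ge\sum_i x'^i$ directly by the definition of $\rho^{\mcY}_-$. Since $\rho^{\mcY}_-\le\rho^{\mcY}_+$, it suffices to show $\sum_i x^i\le\sum_i x'^i$.

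Subtracting the two representations gives
$$(x-x')+(k-k')+(Y+Y')=0.$$
Suppose $c:=\sum_i(x'^i-x^i)<0$. Choose $z\in\R^N_0\subseteq\mcY$ with $x'-x+z=\frac{c}{N}\mathbf 1$. Then
$$(k-k')+(Y+Y'+z)=\tfrac{|c|}{N}\mathbf 1$$
is a nonzero nonnegative element of $({\sf X}_{i=1}^N K_i+\mcY)\cap L^0_+$, contradicting $\mathbf{NCA}(\mcY)$. Hence $\sum_i x^i\le\sum_i x'^i$, and consequently
$$\rho^{\mcY}_+(f)\le\sum_i x^i\le\sum_i x'^i\le\rho^{\mcY}_-(f)\le\rho^{\mcY}_+(f),$$
so all of these quantities are equal.

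The main point to watch is where the hypothesis $\mathbf{NCA}(-\mcY)$ is needed. The argument above only used $\mathbf{NCA}(\mcY)$, since $Y+Y'\in\mcY$. This suggests that the real use of $\mathbf{NCA}(-\mcY)$ is to ensure finiteness, or rather to rule out degenerate values of $\rho_\pm$; alternatively, the definition of $\rho^{\mcY}_-$ might be interpreted through \eqref{rem:rhopm} with sign conventions I should double-check. I would re-check carefully that the subhedging definition uses $x+k-Y\le f$, which matches the second representation, and then verify that the bound obtained does not actually need $\mathbf{NCA}(-\mcY)$. If it does need it, it would enter by excluding a contradiction in the opposite direction, namely by using $(k'-k)-(Y+Y')$ together with $\mathbf{NCA}(-\mcY)$.
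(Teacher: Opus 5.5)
Your Item 1 is correct and is essentially the paper's argument. You use attainment for $\rho^{\mcY}_+(f)$ and $\rho^{\mcY}_+(-f)$, add the two representations, and let $\mathbf{NCA}(\mcY)$ eliminate the nonnegative remainder.

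Item 2, however, has a sign error that is not cosmetic. From $(x-x')+(k-k')+(Y+Y')=0$ you get $(k-k')+(Y+Y'+z)=x'-x+z=\frac{c}{N}\mathbf{1}$. Since $c<0$, this is strictly \emph{negative}, not $\frac{|c|}{N}\mathbf{1}$. A strictly negative element of ${\sf X}_{i=1}^N K_i+\mcY$ is perfectly compatible with $\mathbf{NCA}(\mcY)$, so no contradiction arises. Both of your claims about the hypotheses are therefore wrong: the argument does not work with $\mathbf{NCA}(\mcY)$ alone, and $\mathbf{NCA}(-\mcY)$ is not there to ensure finiteness. Finiteness is automatic, since the two representations already give $\sum_i x'^i\le\rho^{\mcY}_-(f)\le\rho^{\mcY}_+(f)\le\sum_i x^i$.

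Example \ref{example42} shows the conclusion genuinely fails under $\mathbf{NCA}(\mcY)$ alone. There $f=(V_1,-V_1)$ and both $\pm f$ are $\mcY$-replicable, yet $\rho^{\mcY}_-(f)=-1<0=\rho^{\mcY}_+(f)$. Hence every pair of representations $f=x+k+Y=x'+k'-Y'$ satisfies $\sum_i x'^i\le -1<0\le\sum_i x^i$, which is exactly the inequality your argument claims to rule out.

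The repair is the alternative you mention in your last sentence, and it is precisely the paper's proof. Negate the identity to get $(k'-k)-(Y+Y'+z)=-\frac{c}{N}\mathbf{1}$. This is strictly positive and lies in ${\sf X}_{i=1}^N K_i+(-\mcY)$, because $Y+Y'+z\in\mcY$ (using $z\in\R^N_0\subseteq\mcY$). That contradicts $\mathbf{NCA}(-\mcY)$, so $\sum_i x^i\le\sum_i x'^i$, and your chain of inequalities then closes.

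The conceptual point is this. Comparing a $+\mcY$ representation with a $-\mcY$ representation of the same claim requires excluding a sure \emph{loss} in ${\sf X}_{i=1}^N K_i+\mcY$. After a sign change, such a loss is a collective arbitrage for $-\mcY$, whereas $\mathbf{NCA}(\mcY)$ only rules out sure gains.
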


\begin{proof} 
To prove item 1., assume $\rho^{ \mcY}_-(f)=\rho^{ \mcY}_+(f)$. Then they are both finite. Indeed from Theorem \ref{duality} $\rho^{ \mcY}_+(f)>-\infty$ and $\rho^{ \mcY}_{-}(f)<+\infty$. From the attainment in Theorem \ref{duality},  there exists $\tilde x\in \R^N$ with $\sum_i\tilde x^i=\rho^{ \mcY}_-(f)$, $\tilde k\in {\sf X}_{i=1} ^{N} K_i$ and $\tilde Y\in \mcY$ such that $f \geq \tilde x+\tilde k-\tilde Y $ and there exists  $k\in {\sf X}_{i=1} ^{N} K_i$, $Y\in \mcY$, and $x\in \R^N$ with $\sum_i x^i=\rho^{ \mcY}_+(f)$ such that  $ x+k+Y \geq f$. Therefore $x+k+Y \geq f \geq \tilde x+\tilde k-\tilde Y$, that is:
 \[x-\tilde x + k-\tilde k +Y+\tilde Y\geq 0 \text{ with } \sum_{i=1}^N (x^i-\tilde x^i)=0\]
 which implies, by $\mathbf{NCA}(\mcY)$ and since $x-\tilde x  +Y+\tilde Y\in \mcY$, that
 $x-\tilde x + k-\tilde k +Y+\tilde Y = 0$. Thus 
 $$\tilde x+\tilde k-\tilde Y=x+k+Y\geq f \geq \tilde x+\tilde k-\tilde Y=x+k+Y, $$ 
 so that $f\in (\R^N + {\sf X}_{i=1} ^{N} K_i -\mcY)\cap(\R^N + {\sf X}_{i=1} ^{N} K_i +\mcY)$, which is our thesis.

To prove item 2., we
suppose that $f\in (\R^N + {\sf X}_{i=1} ^{N} K_i -\mcY)\cap(\R^N + {\sf X}_{i=1} ^{N} K_i +\mcY)$. Then with obvious notation $\tilde x+\tilde k-\tilde Y= f =x+k+Y$. We show that $\sum_{i=1}^N \tilde{x}^i\geq \sum_{i=1}^N x^i$. Suppose indeed by contradiction that $\sum_{i=1}^N\tilde x^i<\sum_{i=1}^N x^i$ and set $z^i=\frac1N (\sum_{i=1}^N \tilde x^i-\sum_{i=1}^N x^i)<0$. Then
$$ 0= x-\tilde x + k-\tilde k +Y+\tilde Y> (x-\tilde x + z)+ (k-\tilde k) +Y+\tilde Y \in {\sf X}_{i=1} ^{N} K_i +\mcY$$
where the strict inequality is meant to hold componentwise a.s., and the last claim is a consequence of $x-\tilde x+z\in\R^N_0\subseteq \mcY$. 
Changing signs we have $$0< (-k+\tilde k) -\left((x-\tilde x +z)+Y+\tilde Y\right)\in {\sf X}_{i=1} ^{N} K_i +(-\mcY).$$
This contradicts $\mathbf{NCA}(-\mcY)$. 

 Thus, combining with $\rho^{ \mcY}_+(f) \geq \rho^{ \mcY}_-(f)$ (which holds under $\mathbf{NCA}(\mcY)$, by Remark \ref{monotonia}) we get
$$\sum_{i=1}^N\tilde x^i \geq \sum_{i=1}^N x^i\geq \rho^{ \mcY}_+(f) \geq \rho^{ \mcY}_-(f)\geq \sum_{i=1}^N\tilde x^i.$$
This yields
\begin{equation}\label{sumx}
    \sum_{i=1}^N\tilde x^i =\rho^{ \mcY}_-(f) = \rho^{ \mcY}_+(f)= \sum_{i=1}^N x^i.
\end{equation}
\end{proof}
We note that in the previous result, the proof of Item 2 actually provided further insights on the links between prices of collectively replicating strategies and sub\slash subhedging prices. We highlight these in the following Corollary.
\begin{corollary}
\label{corexp0}
    Suppose  Assumption \ref{ass:cone},  $\mathbf{NCA}(\mcY)$ and $\mathbf{NCA}(-\mcY)$ hold. If both $ f$ and $(-f)$ are $\mcY$-collectively replicable, i.e. there exist $x, \widetilde{x}\in\R^N, k,\widetilde{k}\in{\sf X}_{i=1}^N K_i,Y,\widetilde{Y}\in\mcY$ such that $\widetilde{x}+\widetilde{k}-\widetilde{Y}=f=x+k+Y$, then
\begin{enumerate}
 \item 
 $\sum_{i=1}^N\tilde x^i =\rho^{ \mcY}_-(f) = \rho^{ \mcY}_+(f)= \sum_{i=1}^N x^i$;
 
\item 
$\sum_{i=1}^NE_{Q^i}[\widetilde{Y}^i]=0=\sum_{i=1}^NE_{Q^i}[{Y}^i]\, \text{ for all } \mathbf Q\in \mathcal{M}_e(\mcY)$;

\item For any $\varphi\in L^{0}(\Omega, \mathbf{F}_T,P)$ such that  
$\max \{|f^i|,|Y^i_1|,\ldots, |Y^i_R|\}\leq \varphi^i$ for every $i=1,\dots,N$ it holds that
$$\rho^{\mcY}_-(f)=\sum_{i=1}^N E_{Q^i}[f^i]=\rho^{\mcY}_+(f)\quad \forall \mathbf{Q}\in \mathcal{M}^{\varphi}_e(\mcY).$$ 

\end{enumerate}
\end{corollary}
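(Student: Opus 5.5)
Item 1 is exactly what the proof of Item 2 of Proposition \ref{diversi} establishes, culminating in \eqref{sumx}. We will just point out that the hypotheses there are the same as here: Assumption \ref{ass:cone}, $\mathbf{NCA}(\mcY)$, $\mathbf{NCA}(-\mcY)$, and $\widetilde{x}+\widetilde{k}-\widetilde{Y}=f=x+k+Y$. Nothing new is needed.

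For Item 2, fix $\mathbf Q\in\Me$ and start from the identity
\[
x-\widetilde x+(k-\widetilde k)+Y+\widetilde Y=0 .
\]
Since $\mathbf Q\in\Me$, we have $Y,\widetilde Y\in L^1(\Omega,\mathbf F_T,\mathbf Q)$. Hence $k^i-\widetilde k^i=-(x^i-\widetilde x^i)-Y^i-\widetilde Y^i$ lies in $L^1(Q^i)$, and $k^i-\widetilde k^i\in K_i$ because $K_i$ is a vector space. Its negative part is therefore $Q^i$-integrable, and the standard fact recalled in Section \ref{secsetting} (\cite[Theorem 5.14]{FollmerSchied2}) gives $E_{Q^i}[k^i-\widetilde k^i]=0$.

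Summing expectations over $i$ and using $\sum_i x^i=\sum_i\widetilde x^i$ from Item 1, we get
\[
\sum_i E_{Q^i}[Y^i]+\sum_i E_{Q^i}[\widetilde Y^i]=0 .
\]
Both sums are $\le 0$ by the definition of $\Me$, since $Y,\widetilde Y\in\mcY$. Each must therefore vanish, which is Item 2.

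For Item 3, fix $\varphi$ as stated and $\mathbf Q\in\mathcal M^{\varphi}_e(\mcY)\subseteq\Me$. By the discussion following \eqref{Mfi}, $f\in L^1(\Omega,\mathbf F_T,\mathbf Q)$ and $\mcY\subseteq L^1(\Omega,\mathbf F_T,\mathbf Q)$. Then $k^i=f^i-x^i-Y^i\in L^1(Q^i)$, so $E_{Q^i}[k^i]=0$. Taking expectations in $f=x+k+Y$ gives
\[
\sum_i E_{Q^i}[f^i]=\sum_i x^i+\sum_i E_{Q^i}[Y^i]=\sum_i x^i ,
\]
where the last step uses Item 2. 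Item 1 then identifies this common value with $\rho^{\mcY}_\pm(f)$.

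The only delicate point is the integrability bookkeeping that justifies $E_{Q^i}[k^i]=0$ when $k^i$ is not a priori integrable. This is handled by expressing $k^i$, or the difference $k^i-\widetilde k^i$, as a combination of integrable quantities and invoking the cited fact. The sign argument in Item 2 also relies on both $Y$ and $\widetilde Y$ belonging to $\mcY$, rather than $-\widetilde Y$, which is exactly how the corollary is set up.
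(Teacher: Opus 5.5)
Your proposal is correct and follows essentially the same route as the paper. Item 1 comes from \eqref{sumx}. Item 2 comes from taking $\mathbf Q$-expectations in $x-\widetilde x+(k-\widetilde k)+Y+\widetilde Y=0$, with integrability of $k-\widetilde k$ inherited from $\mcY\subseteq L^1(\Omega,\mathbf F_T,\mathbf Q)$ and the two nonpositive sums then forced to vanish. Item 3 is a direct consequence of Items 1 and 2, which you simply spell out in more detail than the paper does.
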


\begin{proof}
    The Item 1 follows from \eqref{sumx}. 
    Item 3 is a consequence of Items 1 and 2. It only remains to prove Item 2.
    Take $\mathbf Q\in \mathcal{M}_e(\mcY)$. Observe that since $\mcY \in L^1(\Omega,\mathbf{F}_T,\mathbf{Q})$, we also deduce $k-\widetilde{k}=\widetilde{x}-x-(\widetilde{Y}+Y)\in L^1(\Omega,\mathbf{F}_T,\mathbf{Q})$, 
    and from $Q^i\in M^i_e$ we get $E_{Q^i}[k^i-\widetilde{k}^i]=0$ for every $i$. 
Now we can write
$$-\widetilde{Y}=x-\widetilde{x}+k-\widetilde{k}+Y,$$
take expectations under \(\mathbf Q\) and sum componentwise. As $\sum_{i=1}^N\widetilde{x}^i=\sum_{i=1}^N x^i$, $E_{Q_i}[k^i-\widetilde{k}^i]=0$ for all $i$,  \(Y\in \mcY\) and \(\mathbf Q\in \mathcal M_e(\mcY)\) we get
\[
0\leq -\sum_{i=1}^N E_{Q_i}[\widetilde{Y}^i]
=
\sum_{i=1}^N x^i-\sum_{i=1}^N\widetilde{x}^i
+
\sum_{i=1}^N E_{Q_i}[k^i-\widetilde{k}^i]
+
\sum_{i=1}^N E_{Q_i}[Y^i]= 
\sum_{i=1}^N E_{Q_i}[Y^i]\leq 0.
\]
We conclude that
$
\sum_{i=1}^N E_{Q_i}[\widetilde{Y}^i]=0=\sum_{i=1}^N E_{Q_i}[Y^i]
$.

\end{proof}

\subsection{On the set of \textbf{NCA}$(\mcY)$-prices of contingent claims}

Recall that the (global) securities market comprises a zero-interest rate riskless asset $X^0$ and $J$ risky assets with discounted price processes $X^j=(X^j_t)_{t\in \mathcal T}$, $j=1, \dots ,J$, $J\geq 1$. 

\begin{definition}[\cite{DFM25} Definition 3.1]
A vector $\Pi_ f \in \mathbb{R}^N$ is called a  \textbf{NCA}$(\mcY)$-price for the contingent vector $f=(f^1,\dots,f^N) \in L^{0 }(\Omega, \mathbf{F}_T,P)$ if there exist processes $X^{J+1},\dots, X^{J+N}$ such that:
\begin{enumerate}
\item $X^{J+i}$ is adapted  to the  filtration $\fib$, $i=1,\dots, N$;
    \item $\Pi^i_ f = X^{J+i}_0$, $i=1,\dots, N$;
    \item $X^{J+i}_T =    f^i$, $i=1,\dots, N$;
    \item  \textbf{NCA}$(\mcY)$ holds in the extended market: $(X^j)_{j\in (i)\cup\{J+i\}}, \, i=1,\dots,N.$
\end{enumerate} 
Let $\Pi( f)$ be the set of \textbf{NCA}$(\mcY)$-prices for the contingent claims $ f=( f^1,\dots,  f^N).$ 
 \end{definition}
The following result holds for a convex cone $\mcY$ and is an improved version of Theorem 3.2 in \cite{DFM25}. The expression in \eqref{45} will be useful in proving that $\Pi(f)$ is relatively open.
Recall that if $\mathbf{\widehat{P}}=(\widehat{P}^1,\dots,\widehat{P}^N) \in (\mathcal P_{e})^{N} $ we set $ L^{p }(\Omega, \mathbf{F}_T,\mathbf{\widehat{P}}):=L^{p }(\Omega, \mathcal{F}_T^{1},\widehat{P}^1) \times \dots \times  L^{p }(\Omega, \mathcal{F}_T^{N},\widehat{P}^N)$.
\\

% %%%%%%%%%%%%%%%%%%%%%%%%%

%%%%%%%%%%%%%%%%%%%%%%%%%%%%

%%%%%%%%%%%%%%%%%%%%%%%%%%
%\red{INIZIO NUOVA FORMULAZIONE}

\begin{theorem}
\label{BaseFTAPII}
 Suppose that Assumption \ref{ass:cone} and $\mathbf{NCA}(\mcY)$ hold true and let $({P}^1,\dots,{P}^N)$ be any vector of probability measures such that ${P}^i \sim P$ for all $i$. 
Then, for any $ f\in L^{0}(\Omega, \mathbf{F}_T,P)$, and any fixed $\varphi\in L^{0}(\Omega, \mathbf{F}_T,P)$ such that  
$\max \{|f^i|,|Y^i_1|,\ldots, |Y^i_R|\}\leq \varphi^i$, it holds that
\begin{align}
\Pi( f) & = \left\{ (E_{Q^1}[   f^1],\dots, E_{Q^N}[   f^N]) \mid \mathbf{Q} \in \mathcal{M}^\varphi_e(\mcY)\right\} \label{44} \\ 
&=\left\{ (E_{Q^1}[   f^1],\dots, E_{Q^N}[   f^N]) \mid \mathbf{Q} \in \mathcal{M}^\varphi_e(\mcY) \text{ s.t. } \frac{dQ^i}{d{P}^i} \in L^{\infty}(\mathcal{F}^i_T) \text{ for all } i \right\},  \label{45}  
\end{align}
where $\mathcal{M}^\varphi_e(\mcY)$ is given in \eqref{Mfi} and 
%$ \frac{d\widehat{P}^i}{dP} \in L^{\infty}(\mathcal{F}^i_T)$, for all $i$, for
$L^{\infty}( \mathcal{F}^i_T):=L^{\infty}(\Omega, \mathcal{F}^i_T,{P}^i)=L^{\infty}(\Omega, \mathcal{F}^i_T,P)$.
\end{theorem}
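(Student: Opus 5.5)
We prove the inclusions $\eqref{45}\subseteq\eqref{44}\subseteq\Pi(f)\subseteq\eqref{45}$. The first is trivial, since the set in \eqref{45} is obtained from that in \eqref{44} by adding a constraint.

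\emph{Step 1: \eqref{44} is contained in $\Pi(f)$.} Fix $\mathbf Q\in\mathcal M^\varphi_e(\mcY)$. Since $|f^i|\le\varphi^i$, we have $f^i\in L^1(Q^i)$. Define the martingales
\[
X^{J+i}_t:=E_{Q^i}[f^i\mid\mathcal F^i_t],
\]
which are $\fib$-adapted, satisfy $X^{J+i}_T=f^i$, and have initial value $X^{J+i}_0=E_{Q^i}[f^i]$ because $\mathcal F^i_0$ is trivial. Each $Q^i$ is then an equivalent martingale measure for the enlarged market of agent $i$. The integrability condition and the aggregate inequality over $\mcY$ do not depend on the assets, so $\mathbf Q$ belongs to the collective martingale measures of the extended market. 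The enlarged exchange set is the same $\mcY$, so Assumption \ref{ass:cone} still holds. The easy implication of Theorem \ref{IFTAP:cone}, applied to the extended market, gives $\mathbf{NCA}(\mcY)$ there. Hence $(E_{Q^i}[f^i])_i\in\Pi(f)$.

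\emph{Step 2: $\Pi(f)$ is contained in \eqref{45}.} Let $\Pi_f\in\Pi(f)$ with processes $X^{J+i}$ such that $\mathbf{NCA}(\mcY)$ holds in the extended market. The extended price processes need not be integrable. To handle this, we replace $P^i$ by $\widehat P^i\sim P^i$ with
\[
\frac{d\widehat P^i}{dP^i}=\frac{c^i}{1+\sum_{j\in(i)\cup\{J+i\},t}|X^j_t|+\varphi^i},
\]
which is bounded. We then apply the second part of Theorem \ref{IFTAP:cone} (equation \eqref{MMM}) to the extended market with the vector $\widehat{\mathbf P}$. This gives $\mathbf Q$ with $Q^i\in M^{i,\infty}_e(\widehat P^i)$ for the extended asset set, $\mcY\subseteq L^1(\mathbf Q)$, and $\sum_i E_{Q^i}[Y^i]\le0$ on $\mcY$.

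These $\mathbf Q$ have the required properties. First, $dQ^i/dP^i=(dQ^i/d\widehat P^i)(d\widehat P^i/dP^i)$ is a product of bounded densities, hence lies in $L^\infty(\mathcal F^i_T)$. Second, $\varphi^i\in L^1(\widehat P^i)$ gives $E_{Q^i}[|\varphi^i|]<\infty$. Third, $Q^i$ is a martingale measure for the original assets in $(i)$, so $\mathbf Q\in\mathcal M^\varphi_e(\mcY)$. Finally, $X^{J+i}$ is a $Q^i$-martingale, so
\[
\Pi^i_f=X^{J+i}_0=E_{Q^i}[X^{J+i}_T]=E_{Q^i}[f^i].
\]

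\emph{Expected obstacle.} The delicate point is Step 2. We need the extended market to satisfy the integrability hypotheses used in Theorem \ref{IFTAP:cone}, including that $X^{J+i}$ is integrable under the chosen measure. We also need to verify that \eqref{MMM} holds for an arbitrary vector of equivalent reference measures. Since \eqref{MMM} is stated for arbitrary $\mathbf P$ with $P^i\sim P$, it applies to $\widehat{\mathbf P}$. The integrability is then ensured by construction, because the density of $\widehat P^i$ damps every $|X^j_t|$ and $\varphi^i$.
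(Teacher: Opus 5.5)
Your proposal is correct and matches the paper's proof essentially step by step. For $\Pi(f)\subseteq$ the set in \eqref{45}, the paper also passes to $\widehat{\mathbf P}$ with bounded densities making $\varphi$ (hence $\mcY$) integrable, and applies \eqref{MMM} of Theorem \ref{IFTAP:cone} to the extended market; for the reverse inclusion it defines $X^{J+i}_t=E_{Q^i}[f^i\mid\mathcal F^i_t]$ and uses the easy direction of Theorem \ref{IFTAP:cone}. Including $|X^{J+i}_t|$ in $d\widehat P^i/dP^i$ is harmless but not strictly needed, because \eqref{MMM} holds for any equivalent reference vector and the proof of Theorem \ref{IFTAP:cone} already performs its own change of measure to ensure integrability.
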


\begin{proof}
Suppose first that $\Pi_ f \in \Pi( f)$. Then \textbf{NCA}$(\mcY)$ holds in the extended market. It is always possible to select probability measures $\mathbf{\widehat{P}}=(\widehat{P}^1,\dots,\widehat{P}^N)$,  all equivalent to $P$, with bounded densities $\frac{d\widehat{P}^i}{d{P}^i} \in L^{\infty}(\mathcal{F}^i_T)$, and such that 
$\varphi\in L^{1}(\Omega, \mathbf{F}_T,\mathbf{\widehat{P}}) $, $\mcY \subseteq L^{1 }(\Omega, \mathbf{F}_T,\mathbf{\widehat{P}})$ and all processes $X^{j}$, $j \in (i)$, are integrable under $\widehat{P}^i$.
From Theorem \ref{IFTAP:cone}, applied under such a $\widehat{\mathbf{P}}$, we infer the existence  of an equivalent collective martingale measure $\mathbf{Q}=(Q^1,\dots, Q^N)$ for the extended market, with bounded densities $\frac{dQ^i}{d\widehat{P}^i} \in L^{\infty}(\Omega, \mathcal{F}^i_T, \widehat P^i)$  for all  $i$. We see that  $\frac{dQ^i}{d{P}^i}=\frac{dQ^i}{d\widehat{P}^i} \frac{d\widehat P^i}{d{P}^i} \in L^{\infty}(\mathcal{F}^i_T)$ and, since $\varphi\in L^{1}(\Omega, \mathbf{F}_T,\mathbf{\widehat{P}}) $, $E_{Q^i}[\varphi^i]<+\infty$ for all $i$, we have $\mathbf{Q} \in \mathcal{M}^\varphi_e(\mcY)$.  Furthermore, since $Q^i$ is a martingale measure for the extended market of agent $i$,  $X^{J+i}_t = E_{Q^i}[X^{J+i}_T | \mathcal{F}^i_t]=E_{Q^i}[ f^i | \mathcal{F}^i_t]$. By definition
\[
\Pi^i_ f = X^{J+i}_0 = E_{Q^i}[X^{J+i}_T | \mathcal{F}_0] = E_{Q^i}[   f^i].
\]
We thus proved that 
\begin{align} 
\Pi(f) &\subseteq \left \{ (E_{Q^1}[   f^1],\dots, E_{Q^N}[   f^N]) \mid \mathbf{Q} \in \mathcal{M}^\varphi_e(\mcY) \text{ s.t. } \frac{dQ^i}{d{P}^i} \in L^{\infty}( \mathcal{F}^i_T) \text{ for all } i \right\}\\
&\subseteq \left \{ (E_{Q^1}[   f^1],\dots, E_{Q^N}[   f^N]) \mid \mathbf{Q} \in \mathcal{M}^\varphi_e(\mcY)\right\}
\end{align}
Conversely, take $x=(E_{Q^1}[   f^1],\dots, E_{Q^N}[   f^N]) \in \R^N$ for $ \mathbf{Q} \in \mathcal{M}^\varphi_e(\mcY)$. Define for $i=1,\dots,N$ the $Q^i-$ martingales $X^{J+i}_t = {E}_{Q^i}[   f^i | \mathcal{F}^i_t]$. Since $\mathbf{Q} \in \Me $, and $Q^i$ is an equivalent martingale measure for the $i$-extended market $(X^j)_{j\in(i)\cup\{J+i\}}$, $\mathbf{Q}$ acts as an equivalent collective martingale measure for the whole extended market. Thus, by Theorem \ref{IFTAP:cone},  ${\textbf{NCA}}(\mcY) $ holds true in the extended market.
Finally, $X^{J+i}_T = f^i$ for mere measurability and integrability arguments,
thus $x \in \Pi( f)$ by definition, so that $$\left \{ (E_{Q^1}[   f^1],\dots, E_{Q^N}[   f^N]) \mid \mathbf{Q} \in \mathcal{M}^\varphi_e(\mcY)\right\} \subseteq \Pi(f).$$ 
\end{proof}

%%%%%%%%%%%%%%%%%%%%%%%%%%%%%
%%%%%%%%%%%%%%%%%%%%%%%%%%%

\subsection{Properties of the set $\Pi(f)$ }\label{property:pi}

In this subsection, we assume that $\mcY$ is a finite-dimensional vector space containing $\mathbb{R}_0^N$. The motivation for imposing this assumption is discussed in Remark \ref{remrelopen}.\\
Example 3.3 and Proposition 3.6  of \cite{DFM25} demonstrate that, under $\mathbf{NCA}(\mcY)$  if $f\in L^0(\Omega, \mathbf{F}_T, P)$ is not $\mcY$-collectively replicable then:
\begin{enumerate}
    \item   $\Pi(f) \subseteq \mathbb R^N $, $N=2$, is not necessarily open;
    \item   $\Pi(f) \subseteq \mathbb R^N $, $N \geq 1$, is not a closed set. 
\end{enumerate}
 
If we assume that $\mcY \subseteq \mcY_0$ and $N=1$, then necessarily $\mcY=\{0\}$ and therefore $f$ is $\mcY$-collectively replicable if and only if it is classically replicable.
Thus the statement in Item 1 shows a difference with the classical case ($N=1$), where $\Pi(f) \subseteq \R$ is an open interval when $f$ is not replicable (Theorem 5.32 of \cite{FollmerSchied2}). We will show that the set $\Pi(f)$ is invariably \emph{relatively open}, regardless of whether $f$ is $\mcY$-replicable and for any $N \geq 1$. In the one-dimensional case ($N=1$), a relatively open nonempty interval is either open (which occurs when $f$ is not replicable) or a singleton (which occurs when $f$ is replicable).

This also shows that, in general, it is not possible to distinguish between replicable and non-replicable claims through the topological properties of the set $\Pi(f)$. Instead, the defining characteristic of a replicable claim is that $\Pi(f)$ comprises a single element, a point we elaborate on in the subsequent section.

\begin{definition} Let $A\subset \mathbb{R}^N$ be convex and denote by $$\operatorname{aff}(A):=\{ \lambda x_1+(1-\lambda) x_2 \mid x_1,x_2 \in A, \lambda \in \R\}$$ the affine hull generated by $A$. A convex set $A$ is relatively open if it is an open set in  $\operatorname{aff}(A)$, or equivalently (see \cite{Ro70} Theorem 6.4) if for every $x\in A$ and for every $x_1\in A$ there exists  $x_2\in A$ and a scalar $\lambda\in (0,1)$ such that
\begin{equation}\label{xxx}
    x = \lambda \,x_1 + (1-\lambda)\,x_2.
\end{equation}
\end{definition}
Notice that, for any $x \in \R^N$, $\operatorname{aff}(\{x\})=\{x\}$, so that any singleton is relatively open.\\

\begin{proposition}\label{propRO}
Suppose  that $\mcY$ is a finite dimensional vector space such that $\R^N_0\subseteq \mcY$ and that $\mathbf{NCA}(\mcY)$ hold true. Let $f\in L^0(\Omega,\textbf{F}_T,P)$ be given.
Then $\Pi(f) \subseteq \R^N$ is a relatively open nonempty convex set. 
\end{proposition}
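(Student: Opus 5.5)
We use the representation \eqref{44}/\eqref{45} of Theorem \ref{BaseFTAPII}. Since a finite dimensional vector space is in particular a finitely generated convex cone, Assumption \ref{ass:cone} holds. Fix $\varphi$ with $\max\{|f^i|,|Y^i_1|,\dots,|Y^i_R|\}\le\varphi^i$, where $Y_1,\dots,Y_R$ generate $\mcY$ (for instance $\pm$ a basis). Then
\[
\Pi(f)=\{E_{\mathbf Q}[f]\mid \mathbf Q\in\mathcal M^\varphi_e(\mcY)\},\qquad E_{\mathbf Q}[f]:=(E_{Q^1}[f^1],\dots,E_{Q^N}[f^N]).
\]
\textbf{Nonemptiness} follows from Theorem \ref{duality}, which gives $\mathcal M^\varphi_e(\mcY)\neq\emptyset$ under $\mathbf{NCA}(\mcY)$.

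\textbf{Convexity.} Because $\mcY$ is a vector space, the constraint in $\Me$ reads $\sum_i E_{Q^i}[Y^i]=0$ for all $Y\in\mcY$, which is linear in the densities. Each $M^i_e$ is convex, and integrability of $\varphi^i$ is preserved under convex combinations. Hence $\mathcal M^\varphi_e(\mcY)$ is convex. The map $\mathbf Q\mapsto E_{\mathbf Q}[f]$ is affine on it (componentwise mixtures of the $Q^i$), so $\Pi(f)$ is convex.

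\textbf{Relative openness.} We verify the criterion \eqref{xxx}: given $x=E_{\mathbf Q}[f]$ and $x_1=E_{\mathbf Q_1}[f]$ in $\Pi(f)$, we must produce $x_2\in\Pi(f)$ and $\lambda\in(0,1)$ with $x=\lambda x_1+(1-\lambda)x_2$. The natural candidate is to extrapolate beyond $\mathbf Q$ away from $\mathbf Q_1$:
\[
Q^i_2:=\tfrac{1}{1-\lambda}\bigl(Q^i-\lambda Q^i_1\bigr).
\]
This requires $dQ^i/dQ^i_1$ to be bounded below by a positive constant, which need not hold. So we first replace $\mathbf Q$ by a better-behaved representative with the same prices, using the freedom in \eqref{45}.

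\textbf{Main step: choosing a good representative.} We work with a reference vector $\mathbf P$, taken as $P^i:=Q^i$ in \eqref{45}. Theorem \ref{BaseFTAPII} then gives $\widetilde{\mathbf Q}\in\mathcal M^\varphi_e(\mcY)$ with $E_{\widetilde{\mathbf Q}}[f]=x$ and $d\widetilde Q^i/dQ^i\in L^\infty$. This bounds densities relative to $\mathbf Q$ but not from below, so this choice alone does not suffice.

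A cleaner route is to absorb $\mathbf Q_1$ into the reference measure and use convexity. Take $\mathbf R:=\tfrac12(\mathbf Q+\mathbf Q_1)\in\mathcal M^\varphi_e(\mcY)$, whose price is $\tfrac12(x+x_1)\in\Pi(f)$. Then $dQ^i_1/dR^i\le 2$ and $dQ^i/dR^i\le 2$, while $dR^i/dQ^i\ge \tfrac12$. With $\lambda\in(0,\tfrac12)$ small, set
\[
Q^i_2:=\frac{Q^i-\lambda Q^i_1}{1-\lambda}.
\]
Its density with respect to $R^i$ is $\bigl(dQ^i/dR^i-\lambda\, dQ^i_1/dR^i\bigr)/(1-\lambda)$, and positivity requires $dQ^i/dR^i\ge 2\lambda$ a.s. This still fails in general, so the plan is to choose $\mathbf Q$ itself more cleverly.

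Apply \eqref{45} with $P^i:=Q^i_1$ to obtain $\widetilde{\mathbf Q}$ with $E_{\widetilde{\mathbf Q}}[f]=x$ and $d\widetilde Q^i/dQ^i_1\le c$. Then mix, $\widehat{\mathbf Q}:=\tfrac12(\widetilde{\mathbf Q}+\mathbf Q_1)$, which has price $\tfrac12(x+x_1)$ and satisfies $d\widehat Q^i/dQ^i_1\ge\tfrac12$. To recover $x$ exactly, extrapolate along the segment from $\mathbf Q_1$ through $\widetilde{\mathbf Q}$:
\[
\mathbf Q_2:=\frac{\widetilde{\mathbf Q}-\lambda\mathbf Q_1}{1-\lambda}.
\]
This needs $d\widetilde Q^i/dQ^i_1\ge\lambda$, which again requires a lower bound.

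The fix is to apply \eqref{45} symmetrically: choose the reference $\mathbf P$ so that $\mathbf Q_1$ has a bounded density with respect to it, i.e. take $P^i:=\tfrac12(Q^i+Q^i_1)$, or better, the reference measure of the $\widetilde{\mathbf Q}$ we are about to construct. I expect the key lemma to be the following. Every $x\in\Pi(f)$ admits a representative $\widetilde{\mathbf Q}$ with $E_{\widetilde{\mathbf Q}}[f]=x$ and $dQ^i_1/d\widetilde Q^i\in L^\infty$. To prove it, I would pass to a reference $\widehat{\mathbf P}$ under which $d\mathbf Q_1/d\widehat{\mathbf P}$ is bounded, and then rerun the Kreps–Yan argument of Theorem \ref{IFTAP:cone} in the extended market including $f$ priced at $x$, seeking a density bounded away from zero relative to $\mathbf Q_1$. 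This is the step I expect to be the main obstacle.

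Granting the lemma, $\lambda:=1/(2\|dQ^i_1/d\widetilde Q^i\|_\infty)$ makes $\mathbf Q_2$ a vector of equivalent probability measures. The martingale property, the $\varphi$-integrability and the linear constraint $\sum_i E_{Q^i_2}[Y^i]=0$ (here the vector space structure of $\mcY$ is essential, since the coefficient $-\lambda$ is negative) all pass to the affine combination. Hence $\mathbf Q_2\in\mathcal M^\varphi_e(\mcY)$, and $x=\lambda x_1+(1-\lambda)E_{\mathbf Q_2}[f]$, as required by \eqref{xxx}.
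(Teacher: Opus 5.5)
\textbf{There is a genuine gap.} Your relative-openness argument rests entirely on the ``key lemma'': every $x\in\Pi(f)$ has a representative $\widetilde{\mathbf Q}$ with $dQ^i_1/d\widetilde Q^i\in L^\infty$ for the \emph{given} $\mathbf Q_1$. You leave this unproven. The route you suggest, rerunning Kreps--Yan to get a density bounded away from zero relative to $\mathbf Q_1$, is not carried out and is not obviously feasible. The lemma is also essentially equivalent to the proposition itself. Indeed, if $x=\lambda x_1+(1-\lambda)x_2$ with $x_2=E_{\mathbf Q_2}[f]$, then $\lambda\mathbf Q_1+(1-\lambda)\mathbf Q_2$ represents $x$ and dominates $\lambda\mathbf Q_1$. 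So the reduction has not touched the real difficulty.

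\textbf{The missing idea.} In criterion \eqref{xxx} only the points $x,x_1$ are given, not their representatives, and you have been re-choosing the wrong one. Each of your attempts applies \eqref{45} to re-select the representative of $x$, with reference $Q^i$ or $Q^i_1$. That bounds $d\widetilde Q^i/dQ^i$ or $d\widetilde Q^i/dQ^i_1$ from above, which is the wrong direction.

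Instead, keep $\mathbf Q$ representing $x$ and apply \eqref{45} with $P^i:=Q^i$ (legitimate since $Q^i\sim P$) to the point $x_1$. This gives a representative $\mathbf Q_1$ of $x_1$ with $dQ^i_1/dQ^i\le c$ for all $i$, which is exactly the bound you need. For $0<\lambda<1/c$, the measure $Q^i_2:=(Q^i-\lambda Q^i_1)/(1-\lambda)$ has $Q^i$-density at least $(1-\lambda c)/(1-\lambda)>0$ and at most $1/(1-\lambda)$. Hence $\mathbf Q_2$ is a vector of equivalent probability measures. The rest of your final paragraph then goes through: the martingale property, $\varphi$-integrability, and the two-sided linear constraint, which uses that $\mcY$ is a vector space.

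This is precisely the paper's proof. It takes $P^i=Q^i_*$ in \eqref{45}, sets $Z_1=dQ_1/dQ_*$ and $\alpha=\max_i\|Z^i_1\|_\infty$, and writes $Z_*=\frac{\beta Z_1+Z_2}{1+\beta}$ with $Z_2=1+\beta-\beta Z_1\ge\frac12$ for $\beta=1/(1+2(\alpha-1))$.

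A minor point: your $\lambda$ must be uniform in $i$, for instance defined through $\max_i$ of the relevant sup-norms.
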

\begin{proof}
Recall that we denote $L^{\infty}( \mathcal{F}^i_T):=L^{\infty}(\Omega, \mathcal{F}^i_T,P)$ and similarly for $L^{\infty}( \mathbf{F}_T)$. 
By \textbf{NCA}$(\mcY)$ and Theorem \ref{duality}, $M_e^\varphi(\mathcal{Y}) \neq \emptyset$ and thus $\Pi(f)$ is not empty. Convexity follows directly from the representation \eqref{44}.
First take in \eqref{45} ${P}^i=P$ for all $i$. Then
$$\Pi(f)  = \left\{ E_Q[f] := ( E_{Q^1}[f^1],\dots, E_{Q^N}[f^N]) \mbox{ }|\mbox{ } \mathbf Q \in \mathcal{M}_e^\varphi(\mathcal{Y}) \text{ and } \frac{dQ^i}{dP} \in L^\infty( F^i_T) \text{ for all } i \right\}.$$
Let $ x, x_1 \in \Pi(f) $ and let $ \mathbf Q_*, \mathbf Q_1 \in \mathcal{M}_e^\varphi(\mathcal{Y}) $,  with $\frac{dQ_*^i}{dP} \in L^\infty( F^i_T)$ and $\frac{dQ_1^i}{dP} \in L^\infty( F^i_T)$, such that
$x = E_{Q_*}[f]$, $x_1 = E_{Q_1}[f]$. We assume $x \neq x_1$, otherwise $x_2=x$ satisfies \eqref{xxx}.
Now in \eqref{45} we select $P^i= Q^i_*$ for all $i$, and write
$$\Pi(f)  = \left\{ E_Q[f] := ( E_{Q^1}[f^1],\dots, E_{Q^N}[f^N]) \mbox{ }|\mbox{ } \mathbf Q \in \mathcal{M}_e^\varphi(\mathcal{Y})  \text{  s.t. }\frac{dQ^i}{dQ_*^i} \in L^\infty(F^i_T) \text{ for all } i \right\}.$$
%Let $ x, x_1 \in \Pi(f) $ 
%If $p = p_1$ then $\check{p} = p_1$ is the vector we were looking for with $\lambda = \frac{1}{2}$.
%and let $ Q, Q_1 \in M_e^\varphi(\mathcal{Y}) $ such that
%$x = E_{Q}[f]$, $x_1 = E_{Q_1}[f]$.
Since \(x_1\in \Pi(f)\), by the representation above we may, possibly replacing the previously chosen
\(Q_1\), select \(Q_1\in \mathcal M^\varphi_e(\mathcal Y)\) such that
\[
x_1=E_{Q_1}[f]
\qquad\text{and}\qquad
\frac{dQ^i_1}{dQ^{i}_{\ast}}\in L^\infty(\mathcal F^i_T)
\quad\text{for every } i=1,\ldots,N .
\]
Consider the vectors of associated Radon-Nikodym derivatives w.r.t. $\mathbf{Q}_*$, namely $$
Z_1 = \left( \frac{dQ_1^1}{dQ_*^1}, \dots, \frac{dQ_1^N}{dQ_*^N} \right) \in L^{\infty}_+(\textbf{F}_T)  \mbox{ and } Z_* = \left( \frac{dQ_*^1}{dQ_*^1},  \dots, \frac{dQ_*^N}{dQ_*^N} \right) = (1, \dots, 1).$$ 
%Since both $Q$ and $Q_1$ belong to $M_e^\varphi(\mathcal{Y})$, we have $Q^i\sim Q^i_1 \sim P$ for every $i=1,\dots,N$, thus, given also that $\frac{dQ^i}{dP},\frac{dQ^i_1}{dP} \in L^\infty(\Omega, F^i_T,P)$, we have that $Z_1^i \in L^{\infty}_+(\Omega,F^i_T,Q^i)$ for every $i=1,\dots,N$.\\ 
%Since $p \neq p_1$, then there exists at least one $j \in \{1,\dots,N \}$ such that $Z^j \neq Z_1^j$. By applying the previous lemma, for such $j$, $\| Z_1^j\|_{\infty} > 1$, and, as a consequence, 
Since $\mathbf{Q}_1 \neq \mathbf{Q}_*$, $$ \alpha:= \max_{i=1,\dots, N} \| Z_1^i\|_{\infty} >  1 \text{ and } \beta: = \frac{1}{1+2( \alpha - 1)} \in (0,1)
$$ 
and notice that $\widehat{Z}: = \beta Z_1 + (1-\beta)Z_* \in  L^{\infty}_+(\textbf{F}_T) $ satisfies 
$$\|\widehat{Z}^i\|_{\infty}  \leq  \beta \|Z_1^i\|_{\infty}  +(1-\beta)\leq \frac{3}{2} \quad \forall i=1,\dots,N.$$
Define $Z_2:=2-\widehat{Z} \in L^{\infty}_+(\textbf{F}_T), $ as $ Z_2^i \geq 2 - \frac{3}{2} = \frac{1}{2}$, $ \forall i=1,\dots,N$.

Let $\widehat{\mathbf{Q}}$ be defined by $\widehat{Z}^i=\frac{d\widehat{Q}^i}{dQ_*^i}$ and $\mathbf{Q}_2$ be defined by $Z^i_2=\frac{dQ^i_2}{dQ_*^i} \in L^\infty(F^i_T)$, $i=1,\dots,N$. As both $ \mathbf{Q}_*, \mathbf{Q}_1 $  belong to the convex set $\mathcal{M}_e^\varphi(\mathcal{Y}) $ we get $\widehat{\mathbf{Q}} \in \mathcal{M}_e^\varphi(\mathcal{Y})$ and one easily checks that also $\mathbf{Q}_2 \in \mathcal{M}_e^\varphi(\mathcal{Y})$ and therefore $x_2: = E_{\mathbf{Q}_2}[f] \in \Pi(f)$. From the definitions of $\widehat{Z}$ and of $Z_2$ we get

$$Z_* =\frac{\beta Z_1}{1+\beta}+ \frac{Z_2}{1+\beta}$$
implying that 
$$x = E_{\mathbf{Q}_*}[f] =  \frac{\beta E_{Q_1}[f]}{1+\beta} + \frac{E_{Q_2}[f]}{1+\beta} = \frac{\beta }{1+\beta} x_1+ \frac{1}{1+\beta}x_2.$$

\end{proof}

\begin{remark}\label{comparison} Take $N=1$ and suppose that \textbf{NCA}$(\mcY)$ holds true.\\
    (i) If $f$ is classically replicable, then it is well known (Theorem 5.32, \cite{FollmerSchied2}) that $\Pi(f)$ is a singleton and thus it is relatively open, in accordance with Proposition \ref{propRO}. \\
    (ii) Observe that by \textbf{NCA}$(\mcY)$ and by the representation \eqref{44},  the set $\Pi(f)$ is a nonempty interval of $\R$ and by Proposition \ref{propRO} it is then a nonempty relatively open interval in $\R$.
    Any nonempty relatively open interval in $\R$ is either an open interval or a single point. In the latter case it is then also closed. Thus, if $f$ is not classically replicable, then $\Pi(f)$ is a relatively open nonempty interval that it is not closed (by Item 2 at the beginning of this section), thus it is an open interval (in accordance with Theorem 5.32 of \cite{FollmerSchied2}).
\end{remark}

\begin{remark}\label{remrelopen}
    The assertion that $\Pi(f)$ is relatively open under $\mathbf{NCA}(\mathcal Y)$ is not valid in full generality when $\mathcal Y$ is only a convex cone. This failure stems from the fact that the defining constraints of
$\mathcal M_e(\mathcal Y)$ become genuinely one-sided when $\mathcal Y$ is only a cone. 
%The obstruction is that the defining constraints of $\mathcal M_e(\mathcal Y)$ are (strictly) one-sided in the case of a cone.  
Example \ref{example42} shows that in the case of a finitely generated convex cone $\Me$ can be a closed set and $\Pi(f)$ is not relatively open.
\end{remark}

\subsection{Strongly collectively replicable claim}\label{sec52}

Suppose that Assumption \ref{ass:cone} holds, that $f$ is a collectively $\mcY$-replicable bounded claim and \textbf{NCA}$(\mcY)$ holds.   
 If $(-f)$ is $\mcY$-replicable and \textbf{NCA}$(-\mcY)$ holds, or if $\mcY$ is a vector space,  
 %Proposition \ref{uguali} 
  Corollary \ref{corexp0} implies that the quantity $\sum_{i=1}^N E_{Q^i}[f^i]$
is the same for every $\mathbf{Q}=(Q^1,\dots,Q^N)\in\Me$. 

In other words, the aggregate valuation of the claim $f$ is invariant across all pricing vectors in $\Me$.  
Nevertheless, except in the classical case $N=1$, this invariance does not imply that the pricing set $\Pi(f)$ consists of a single element, even when $\mcY$ is a vector space.

The next example, which describes a collectively incomplete market with a vector space $\mcY$ of possible exchanges, shows that for a collectively $\mcY$-replicable claim $f$, the set $\Pi(f)$ may fail to be a singleton.

\begin{example}\label{ex1}
    Consider the example 7.2.3 \cite{BDFFM25} with $|\Omega|=6$, $\mathcal T:=\{0,1,2\}$, $N=2$ agents, each investing in one single stock and with $\mathcal{Y} = \left\{ Y \in (L^0(\Omega, \mathbf{F}_1, P))^2 \mid  Y^1+Y^2 = 0  \right\}$ the vector space of allowed exchanges. It was shown that \textbf{NCA}$(\mcY)$ holds. Take as contingent claims $f^1=(1,1,0,0,0,0)$ and $f^2=(0,0,0,0,-1,-1)$, $f=(f^1,f^2)$. Using the set of collective martingale measures on page 49 \cite{BDFFM25},  we compute $E_{Q^1}[f^1]=(1/2)q$, $E_{Q^2}[f^2]=-(1/2)q$, with  $0< q< 1$, thus $E_{Q^1}[f^1]+E_{Q^2}[f^2]=0$ for all $\mathbf Q \in \Me$ and $f$ is $\mcY$-collectively replicable by Proposition \ref{uguali}. Moreover, using Theorem \ref{BaseFTAPII} $$\Pi(f)=\bigg\{\bigg(\frac{1}{2}q,-\frac{1}{2}q\bigg) \mid 0<q<1 \bigg\} \subset \mathbb R^2$$ is not a singleton.
\end{example}

This is a departure from the classical theory. Indeed, a single classically replicable contingent claim $f^i$ can be uniquely written as $f^i=x^i+k^i $, $x^i \in \mathbb R, k^i \in K_i, \, $even if $k^i $ may be obtained via multiple strategies. For \emph{any} equivalent martingale measure $Q \in M^{i,\varphi^i}_e=\{Q^i \in M^{i}_e \text{ s.t. } \varphi^i\in L^1(\Omega,\mathcal{F}^i_T,Q^i) \}$, one has $E_{Q}[k^i]=0$ and thus in the classical case $E_{Q}[f^i]=x^i+E_{Q}[k^i]=x^i$ is a constant value, namely the unique price of $f^i$. While, even when $\mcY$ is a vector space,  a collectively $\mcY$-replicable claim $f=( f^1,\dots,  f^N) \in L^{0}(\Omega, \mathbf{F}_T, P)$   can be obtained in different ways, namely it can be written as $f=x_1+k_1+Y_1=x_2+k_2+Y_2$ for the elements belonging to the obvious sets. For $Q_1 \in \mathcal{M}^\varphi_e(\mcY)$ and $Q_2 \in \mathcal{M}^\varphi_e(\mcY)$, it is still true that $E_{Q_1^i}[k_1^i]=0$ and $E_{Q_2^i}[k_2^i]=0$, but it is possible that, for some $i$, $E_{Q_1^i}[Y_1^i] \neq E_{Q_2^i}[Y_2^i]$ for $Y_1 \in \mcY$ and $Y_2 \in \mcY$, despite $\sum_{i=1}^N E_{Q_1^i}[Y_1^i]=\sum_{i=1}^N E_{Q_2^i}[Y_2^i]=0$. Thus it may happen, as in Example \ref{ex1}, that $$E_{Q_1^i}[f^i]=x_1^i+E_{Q_1^i}[Y_1^i] \neq x_2^i+E_{Q_2^i}[Y_2^i]=E_{Q_2^i}[f^i], $$  so that $\Pi( f)$ may not be reduced to a singleton.
An excessively large set of possible exchanges, $\mcY$, leads to this phenomenon. We will show (see Proposition \ref{propUnique}) that by applying the restricted set of exchanges $\widehat{\mcY}$ this issue is resolved.

\begin{definition}\label{defYrestricted} Let $\mcY \subseteq L^{0}(\Omega, \mathbf{F}_T, P)$ be a convex cone and consider the convex cone of \emph{restricted} exchanges  defined by
    \begin{equation}
    \label{hatYdef}
        \widehat{\mcY}:=\left \{ Y \in  \mcY \mid E_{Q^i}[Y^i]=0 \,\, \forall \, i=1,\dots,N \text{ and } \forall \, Q \in \Me \right \} \subseteq \mcY.
    \end{equation}
\end{definition}

\noindent Observe that $0 \in \widehat{\mcY}$, so that $\widehat{\mcY} \neq\emptyset$. Similarly to Definition \ref{defrep}, we consider

\begin{definition}
 The vector of contingent claims $ f=( f^1,\dots,  f^N) \in L^{0}(\Omega, \mathbf{F}_T, P)$    is $\widehat{\mcY}$-collectively replicable (or strongly collectively replicable) if $ f\in \R^N+ {\sf X}_{i=1} ^{N} K_i+ \widehat{\mcY}  $.
\end{definition}
Since $\{0\}\subseteq \widehat{\mcY} \subseteq \mcY$, obviously classical replicability (when no exchanges are admitted) is a stronger requirement than $\widehat{\mcY}$-replicability, which is a stronger requirement than $\mcY$-replicability and thus
\begin{equation}\label{eqRhoRho}
\rho^{{N}}_+\geq \rho^{\widehat{\mcY}}_+\geq \rho^{{\mcY}}_+. 
\end{equation}
As shown in the Example \ref{example22} in Section \ref{example1}, strict inequalities in \eqref{eqRhoRho} may hold, even when $\mcY$ is a vector space.
Observe also that $\mathbb R ^N_0 $ is never included in $\widehat{\mcY} $, except if $N=1$ where $\mathbb R ^N_0=\{0\} $ and except when $\Me = \emptyset$. Additionally, it always holds
\begin{align}
    \label{eqalityplusrn0}
    \R^N+ {\sf X}_{i=1} ^{N} K_i+ \widehat{\mcY}=&\R^N+ {\sf X}_{i=1} ^{N} K_i+ \big(\widehat{\mcY}+\R^N_0\big), \\
    \R^N+ {\sf X}_{i=1} ^{N} K_i- \widehat{\mcY}=&\R^N+ {\sf X}_{i=1} ^{N} K_i- \big(\widehat{\mcY}+\R^N_0\big)
\end{align}
which implies that $\pm f$    is $\widehat{\mcY}$-collectively replicable if and only if it is $(\widehat{\mcY}+\R^N_0)$-collectively replicable.

Moreover, we know from \cite{BDFFM25} Lemma 4.4 Item 6 that
\begin{equation}\label{rhorho22}
\rho^{(\widehat{\mcY}+\R^N_0)}_+=\rho^{\widehat{\mcY}}_+.   
\end{equation}
We recall that by definition
\begin{equation}\label{MartingaleMeasuresBis}
\begin{split}
\mathcal{M}_e(\widehat{\mcY})=\bigg\{ \mathbf{Q}=(Q^1,\dots,Q^N) \in {\sf X}_{i=1} ^{N}\mie   \mid  \widehat{\mcY} \subseteq L^{1 }(\Omega, \mathbf{F}_T,Q)  \text { and  } \sum_{i=1}^N E_{Q^i}[Y^i] \leq 0 \;\forall\, Y \in \widehat{\mcY}  \bigg\}
\end{split}
\end{equation}
and we observe that  $\mathcal{M}_e(\widehat{\mcY}+\R^N_0)=\mathcal{M}_e(\widehat{\mcY}) \supseteq \mathcal{M}_e({\mcY})$.\\
Suppose that $\mcY$ satisfies Assumption \ref{ass:cone}. Then $\widehat{\mcY}+\R^N_0$ is a convex cone containing $\R^N_0$ and, since $(\widehat{\mcY}+\R^N_0) \subseteq \mcY$, then \textbf{NCA}$({\mcY})$ implies  
\textbf{NCA}$(\widehat{\mcY}+\R^N_0)$. In order to apply previous results, we show that  Assumption \ref{ass:cone} holds for the convex cone $\widehat{\mcY}+\R^N_0$, by proving that the latter is finitely generated.
  \begin{lemma}
  \label{lemma:hatyfinitegen}
     Suppose $\mcY$ is a finitely generated convex cone. Then  $\widehat{\mcY}$ given in \eqref{hatYdef} is a finitely generated convex cone too. In particular, if $\mcY$ satisfies Assumption \ref{ass:cone}, so does  $\widehat{\mcY}+\R^N_0$.
 \end{lemma}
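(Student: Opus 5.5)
The plan is to reduce everything to finite-dimensional polyhedral geometry. Write $\mcY=\mathrm{cone}(\{Y_1,\dots,Y_R\})$ and consider the linear map $A:\R^R\to\mcY$, $A(\alpha)=\sum_m\alpha_m Y_m$, so that $\mcY=A(\R^R_+)$.

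If $\Me=\emptyset$, then $\widehat{\mcY}=\mcY$ trivially and there is nothing to prove. Otherwise, for each $\mathbf Q\in\Me$ and each $i$, the map $\alpha\mapsto E_{Q^i}[(A\alpha)^i]=\sum_m\alpha_m E_{Q^i}[Y_m^i]$ is a linear functional on $\R^R$; it is well defined since $\mcY\subseteq L^1(\Omega,\mathbf F_T,\mathbf Q)$. Hence
\[
\widehat{\mcY}=A\big(\R^R_+\cap V\big),\qquad V:=\bigcap_{\mathbf Q\in\Me}\bigcap_{i=1}^N\Big\{\alpha\in\R^R \,\Big|\, \textstyle\sum_m\alpha_m E_{Q^i}[Y^i_m]=0\Big\}.
\]
The inclusion $\supseteq$ is immediate. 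For $\subseteq$, take $Y\in\widehat{\mcY}$ and write $Y=A\alpha$ with $\alpha\ge0$. Then $E_{Q^i}[Y^i]=0$ for every $\mathbf Q$ and $i$, so $\alpha\in V$. The point is that the defining condition depends only on $Y$, not on the chosen representation $\alpha$: any representing $\alpha\geq 0$ lies in $V$, because the functional evaluates $A\alpha=Y$.

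Next, $V$ is an intersection of (possibly infinitely many) linear hyperplanes in $\R^R$, hence a linear subspace of $\R^R$. Finite dimensionality then lets us replace this family by finitely many linear equations. Thus $\R^R_+\cap V$ is a polyhedral cone, defined by finitely many linear inequalities and equalities. By the Minkowski--Weyl theorem it is finitely generated, say $\R^R_+\cap V=\mathrm{cone}(\{a_1,\dots,a_S\})$. The image of a finitely generated cone under a linear map is finitely generated, so
\[
\widehat{\mcY}=\mathrm{cone}(\{A a_1,\dots,A a_S\}).
\]
The main obstacle is exactly this step: handling the infinite family of constraints indexed by $\Me$, and making sure the representation-independence argument above is correct. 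Both are resolved by the finite-dimensionality of $\R^R$.

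For the last claim, $\widehat{\mcY}+\R^N_0=\mathrm{cone}(\{Aa_1,\dots,Aa_S\}\cup\{e_i-e_j\}_{i,j})$ is finitely generated and contains $\R^N_0$. It also satisfies $\widehat{\mcY}+\R^N_0\subseteq\mcY\subseteq L^0(\Omega,\mathbf F_T,P)$, because $\R^N_0\subseteq\mcY$ and $\mcY$ is a convex cone. Hence $\widehat{\mcY}+\R^N_0$ satisfies Assumption \ref{ass:cone}.
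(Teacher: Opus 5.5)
Your proof is correct and follows essentially the paper's route: both intersect $\mcY$ with the linear subspace cut out by the functionals $Y\mapsto E_{Q^i}[Y^i]$, $\mathbf Q\in\Me$, $i=1,\dots,N$, and then invoke Minkowski--Weyl. The one difference is that you work in coefficient space $\R^R$ and push forward by $A$, whose codomain should be $\mathrm{span}(\{Y_1,\dots,Y_R\})$ rather than $\mcY$. The paper instead works directly in $\mathrm{span}(\{Y_1,\dots,Y_R\})$ and uses that $\mcY$ is polyhedral there. Your explicit handling of the case $\Me=\emptyset$ and of the final claim about $\widehat{\mcY}+\R^N_0$ fills in details that the paper leaves implicit.
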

 \begin{proof}

Since \(\mathcal Y\) is finitely generated, \(
\mathcal Y=\mathrm{cone}(\{Y_1,\ldots,Y_R\})
\) for some
\(Y_1,\ldots,Y_R\in L^0(\Omega, \mathbf{F}_T, P) \).
Let
\(
V:=\mathrm{span}(\{Y_1,\ldots,Y_R\}).
\)
Then \(V\) is a finite-dimensional vector space and
$
\mathcal Y\subseteq V.
$
For every \(\mathbf Q=(Q^1,\ldots,Q^N)\in \mathcal M_e(\mathcal Y)\) and every
\(i=1,\ldots,N\), define the linear functional
$\ell_{Q,i}:V\to \mathbb R,
\ell_{Q,i}(Y):=E_{Q^i}[Y^i].$
Then, by definition,
\[
\widehat{\mathcal Y}
=
\mathcal Y\cap 
\bigcap_{\mathbf Q\in \mathcal M_e(\mathcal Y)}
\bigcap_{i=1}^N
\ker(\ell_{Q,i}).
\]
Since \(
\bigcap_{\mathbf Q\in \mathcal M_e(\mathcal Y)}
\bigcap_{i=1}^N
\ker(\ell_{Q,i})\) is an intersection of linear subspaces of the finite-dimensional
space \(V\), it is itself a finite dimensional linear subspace of \(V\). In particular, it is a finitely generated convex cone and by the Minkowski--Weyl Theorem (\cite{Ro70} Theorem 19.1), it is a polyhedral cone.
At the same time \(\mathcal Y\) too is a finitely generated cone in the finite-dimensional
space \(V\), hence a polyhedral cone by the same argument. The intersection of two polyhedral cones is still polyhedral by definition, so
$\widehat{\mathcal Y}$
is again a polyhedral cone.
By \cite{Ro70} Theorem 19.1  every polyhedral cone in a finite-dimensional
space is finitely generated. Therefore \(\widehat{\mathcal Y}\) is finitely
generated.
\end{proof}

We are now ready to show that using the restricted set $\widehat{\mcY}$ of possible exchanges the set $\Pi (f)$, of \textbf{NCA}$(\mcY)$-price of a 
$\widehat{\mcY}$-collectively replicable claim $f$, is a singleton. Example \ref{example42} demonstrates that the assumption $\mathbf{NCA}(-\mcY)$ cannot be dispensed with.

\begin{proposition}\label{propUnique}
  Suppose  Assumption \ref{ass:cone}, $\mathbf{NCA}(\mcY)$ and $\mathbf{NCA}(-\mcY)$ hold true and let $ f \in L^{0}(\Omega, \mathbf{F}_T, P)$. Then the following are equivalent:
  \begin{enumerate}
      \item Both $ f $ and $(-f)$    are $\widehat{\mcY}$-collectively replicable;
    \item   $\Pi(f)$ is a singleton;
      \item  $\rho^{\widehat{\mcY}}_+(f)=\rho^{\widehat{\mcY}}_-(f).$    
  \end{enumerate}
\end{proposition}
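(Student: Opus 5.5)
The plan is to reduce everything to results already available for the cone $\widehat{\mcY}':=\widehat{\mcY}+\R^N_0$ and for $\mcY$ itself. By Lemma \ref{lemma:hatyfinitegen}, $\widehat{\mcY}'$ satisfies Assumption \ref{ass:cone}. Since $\pm\widehat{\mcY}'\subseteq\pm\mcY$, both $\mathbf{NCA}(\widehat{\mcY}')$ and $\mathbf{NCA}(-\widehat{\mcY}')$ follow from $\mathbf{NCA}(\mcY)$ and $\mathbf{NCA}(-\mcY)$. Moreover, \eqref{eqalityplusrn0} shows that $\pm f$ is $\widehat{\mcY}$-replicable iff it is $\widehat{\mcY}'$-replicable, and \eqref{rhorho22} together with \eqref{rem:rhopm} gives $\rho^{\widehat{\mcY}'}_\pm=\rho^{\widehat{\mcY}}_\pm$. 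Hence Proposition \ref{diversi}, applied to $\widehat{\mcY}'$ (item 1 for one direction, item 2 for the other), yields $1\Leftrightarrow 3$ directly.

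For $1\Rightarrow 2$, only replicability of $f$ is needed. I would write $f=x+k+Y$ with $Y\in\widehat{\mcY}$ and fix $\varphi$ dominating $|f^i|$ and $|Y^i_m|$. For $\mathbf Q\in\mathcal M^\varphi_e(\mcY)$, both $f$ and $Y$ are $\mathbf Q$-integrable, hence so is $k$, and therefore $E_{Q^i}[k^i]=0$. The definition of $\widehat{\mcY}$ gives $E_{Q^i}[Y^i]=0$. Thus $E_{Q^i}[f^i]=x^i$ for every $\mathbf Q$, and by \eqref{44} we get $\Pi(f)=\{x\}$.

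For $2\Rightarrow 1$, suppose $\Pi(f)=\{p\}$. By \eqref{44}, $\sum_iE_{Q^i}[f^i]=\sum_ip^i$ for all $\mathbf Q\in\mathcal M^\varphi_e(\mcY)$, so the duality of Theorem \ref{duality} (applied to $f$ and, via \eqref{rem:rhopm}, to $-f$) gives $\rho^{\mcY}_+(f)=\rho^{\mcY}_-(f)=\sum_ip^i$. Proposition \ref{diversi} item 1 then gives $f=x+k+Y=\tilde x+\tilde k-\tilde Y$, and Corollary \ref{corexp0} gives $\sum_ix^i=\sum_i\tilde x^i=\sum_ip^i$. Taking expectations componentwise under $\mathbf Q\in\mathcal M^\varphi_e(\mcY)$ gives $E_{Q^i}[Y^i]=p^i-x^i$, independent of $\mathbf Q$. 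I would then set $Y':=Y+(x-p)$. Since $x-p\in\R^N_0\subseteq\mcY$, we have $Y'\in\mcY$, $f=p+k+Y'$, and $E_{Q^i}[Y'^i]=0$ for all $\mathbf Q\in\mathcal M^\varphi_e(\mcY)$. The same adjustment $\tilde Y':=\tilde Y-(\tilde x-p)$ handles the representation of $-f$.

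The main obstacle is that membership in $\widehat{\mcY}$ requires $E_{Q^i}[Y'^i]=0$ for all $\mathbf Q\in\Me$, not just on the smaller set $\mathcal M^\varphi_e(\mcY)$ (with $\varphi\ge|f|$). I would get around this with Theorem \ref{BaseFTAPII} applied to the claim $Y'$ itself, using that $\Pi(Y')$ does not depend on the dominating function. With $\varphi$ as above (which also dominates $|Y'^i|$ up to constants), \eqref{44} gives $\Pi(Y')=\{0\}$. With $\varphi'^i:=\max_m|Y^i_m|$, which dominates $Y'$ up to constants (harmless after a rescaling), the remark following \eqref{Mfi} gives $\mathcal M^{\varphi'}_e(\mcY)=\Me$, so \eqref{44} gives $\Pi(Y')=\{(E_{Q^i}[Y'^i])_i:\mathbf Q\in\Me\}$. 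Comparing the two representations forces $E_{Q^i}[Y'^i]=0$ for all $\mathbf Q\in\Me$, i.e.\ $Y'\in\widehat{\mcY}$, and likewise $\tilde Y'\in\widehat{\mcY}$. This proves item 1.
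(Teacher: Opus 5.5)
Your proof is correct. It follows the paper's proof everywhere except at the one delicate step, where you use a different tool.

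\textbf{Where you match the paper.} The implications $1\Rightarrow 2$ and $1\Leftrightarrow 3$ are the paper's arguments. For $1\Leftrightarrow 3$ you are more precise than the paper in noting that both items of Proposition~\ref{diversi} are needed. The first half of $2\Rightarrow 1$ is also the same: duality gives $\rho^{\mcY}_+(f)=\rho^{\mcY}_-(f)$, Proposition~\ref{diversi} Item 1 gives both representations, and Corollary~\ref{corexp0} gives $x-p\in\R^N_0$. Your $Y'=Y+(x-p)$ is exactly the paper's $\widehat Y=Y-\widehat x$.

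\textbf{Where you differ: passing from $\mathcal M^\varphi_e(\mcY)$ to $\Me$.}
\begin{itemize}
\item The paper introduces, for each $j$, an auxiliary claim $g^j$ carrying $Y^j-\widehat x^j$ in the $j$-th component and $0$ elsewhere. It uses that $\rho^{\mcY}_\pm(g^j)$ are defined without reference to $\varphi$. Evaluating them via Theorem~\ref{duality} once with a $\varphi'$ (where the expectations are known to vanish) and once with a $\widetilde\varphi$ satisfying $\mathcal M^{\widetilde\varphi}_e(\mcY)=\Me$, it gets $\sup=\inf=0$ over $\Me$.
\item You instead use that $\Pi(Y')$ is defined intrinsically. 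The two representations \eqref{44} of $\Pi(Y')$ must therefore coincide. These use the two dominating functions, rescaled and shifted by constants, which leaves the corresponding sets $\mathcal M^{\cdot}_e(\mcY)$ unchanged.
\end{itemize}
Your version handles all components at once, with no auxiliary claims and no sup/inf sandwich, so it is somewhat shorter. Both arguments rest on the same principle: a $\varphi$-free object that admits a representation over $\mathcal M^\varphi_e(\mcY)$ for every admissible $\varphi$.

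\textbf{Handling $-f$.} The paper avoids repeating the argument. It uses the identity $\tilde Y+(x+\widehat x-\tilde x)=-(k-\tilde k)-\widehat Y$ together with the already established $\widehat Y\in\widehat{\mcY}$, and reads off zero expectations under every $\mathbf Q\in\Me$ directly. You rerun the $\Pi$-comparison for $\tilde Y'$, which is the same element as the paper's. Both routes work.
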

\begin{proof}
($1.\Rightarrow 2.$).  Let $ f\in L^{0}(\Omega, \mathbf{F}_T,P)$ be $\widehat{\mcY}$-collectively replicable. Then we have: $f=x+k+Y$ with $x \in \mathbb R^N$, $k \in {\sf X}_{i=1} ^{N} K_i$ and $Y \in \widehat{\mcY}$. Take any $\mathbf{Q} \in \mathcal{M}^\varphi_e(\mcY)$ and compute $E_{Q^i}[f^i]=x^i+E_{Q^i}[k^i]+E_{Q^i}[Y^i]=x^i,$ by definition of $\widehat{\mcY}$. Then from Theorem \ref{BaseFTAPII} $\Pi(f)$ is a singleton.

($1.\Leftarrow 2.$): If $\Pi(f)$ is a singleton, by Theorem \ref{BaseFTAPII} we have for some $\varphi^i$ s.t. $\max \{|f^i|,|Y^i_1|,\ldots, |Y^i_R|\}\leq \varphi^i$ and some 
 $\widehat {\mathbf{Q}}\in \mathcal{M}^\varphi_e(\mcY)$. 
$$\Pi( f) = \left\{ (E_{Q^1}[   f^1],\dots, E_{Q^N}[   f^N]) \mid \mathbf{Q} \in \mathcal{M}^\varphi_e(\mcY)\right\} = \{(E_{\widehat Q^1}[   f^1],\dots, E_{\widehat  Q^N}[   f^N])\}.$$
Note that consequently $\Pi(-f) =  \{(E_{\widehat Q^1}[-f^1],\dots, E_{\widehat  Q^N}[-f^N])\}$.
Using \eqref{rem:rhopm} and \eqref{pricing:hedging}  we have  $
  \rho^{\mcY}_+(f)=\sum_{i=1}^N E_{\widehat Q^i}[f^i]=\rho^{\mcY}_-(f)$. Then by Proposition \ref{diversi} Item 1, both $f$ and  $(- f) $ are $\mcY$-collectively replicable.

Hence, 
\begin{equation}\label{Eqfxky}
 f=x+k+Y=\tilde x+\tilde k -\tilde Y  
\end{equation}
 with $x,\tilde x \in \mathbb R^N$, $k,\tilde k \in {\sf X}_{i=1} ^{N} K_i$ and $Y,\tilde Y \in \mcY$. We now show that actually we can take $Y\in\widehat{\mcY}$ in the previous expression for $f$. Take $\mathbf{Q} \in \mathcal{M}^\varphi_e(\mcY)$ and  $\mathbf{R} \in \mathcal{M}^\varphi_e(\mcY)$ and compute $E_{Q^i}[f^i]=x^i+E_{Q^i}[k^i]+E_{Q^i}[Y^i]=x^i+E_{Q^i}[Y^i]$ and $E_{R^i}[f^i]=x^i+E_{R^i}[k^i]+E_{R^i}[Y^i]=x^i+E_{R^i}[Y^i]$. Since $\Pi(f)$ is a singleton, we must have $E_{Q^i}[f^i]=E_{R^i}[f^i]$ so that $E_{Q^i}[Y^i]=E_{R^i}[Y^i]$. Consequently, $E_{Q^i}[Y^i]:=\widehat x^i$ is independent from $\mathbf{Q} \in \mathcal{M}^\varphi_e(\mcY)$. 
 % We can then write $f^i=x^i+k^i+Y^i=(x^i+\widehat x^i)+k^i+(Y^i-\widehat x^i)$. 
  From \eqref{Eqfxky}, using Corollary \ref{corexp0}, $0=\sum_{i=1}^N E_{Q_i}[Y^i]=\sum_{i=1}^N \widehat{x}^i$.
  Thus $\widehat{x} \in \R^N_0$ and
  $Y+(-\widehat{x})\in \mcY+\R^N_0=\mcY$. 
  It remains to prove that
$E_{Q^i}[Y^i-\widehat{x}^i]=0
\, \text{ for every } i \text{ and every } \mathbf Q\in\mathcal M_e(\mathcal Y),
$
since, at this stage, we only know that this equality holds for every
$
\mathbf Q\in\mathcal M_e^\varphi(\mathcal Y).
$
Once this is established, it follows that
$
f=x+k+Y=(x+\widehat{x})+k+\widehat{Y},
$
where
$
\widehat{Y}:=Y-\widehat{x}\in\widehat{\mathcal Y}.
$
Therefore, \(f\) is \(\widehat{\mathcal Y}\)-collectively replicable.
  
  We now prove that $E_{Q^i}[Y^i-\widehat{x}^i]=0$ for every $i$ and $\mathbf{Q}\in \mathcal{M}_e(\mcY)$. 
  Let us fix $j\in\{1,\dots,N\}$ and define $g^j:=Y^j-\widehat{x}^j, g^i=0$ for $i\neq j$, $i=1,\dots, N$. 
  Set
$
\varphi^{\prime i}
:=
\varphi^i+|Y^i-\widehat x^i|,
\, i=1,\ldots,N .
$
Then \(\varphi'\) still dominates \(|f|\) and the generators of \(\mathcal Y\), and it also
dominates \(|Y-\widehat x|\). Moreover,
\(
\mathcal M^{\varphi'}_e(\mathcal Y)
\subseteq
\mathcal M^\varphi_e(\mathcal Y),
\)
and hence, by the preceding argument,
$
E_{Q^i}[Y^i-\widehat x^i]=0$
for every $i=1,\ldots,N$
 and every $\mathbf{Q}\in\mathcal M^{\varphi'}_e(\mathcal Y)$.  
  By Theorem
  \ref{duality} and \eqref{rem:rhopm} we see that 
  \begin{align*}
      \rho^{{\mcY}}_+(g^j)&=\sup \left \{\sum_{i=1}^N E_{Q^i}[g^i]\mid \mathbf{Q}\in \mathcal{M}^{{\varphi'}}_e(\mcY) \right \}=\sup \left \{E_{Q^j}[Y^j-\widehat{x}^j]\mid \mathbf{Q}\in \mathcal{M}^{{\varphi'}}_e(\mcY) \right \}=0\\
      \rho^{{\mcY}}_-(g^j)&=\inf \left \{\sum_{i=1}^N E_{Q^i}[g^i]\mid \mathbf{Q}\in \mathcal{M}^{{\varphi'}}_e(\mcY) \right \}=\inf\left \{E_{Q^j}[Y^j-\widehat{x}^j]\mid \mathbf{Q}\in \mathcal{M}^{{\varphi'}}_e(\mcY) \right \}=0.
  \end{align*}
  Take now $\widetilde{\varphi}^i=|Y^i-\widehat{x}^i|+\max\{\abs{Y^i_1},\dots \abs{Y^i_R}\}$ so that $\widetilde{\varphi}^i\geq\max \{|g^i|,|Y^i_1|,\ldots, |Y^i_R|\}$ for every $i=1,\dots,N$. Note that $M_e^{\widetilde{\varphi}}(\mcY)=M_e(\mcY)$. Again, by Theorem
  \ref{duality} and \eqref{rem:rhopm}
  \begin{align*}
      0&=\rho^{{\mcY}}_+(g^j)=\sup \left \{\sum_{i=1}^N E_{Q^i}[g^i]\mid \mathbf{Q}\in \mathcal{M}^{\widetilde{\varphi}}_e(\mcY) \right \}=\sup \left \{E_{Q^j}[Y^j-\widehat{x}^j]\mid \mathbf{Q}\in \mathcal{M}^{\widetilde{\varphi}}_e(\mcY) \right \}\\
      &\geq \inf\left \{E_{Q^j}[Y^j-\widehat{x}^j]\mid \mathbf{Q}\in \mathcal{M}^{\widetilde{\varphi}}_e(\mcY) \right \}=\inf \left \{\sum_{i=1}^N E_{Q^i}[g^i]\mid \mathbf{Q}\in \mathcal{M}^{\widetilde{\varphi}}_e(\mcY) \right \}=\rho^{{\mcY}}_-(g^j)=0.
  \end{align*}

  Then $$0=\inf\left \{E_{Q^j}[Y^j-\widehat{x}^j]\mid \mathbf{Q}\in \mathcal{M}^{\widetilde{\varphi}}_e(\mcY) \right \}=\sup\left \{E_{Q^j}[Y^j-\widehat{x}^j]\mid \mathbf{Q}\in \mathcal{M}^{\widetilde{\varphi}}_e(\mcY) \right \}=0$$
  which yields $E_{Q^j}[Y^j-\widehat{x}^j]=0$ for every $\mathbf{Q}\in \mathcal{M}^{\widetilde{\varphi}}_e(\mcY) =\mathcal{M}_e(\mcY) $, $j=1,\dots,N$.

 We finally show that also $-f$ is $\widehat{\mcY}$-collectively replicable. Indeed, 
 $$-f=-\tilde x -\tilde k+\tilde{Y}$$
 and $$\tilde Y=-(x+k+Y)+(\tilde x+\tilde k )=-(x+\widehat x-\tilde x)-(k-\tilde{k})-\widehat Y.$$
 Hence $$\tilde{Y}+(x+\widehat x-\tilde x)=-(k-\tilde{k})-\widehat Y.$$
 From Corollary \ref{corexp0} and since $\widehat x \in \R^N_0$, we deduce that $(x+\widehat x-\tilde x)\in \R^N_0$. Hence $\tilde{Y}+(x+\widehat x-\tilde x)\in \mcY$.
Furthermore, since $\tilde Y, \widehat{Y} \in \mcY$ and $\mcY \in L^1(\Omega,\mathbf{F}_T,\mathbf{Q})$ for every $\mathbf{Q}\in \mathcal M_e(\mcY)$,  the latter equality shows that $(k-\tilde{k})\in L^1(\Omega,\mathbf{F}_T,\mathbf{Q})$ for all such $\mathbf Q$. Since $Q^i\in M^i_e$, we have $E_{Q^i}[k^i-\widetilde{k}^i]=0$. As also $E_{Q^i}[\widehat{Y}^i]=0$ for every $\mathbf{Q}\in \mathcal M_e(\mcY), \, i=1,\dots, N$ we get $E_{Q^i}[\widetilde{Y}^i+ x^i+\widehat x^i-\tilde x^i]=0$ for every $i=1,\dots, N$, $\mathbf{Q}\in \mathcal M_e(\mcY)$. Thus, $\tilde{Y}+(x+\widehat x-\tilde x)\in \widehat \mcY$.
Finally we can write
$$-f=-\Big(\tilde x+(x+\widehat x-\tilde x)\Big)  -\tilde k+\Big(\tilde{Y}+(x+\widehat x-\tilde x)\Big),$$
i.e. $-f$ is $\widehat{\mcY}$-collectively replicable.

 ($1.\Leftrightarrow 3.$): 
 From Lemma \ref{lemma:hatyfinitegen} we know that Assumption \ref{ass:cone} holds replacing $\mcY$ with $(\widehat{\mcY}+\R^N_0)$. As $\mathbf{NCA}(\widehat{\mcY}+\R^N_0)$ and $\mathbf{NCA}(-(\widehat{\mcY}+\R^N_0))$ hold true, we apply Proposition \ref{diversi} Item 2. to infer that 
  $${\pm} f\in \R^N+ {\sf X}_{i=1} ^{N} K_i+ \big(\widehat{\mcY}+\R^N_0\big)\quad \Longleftrightarrow \quad\rho^{(\widehat{\mcY}+\R^N_0)}_-(f)=\rho^{(\widehat{\mcY}+\R^N_0)}_+(f)$$
   The desired  equivalence then follows from $\rho^{(\widehat{\mcY}+\R^N_0)}_\pm(f)=\rho^{\widehat{\mcY}}_\pm(f)$, see \eqref{rhorho22} together with \eqref{eqalityplusrn0}.
\end{proof}

From Theorem \ref{duality} we deduce also the following result.
\begin{corollary}[Pricing-hedging duality for $\widehat{\mcY}$]\label{Propduality:R} 
Suppose  Assumption \ref{ass:cone} and $\mathbf{NCA}(\mcY)$ hold true and let $\mcY=\co(Y_1,\dots,Y_R)$.
Then, for any $f\in L^{0}(\Omega, \mathbf{F}_T,P)$ and for any $\varphi\in L^{0}(\Omega, \mathbf{F}_T,P)$  such that $\varphi^i$
dominates $|f^i|$, \(1\) and the absolute values of every $i$-component of every generator of $\widehat \mcY$ for every $i$,
we have $\mathcal{M}^{\varphi}_e(\widehat{\mcY}) \neq \emptyset$ and
\begin{equation}\label{pricing:hedgingBis}
  \rho^{\widehat{\mcY}}_+(f)=\sup \left \{\sum_{i=1}^N E_{Q^i}[f^i]\mid \mathbf{Q}\in \mathcal{M}^{\varphi}_e(\widehat{\mcY}) \right \}>-\infty. 
\end{equation}
  Moreover, when finite, $ \rho^{\widehat{\mcY}}_+(f) $ is attained for some $m \in \mathbb R^N.$
\end{corollary}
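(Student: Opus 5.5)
The plan is to apply Theorem \ref{duality} to the cone $\widehat{\mcY}+\R^N_0$ in place of $\mcY$, and then to transfer the conclusions back to $\widehat{\mcY}$.

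\textbf{Step 1: the cone $\widehat{\mcY}+\R^N_0$ satisfies the hypotheses of Theorem \ref{duality}.} By Lemma \ref{lemma:hatyfinitegen}, $\widehat{\mcY}$ is finitely generated, say $\widehat{\mcY}=\co(\widehat Y_1,\dots,\widehat Y_S)$. Then $\widehat{\mcY}+\R^N_0$ is generated by $\widehat Y_1,\dots,\widehat Y_S$ together with the deterministic vectors $e_i-e_j$, so it satisfies Assumption \ref{ass:cone}. Since $\widehat{\mcY}+\R^N_0\subseteq\mcY$, the condition $\mathbf{NCA}(\mcY)$ implies $\mathbf{NCA}(\widehat{\mcY}+\R^N_0)$.

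\textbf{Step 2: the integrability hypothesis of Theorem \ref{duality} holds for $\varphi$.} The generators $e_i-e_j$ have components bounded by $1$. The hypothesis that $\varphi^i$ dominates $1$, $|f^i|$ and the $i$-components of the $\widehat Y_s$ therefore gives exactly the domination required for the enlarged generating family. Theorem \ref{duality} then yields
\[
\mathcal M^\varphi_e(\widehat{\mcY}+\R^N_0)\neq\emptyset,
\qquad
\rho^{(\widehat{\mcY}+\R^N_0)}_+(f)=\sup\Big\{\sum_{i=1}^N E_{Q^i}[f^i]\ \Big|\ \mathbf Q\in\mathcal M^\varphi_e(\widehat{\mcY}+\R^N_0)\Big\}>-\infty,
\]
together with attainment of the infimum whenever it is finite.

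\textbf{Step 3: transfer back to $\widehat{\mcY}$.} By \eqref{rhorho22}, $\rho^{(\widehat{\mcY}+\R^N_0)}_+=\rho^{\widehat{\mcY}}_+$. For the dual side, the paper already observes that $\mathcal M_e(\widehat{\mcY}+\R^N_0)=\mathcal M_e(\widehat{\mcY})$. Both sets impose integrability of the $\widehat Y_s$, and constants are integrable under any probability measure. Intersecting both sides with the condition $E_{Q^i}[|\varphi^i|]<\infty$ gives $\mathcal M^\varphi_e(\widehat{\mcY}+\R^N_0)=\mathcal M^\varphi_e(\widehat{\mcY})$. Substituting this and \eqref{rhorho22} into Step 2 yields \eqref{pricing:hedgingBis} and the nonemptiness claim.

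\textbf{Step 4: attainment, the only point needing care.} Theorem \ref{duality} gives some $m\in\R^N$ with $\sum_i m^i=\rho^{\widehat{\mcY}}_+(f)$ and
\[
m+k+\widehat Y+z\ge f,\qquad k\in{\sf X}_{i=1}^N K_i,\ \widehat Y\in\widehat{\mcY},\ z\in\R^N_0 .
\]
The constant $z$ is absorbed into the price vector. Set $m':=m+z$; then $m'+k+\widehat Y\ge f$ with $\widehat Y\in\widehat{\mcY}$, and $\sum_i m'^i=\sum_i m^i$ because $z\in\R^N_0$. Hence $m'$ attains $\rho^{\widehat{\mcY}}_+(f)$. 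This is the same mechanism that underlies \eqref{eqalityplusrn0}.
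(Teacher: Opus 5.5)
Your proposal is correct and follows the same route as the paper: apply Theorem \ref{duality} to the finitely generated cone $\widehat{\mcY}+\R^N_0$ (via Lemma \ref{lemma:hatyfinitegen}), then use \eqref{rhorho22} and the identity $\mathcal M^\varphi_e(\widehat{\mcY}+\R^N_0)=\mathcal M^\varphi_e(\widehat{\mcY})$ to transfer the conclusions back to $\widehat{\mcY}$. Your Steps 2 and 4 usefully spell out two points the paper leaves implicit: the requirement $\varphi^i\ge 1$ is there to dominate the components of the generators $e_i-e_j$, and the $\R^N_0$-part must be absorbed into the price vector so that attainment holds with an exchange in $\widehat{\mcY}$ itself.
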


\begin{proof}
We know (Lemma \ref{lemma:hatyfinitegen}) that Assumption \ref{ass:cone} holds replacing $\mcY$ with $(\widehat{\mcY}+\R^N_0)$. By Theorem \ref{duality}, $\mathcal{M}^{\varphi}_e(\widehat{\mcY}+\R^N_0) \neq \emptyset$ so that $\mathcal{M}^{\varphi}_e(\widehat{\mcY}) \neq \emptyset$.
    Applying again Theorem \ref{duality} and \eqref{pricing:hedging}, replacing $\mcY$ with $(\widehat{\mcY}+\R^N_0)$, we thus get 
  \begin{align*}\label{pricing:hedgingbis}
  \rho^{\widehat{\mcY}}_+(f)&=\rho^{(\widehat{\mcY}+\R^N_0)}_+(f)=\sup \left \{\sum_{i=1}^N E_{Q^i}[f^i]\mid \mathbf{Q}\in \mathcal{M}^{\varphi}_e(\widehat{\mcY}+\R^N_0) \right \}\\
  &=\sup \left \{\sum_{i=1}^N E_{Q^i}[f^i]\mid \mathbf{Q}\in \mathcal{M}^{\varphi}_e(\widehat{\mcY}) \right \}.
  %\geq \sup \left \{\sum_{i=1}^N E_{Q^i}[f^i]\mid Q\in \mathcal{M}^{\varphi}_e(\mcY) \right \}=\rho^{\mcY}_+(f).
\end{align*}
   and the attainment follows again from Theorem \ref{duality}.
\end{proof}

%%%%%%%%%%%%%%%%%%%%%%%%%%%
\subsection{Collectively complete markets}

The following version of the Second Fundamental Theorem of Asset Pricing (CFTAP II) was proved in \cite{DFM25}, under the assumption of $\mcY$ being a vector space.

\begin{theorem}[CFTAP II, \cite{DFM25} Theorem 2.10]\label{completeTH} 
Assume $\mcY$ is a finite dimensional vector space containing $\R^N_0$ and for which $\mathbf{NCA}(\mcY)$ holds.  
The market is collectively complete if and only if  $\Me$  is a singleton, namely $$\text{every } f\in L^{0}(\Omega, \mathbf{F}_T,P) \text{ is }  \mcY\text{-collectively replicable}  \Longleftrightarrow \Me \text{ is a singleton}.$$
\end{theorem}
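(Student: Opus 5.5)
We prove the two implications separately, working with $\mcY$ a finite dimensional vector space containing $\R^N_0$ under $\mathbf{NCA}(\mcY)$. Here $\mathbf{NCA}(-\mcY)=\mathbf{NCA}(\mcY)$, every $\mcY$-replicable $f$ has $-f$ replicable, and Proposition \ref{uguali} identifies replicability with $\rho^{\mcY}_-(f)=\rho^{\mcY}_+(f)$. By Theorem \ref{IFTAP:cone}, $\Me\neq\emptyset$, so "singleton" means "at most one element".

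\emph{($\Leftarrow$)} Suppose $\Me=\{\mathbf Q\}$. For any $f\in L^0(\Omega,\mathbf F_T,P)$ choose $\varphi$ dominating $|f^i|$ and the generators. Then $\emptyset\neq\mathcal M^\varphi_e(\mcY)\subseteq\Me$ forces $\mathcal M^\varphi_e(\mcY)=\{\mathbf Q\}$. Theorem \ref{duality} together with \eqref{rem:rhopm} gives
\[
\rho^{\mcY}_+(f)=\sum_i E_{Q^i}[f^i]=\rho^{\mcY}_-(f),
\]
and these quantities are finite. Proposition \ref{uguali} then yields that $f$ is $\mcY$-collectively replicable, so the market is collectively complete.

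\emph{($\Rightarrow$)} Suppose every $f$ is replicable, and take $\mathbf Q,\mathbf R\in\Me$. We want $Q^i=R^i$ for each $i$. Fix $i$ and $A\in\mathcal F^i_T$. The main point is that replicability only controls aggregate prices, so a single claim is not enough; we test with $f=\mathbf 1_A e_i$ and also with $f=\mathbf 1_A(e_i-e_j)$. Each such $f$ is bounded, hence lies in $L^1(\mathbf Q)\cap L^1(\mathbf R)$. Applying Corollary \ref{corexp0} (valid here since $\mathbf{NCA}(-\mcY)$ holds and $-f$ is replicable), and using $\mathcal M^\varphi_e(\mcY)=\Me$ for $\varphi$ equal to $1$ plus the generator bound, we get that $\sum_i E_{Q^i}[f^i]$ is the same for all $\mathbf Q\in\Me$. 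For $f=\mathbf 1_A e_i$ this gives $Q^i(A)=R^i(A)$ directly.

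So the obstacle is just checking that the aggregate-price invariance from Corollary \ref{corexp0} applies with $\varphi$ chosen so that $\mathcal M^\varphi_e(\mcY)$ equals all of $\Me$. Since $f$ is bounded, $\varphi^i=1+\max_m|Y^i_m|$ works, because every $\mathbf Q\in\Me$ already integrates the generators. Hence $Q^i=R^i$ on $\mathcal F^i_T$ for all $i$, so $\mathbf Q=\mathbf R$ and $\Me$ is a singleton. Note that the vector-space structure of $\mcY$ enters only to guarantee that $-f$ is replicable and that $\mathbf{NCA}(-\mcY)$ holds. Without it, the statement can fail, which is what motivates the cone version in Theorem \ref{completeTHbis}.
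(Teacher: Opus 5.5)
Your proof is correct and is essentially the paper's argument for $1.\Leftrightarrow 2.$ in Theorem \ref{completeTHbis}, specialized to a vector space, where $\mathbf{NCA}(-\mcY)=\mathbf{NCA}(\mcY)$: for one direction you test with the claims $\mathbf 1_A e_i$ and take $\varphi^i=1+\max_m|Y^i_m|$ so that $\mathcal M^\varphi_e(\mcY)=\Me$, and for the other the singleton forces $\rho^{\mcY}_+(f)=\rho^{\mcY}_-(f)$, hence replicability. The only caveat is that you actually obtain $Q^i=R^i$ on $\mathcal F^i_T$, so the final step $\mathbf Q=\mathbf R$ needs $\mathcal F^i_T=\mathcal F$ (assumed in Theorem \ref{completeTHbis}, see its footnote) or an identification of the measures with their restrictions; also, the extra test claim $\mathbf 1_A(e_i-e_j)$ is unnecessary and need not even be admissible when $A\notin\mathcal F^j_T$.
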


When \(\mathcal Y\) is a convex cone satisfying Assumption \ref{ass:cone}, rather than a vector space, Proposition \ref{propUnique} shows that the condition \(\mathbf{NCA}(\mathcal Y)\) alone is not sufficient to obtain the desired results, and must be complemented by \(\mathbf{NCA}(-\mathcal Y)\).
Clearly,
\[
\mathbf{NCA}(\mathrm{span}(\mathcal Y))
\;\Longrightarrow\;
\mathbf{NCA}(\mathcal Y)
\text{ and }
\mathbf{NCA}(-\mathcal Y).
\]
However, the converse implication is false in general, as illustrated by the counterexample in Section \ref{example1}, Example \ref{example43}.
Under the completeness assumption on the market, however, the two conditions
$
\mathbf{NCA}(\mathcal Y)
\, \text{ and } \,
\mathbf{NCA}(-\mathcal Y)
$
are together equivalent to
$
\mathbf{NCA}(\mathrm{span}(\mathcal Y))
$.

\begin{lemma}\label{lemma:sticazzi} Suppose that Assumption \ref{ass:cone} holds and $L^0(\Omega, \mathbf{F}_T,P)=\R^N+{\sf X}_{i=1}^N K_i +\mcY$. Then  
\[\mathbf{NCA}(\mathcal Y)\text{ and } \mathbf{NCA}(\mathcal -\mcY) \Longleftrightarrow \mathbf{NCA}(\mathrm{span}(\mathcal Y))\] 
\end{lemma}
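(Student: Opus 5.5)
The implication ``$\Leftarrow$'' is immediate and does not use completeness. Since $\mcY\subseteq \mathrm{span}(\mcY)$ and $-\mcY\subseteq\mathrm{span}(\mcY)$, we have
\[
{\sf X}_{i=1}^N K_i\pm\mcY\subseteq {\sf X}_{i=1}^N K_i+\mathrm{span}(\mcY).
\]
So $\mathbf{NCA}(\mathrm{span}(\mcY))$ gives both $\mathbf{NCA}(\mcY)$ and $\mathbf{NCA}(-\mcY)$.

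For ``$\Rightarrow$'' I will argue by contradiction, using completeness to convert the ``negative part'' of an element of $\mathrm{span}(\mcY)$ into an element of $\mcY$ up to a constant and a trading gain. Since $\mcY$ is a convex cone, $\mathrm{span}(\mcY)=\mcY-\mcY$. Suppose then that $k+Y_1-Y_2=:g\geq 0$ with $g\neq 0$, where $k\in{\sf X}_{i=1}^N K_i$ and $Y_1,Y_2\in\mcY$.

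The key step is to apply the completeness hypothesis $L^0(\Omega,\mathbf F_T,P)=\R^N+{\sf X}_{i=1}^N K_i+\mcY$ to the vector $-Y_2$. This gives
\[
-Y_2=x+k'+Y_3,\qquad x\in\R^N,\ k'\in{\sf X}_{i=1}^N K_i,\ Y_3\in\mcY.
\]
Set $s:=\sum_{i=1}^N x^i$ and split $x=\frac{s}{N}\mathbf 1+z$ with $z\in\R^N_0\subseteq\mcY$. I then distinguish cases according to the sign of $s$.

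\textbf{Case $s\le 0$.} Substituting, we get
\[
(k+k')+(Y_1+Y_3+z)=g-\tfrac{s}{N}\mathbf 1\geq g\geq 0 .
\]
The left-hand side lies in ${\sf X}_{i=1}^N K_i+\mcY$, and the right-hand side is nonnegative and not identically zero, because $g\neq 0$ and $-\frac sN\ge 0$. This contradicts $\mathbf{NCA}(\mcY)$.

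\textbf{Case $s>0$.} Rearranging the representation of $-Y_2$ gives
\[
-k'-(Y_2+Y_3+z)=\tfrac{s}{N}\mathbf 1 .
\]
Here $-k'\in{\sf X}_{i=1}^N K_i$, since each $K_i$ is a vector space, and $-(Y_2+Y_3+z)\in-\mcY$. The right-hand side is a strictly positive constant vector, so this is a collective arbitrage for $-\mcY$, contradicting $\mathbf{NCA}(-\mcY)$.

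In both cases we reach a contradiction, so $\mathbf{NCA}(\mathrm{span}(\mcY))$ holds. The only delicate point is to notice that completeness should be applied to $-Y_2$ rather than to $g$ itself, and that the sign of the aggregate constant $s$ decides which of the two no-arbitrage conditions is violated. The normalisation of $x$ through $\R^N_0\subseteq\mcY$, which is available by Assumption \ref{ass:cone}, makes both cases work.
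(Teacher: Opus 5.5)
Your proof is correct, and it takes a genuinely different route from the paper's.

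\textbf{The paper's route.} The paper argues on the dual side. It picks $\mathbf Q\in\mathcal M_e(\mathcal Y)$ via Theorem~\ref{IFTAP:cone}. It then uses completeness to see that each $Y\in\mathcal Y$ is replicable together with $-Y$. Applying Corollary~\ref{corexp0} Item 2 gives $\sum_{i=1}^N E_{Q^i}[Y^i]=0$ for all $Y\in\mathcal Y$, which extends by linearity to $\mathrm{span}(\mathcal Y)$. It concludes with a second application of Theorem~\ref{IFTAP:cone}, this time to the finitely generated cone $\mathrm{span}(\mathcal Y)$.

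\textbf{Your route.} You stay on the primal side. Completeness applied to $-Y_2$, combined with the normalisation through $\mathbb R^N_0\subseteq\mathcal Y$, directly produces an arbitrage for $\mathcal Y$ or for $-\mathcal Y$. Your case $s>0$ is precisely the observation that $\mathbf{NCA}(-\mathcal Y)$ forces $\sum_i x^i\le 0$ in any representation of $-Y_2$. This is essentially the sign argument behind Proposition~\ref{diversi} Item 2, on which Corollary~\ref{corexp0} rests; after it, $\mathbf{NCA}(\mathcal Y)$ finishes the job.

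\textbf{What each buys.} Your argument is more elementary and more general. It uses neither the closedness result Proposition~\ref{C:closed} nor the Kreps--Yan separation behind the FTAP. It does not use finite generation or any integrability of $\mathcal Y$, so it holds for any convex cone containing $\mathbb R^N_0$. Like the paper's proof, it only needs the weaker hypothesis $-\mathcal Y\subseteq\mathbb R^N+{\sf X}_{i=1}^N K_i+\mathcal Y$.

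The paper's route yields more information in return. It shows that every $\mathbf Q\in\mathcal M_e(\mathcal Y)$ satisfies $\sum_{i} E_{Q^i}[Z^i]=0$ for all $Z\in\mathrm{span}(\mathcal Y)$, i.e.\ $\mathcal M_e(\mathcal Y)=\mathcal M_e(\mathrm{span}(\mathcal Y))$. From $\mathbf{NCA}(\mathrm{span}(\mathcal Y))$ alone one only gets $\mathcal M_e(\mathrm{span}(\mathcal Y))\neq\emptyset$. That is, however, all the step $2.\Rightarrow 4.$ of Theorem~\ref{completeTHbis} actually uses, since there $\mathcal M_e(\mathcal Y)$ is a singleton.
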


\begin{proof}
Assume $\mathbf{NCA}(\mathcal Y)\text{ and } \mathbf{NCA}(\mathcal -\mcY)$. By Theorem \ref{IFTAP:cone}, \(\mathbf{NCA}(\mcY)\) implies the existence of
$
\mathbf Q=(Q_1,\dots,Q_N)\in \mathcal M_e(\mcY)
$.
 Take any \(Y\in \mcY \subseteq L^{1}(\Omega, \mathbf{F}_T,\mathbf Q)\). By $\mcY$-collective completeness, $f=Y$ and $(-Y)$ are $\mcY$-replicable and thus 
by Corollary \ref{corexp0} Item 2 we get $\sum_{i=1}^N E_{Q_i}[Y^i]=0$. As this is true for any $Y \in \mcY$, by linearity
\[
\sum_{i=1}^N E_{Q_i}[Z^i]=0
\qquad\forall\,Z\in \mathrm{span}(\mcY),
\]
which ensures
$
\mathbf Q\in \mathcal M_e(\mathrm{span}(Y)).
$
The space $\mathrm{span}(Y)$ is itself a finitely generated convex cone containing $\R^N_0$, so Assumption \ref{ass:cone} is met also for $\mathrm{span}(Y)$. Applying again Theorem \ref{IFTAP:cone}, we infer that
$\mathbf{NCA}(\mathrm{span}(Y))$ holds.
   
\end{proof}

The following theorem strengthens Theorem \ref{completeTH} in two respects. First, it applies when $\mcY$ is merely a convex cone, rather than a vector space. Second, it provides a characterization of both collective completeness and \emph{strong collective completeness} in terms of the uniqueness of collective martingale measures.
Example \ref{example42} exhibits a market that is $\mcY$-collectively complete and satisfies $\mathbf{NCA}(\mcY)$, while
$
|\Me|=\infty
$
and
$
\rho_-^{\mcY}(f)<\rho_+^{\mcY}(f)
$
for some claim \(f\in L^{0}(\Omega,\mathbf{F}_T,P)\). Therefore, the assumption $\mathbf{NCA}(-\mcY)$ is essential.

\begin{theorem}[CFTAP II for a cone $\mcY$]\label{completeTHbis}Suppose  Assumption \ref{ass:cone}, $\mathbf{NCA}(\mcY)$ and $\mathbf{NCA}(-\mcY)$ hold true. Assume\footnote{This assumption simplifies the statement, but could be removed. If this is not assumed, the Theorem remains true if Item 2 and 3 are replaced by the condition: the set of restrictions $(Q^i|_{\mathcal{F}^i_T})_i$, as $\mathbf{Q}$ ranges in $\mathcal{M}_e(\mcY)$ (resp. $\mathcal{M}_e(\widehat{\mcY})$) is a singleton.}  that $\mathcal{F}^i_T=\mathcal{F}$ for every $i=1,\dots,N$.
Then the following are equivalent:
\begin{enumerate}
    \item The market is $\mcY$-collectively complete, namely every  $f\in L^{0}(\Omega, \mathbf{F}_T,P)$ is  $\mcY$-collectively replicable;
    %\item $L^{0}(\Omega, \mathbf{F}_T,P)=\R^N + {\sf X}_{i=1} ^{N} K_i+\mathcal Y = \R^N +{\sf X}_{i=1} ^{N} K_i+\text{span}(\mcY)$.
    \item $\Me$  is a singleton;
    \item  $\rho^{ \mcY}_+ =\rho^{ \mcY}_-$ on $L^{0}(\Omega, \mathbf{F}_T,P)$;
    \item The market is $\widehat{\mcY}$-collectively complete, namely every  $f\in L^{0}(\Omega, \mathbf{F}_T,P)$ is  $\widehat{\mcY}$-collectively replicable;
    \item $\mathcal{M}_e(\widehat{\mcY})$ is a singleton;
    \item $\rho^{ \widehat{\mcY}}_+=\rho^{ \widehat{\mcY}}_-$  on $L^{0}(\Omega, \mathbf{F}_T,P)$;
\end{enumerate}
    In any such case, $\Me=\mathcal{M}_e(\widehat{\mcY})$, $\rho^{ \widehat{\mcY}}_+=\rho^{ \mcY}_+$, $\rho^{ \widehat{\mcY}}_-=\rho^{ \mcY}_-$ and $\mathbf{NCA}(\mathrm{span}(\mathcal Y))$ holds.
\end{theorem}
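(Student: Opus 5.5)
Under $\mathbf{NCA}(\mcY)$ and $\mathbf{NCA}(-\mcY)$ we have $\mathbf{NCA}(\widehat{\mcY}+\R^N_0)$ and $\mathbf{NCA}(-(\widehat{\mcY}+\R^N_0))$, since $\widehat{\mcY}+\R^N_0\subseteq\mcY$. By Lemma \ref{lemma:hatyfinitegen} this cone also satisfies Assumption \ref{ass:cone}. So every tool in Section \ref{completeness} is available both for $\mcY$ and for $\widehat{\mcY}+\R^N_0$, and we may use \eqref{eqalityplusrn0}, \eqref{rhorho22} and $\mathcal M_e(\widehat{\mcY}+\R^N_0)=\mathcal M_e(\widehat{\mcY})\supseteq\Me$. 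The plan is to establish two loops, $1\Leftrightarrow3\Leftrightarrow2$ and $4\Leftrightarrow6\Leftrightarrow5$, and then to link them through $4\Rightarrow1$ and $2\Rightarrow4$.

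\textbf{The loop for $\mcY$.} $(1\Rightarrow3)$: if every $f$ is $\mcY$-replicable, so is $-f$, and Proposition \ref{diversi} Item 2 gives $\rho^{\mcY}_+(f)=\rho^{\mcY}_-(f)$. $(3\Rightarrow1)$: this follows from Proposition \ref{diversi} Item 1. $(2\Rightarrow3)$: if $\Me=\{\mathbf Q\}$, then $\mathcal M^\varphi_e(\mcY)=\{\mathbf Q\}$ for any admissible $\varphi$. Theorem \ref{duality} together with \eqref{rem:rhopm} then gives $\rho^{\mcY}_\pm(f)=\sum_i E_{Q^i}[f^i]$; here finiteness is guaranteed because $\varphi$ dominates $|f|$. $(3\Rightarrow2)$: for a fixed $j$ and $A\in\mathcal F$, take $f=\mathbf 1_A e_j$, which is bounded. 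Theorem \ref{duality} with $\varphi^i=1+\max_m|Y^i_m|$ gives $\mathcal M^\varphi_e(\mcY)=\Me$. Equality of the sup and the inf forces $Q^j(A)$ to be the same for every $\mathbf Q\in\Me$. Because $\mathcal F^j_T=\mathcal F$, each component $Q^j$ is determined, so $\Me$ is a singleton; it is nonempty by Theorem \ref{IFTAP:cone}.

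\textbf{The loop for $\widehat{\mcY}$.} The same argument applied to $\widehat{\mcY}+\R^N_0$ gives $4\Leftrightarrow6\Leftrightarrow5$, using \eqref{eqalityplusrn0} and \eqref{rhorho22}. Linking the loops, $(4\Rightarrow1)$ is immediate since $\widehat{\mcY}\subseteq\mcY$. For $(2\Rightarrow4)$: if $\Me$ is a singleton, then $\Pi(f)$ is a singleton for every $f$ by Theorem \ref{BaseFTAPII}, and Proposition \ref{propUnique} ($2\Rightarrow1$) shows that $f$ is $\widehat{\mcY}$-replicable. Hence all six items are equivalent.

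\textbf{Final assertions.} Under 2 and 5, $\Me\subseteq\mathcal M_e(\widehat{\mcY})$ with both sets singletons, so they coincide. The duality formulas (Theorem \ref{duality} and Corollary \ref{Propduality:R}) then give $\rho^{\widehat{\mcY}}_\pm=\rho^{\mcY}_\pm$, after taking $\varphi$ large enough to dominate the generators of both cones. Finally, $\mathbf{NCA}(\mathrm{span}(\mcY))$ follows from Lemma \ref{lemma:sticazzi}, because item 1 gives $L^0=\R^N+{\sf X}_i K_i+\mcY$.

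\textbf{Main obstacle.} The delicate step is $(3\Rightarrow2)$ and its analogue $(6\Rightarrow5)$. One must choose $\varphi$ so that $\mathcal M^\varphi_e$ is the whole $\Me$ (respectively $\mathcal M_e(\widehat{\mcY}+\R^N_0)$); indicator claims together with generator-dominating $\varphi$ achieve this. One must also use the assumption $\mathcal F^i_T=\mathcal F$ to pass from equality of $Q^j(A)$ for all $A\in\mathcal F^j_T$ to uniqueness of $Q^j$ itself.
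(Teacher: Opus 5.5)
Your proof is correct. Its two internal loops are the paper's argument rearranged. The paper proves $1\Leftrightarrow2$ directly, using the indicator claims $\mathbf 1_A e_j$ together with Proposition \ref{diversi} and Theorem \ref{duality}, and proves $1\Leftrightarrow3$ via Proposition \ref{diversi}. You route $1\Leftrightarrow2$ through item 3 using exactly the same ingredients, and both arguments obtain $4\Leftrightarrow5\Leftrightarrow6$ by substituting $\widehat{\mcY}+\R^N_0$ for $\mcY$.

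The genuine difference is how the two loops are joined. For $2\Rightarrow4$ the paper argues by hand. Knowing $2\Rightarrow1$, Lemma \ref{lemma:sticazzi} gives $\mathbf{NCA}(\mathrm{span}(\mcY))$, so $\emptyset\neq\mathcal M_e(\mathrm{span}(\mcY))\subseteq\Me=\{\mathbf Q\}$. Hence $\sum_i E_{Q^i}[Y^i]=0$ for all $Y\in\mcY$, and any representation $f=x+k+Y$ can be recentred as $(x+E_{\mathbf Q}[Y])+k+(Y-E_{\mathbf Q}[Y])$ with $Y-E_{\mathbf Q}[Y]\in\widehat{\mcY}$.

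You instead note that $\Pi(f)$ is a singleton by Theorem \ref{BaseFTAPII} (nonempty by Theorem \ref{duality}) and invoke Proposition \ref{propUnique}. Its proof performs this recentring for a general $f$ with a singleton price set, including a duality argument to pass from $\mathcal M^\varphi_e(\mcY)$ to all of $\Me$. This is legitimate: Proposition \ref{propUnique} has exactly the hypotheses of the theorem and does not depend on it. It is, however, a heavier tool than needed, since when $\Me$ is a singleton the passage from $\mathcal M^\varphi_e(\mcY)$ to $\Me$ is trivial.

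For the reverse link you use $4\Rightarrow1$ (from $\widehat{\mcY}\subseteq\mcY$), where the paper uses $5\Rightarrow2$ (from $\emptyset\neq\Me\subseteq\mathcal M_e(\widehat{\mcY})$); both are immediate. A by-product of your arrangement is that Lemma \ref{lemma:sticazzi} is needed only for the final assertion $\mathbf{NCA}(\mathrm{span}(\mcY))$, whereas in the paper it drives the key step $2\Rightarrow4$.
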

\begin{proof}
We know that  $\mathcal{M}_e(\widehat{\mcY}+\R^N_0)=\mathcal{M}_e(\widehat{\mcY}) \supseteq \mathcal{M}_e({\mcY})\neq\emptyset$  by \textbf{NCA}$(\mcY)$. Recall that we indicate with $\{e_i\}_{i=1,\ldots,N}$ the canonical basis of $\R^N$.

$ (1. \Rightarrow 2.)$: Pick $A\in\mcF$. For every $j=1,\dots,N$ we have that both $\pm 1_{A}e_j$ are $\mcY$-collectively replicable. 
Pick $\mathbf{Q}, \widehat{\mathbf{Q}}\in \Me$. From Proposition \ref{diversi} Item 2 we have $\rho^{ \mcY}_-( 1_Ae_j)=\rho^{ \mcY}_+( 1_Ae_j)$. At the same time, choosing $\varphi^i=1+\max \{|Y^i_1|,\ldots, |Y^i_R|\}$, we have $\mathbf{Q}, \widehat{\mathbf{Q}}\in \mathcal{M}^{\varphi}_e(\mcY)=\Me$.  Theorem \ref{duality} allows us to write
    $$\rho^{ \mcY}_-( 1_Ae_j)\leq E_{Q^j}[1_A]=\sum_{i=1}^NE_{Q^i}[1_Ae^i_j]\leq \rho^{ \mcY}_+( 1_Ae_j)$$ and the same holds for $E_{\widehat{Q}^j}[1_A]$, so that $E_{Q^j}[1_A]=E_{\widehat{Q}^j}[1_A]$. Since this can be repeated for every $j$ and $A\in\mcF$, it follows that $\Me$ (which is nonempty) also consists of at most one element.

$ (2. \Rightarrow 1.)$:  Pick $f\in L^{0}(\Omega, \mathbf{F}_T,P)$ and $\varphi\in L^{0}(\Omega, \mathbf{F}_T,P)$ with
$\max \{|f^i|,|Y^i_1|,\ldots, |Y^i_R|\}\leq \varphi^i$. By Theorem \ref{duality}, $ \mathcal{M}^\varphi_e(\mcY)\neq \emptyset$ and at the same time $ \mathcal{M}^\varphi_e(\mcY)\subseteq \Me$, the latter being a singleton, $\mathcal{M}^\varphi_e(\mcY)$ is itself a singleton, say $\{\widehat{\mathbf{Q}}\}$. Since $\rho^{\mcY}_-(f)=-\rho^{\mcY}_+(-f)$, \eqref{pricing:hedging} implies $$\rho^{\mcY}_+(f)=\sum_{i=1}^N E_{\widehat{Q}^i}[f^i]=\rho^{\mcY}_-(f).$$
Then we can invoke Proposition \ref{diversi} Item 1 which yields in particular 
$f\in \R^N + {\sf X}_{i=1} ^{N} K_i +\mcY$.

$(1. \Rightarrow  3.)$: Since the market is collectively $\mcY$-collectively complete, once any $f\in L^{0}(\Omega, \mathbf{F}_T,P)$ is fixed we have that both $\pm f$ are $\mcY$-collectively replicable. Then from Proposition \ref{diversi} Item 2 we have $\rho^{ \mcY}_+(f)=\rho^{ \mcY}_-(f)$.

$(3. \Rightarrow  1.)$: As $\rho^{ \mcY}_+(f)=\rho^{ \mcY}_-(f)$ for every $ f\in L^{0}(\Omega, \mathbf{F}_T,P)$, we have by Proposition \ref{diversi} Item 1 that both $\pm f$ are $\mcY$-collectively replicable, and so is $f$ in particular.

$(2. \Rightarrow 4.)$: We have $\{\mathbf{Q} \}= \Me$ by assumption. As $2. \Rightarrow 1.$, we can apply Lemma \ref{lemma:sticazzi} so that $\mathbf{NCA}(\mcY)$ and $\mathbf{NCA}(-\mcY)$ imply $\mathbf{NCA}(\mathrm{span}(\mcY))$. From Theorem \ref{IFTAP:cone}, $\mathcal{M}_e(\mathrm{span}(Y))\neq \varnothing$. Moreover $\mathcal{M}_e(\mathrm{span}(Y))\subseteq \Me=\{\mathbf{Q}\}$ so that  $\mathcal{M}_e(\mathrm{span}(Y))=\{\mathbf{Q}\}$. As a consequence
\[\sum_{i=1}^NE_{Q^i}[Y^i]=0 \quad \forall \;Y\in \mcY.\]
Now, for any $ f\in L^{0}(\Omega, \mathbf{F}_T,P)$ we have: $f=x+k+Y$ with $x \in \mathbb R^N$, $k \in {\sf X}_{i=1} ^{N} K_i$ and $Y \in \mcY$. Then $f^i=(x^i+E_{Q^i}[Y^i])+k^i +(Y^i - E_{Q^i}[Y^i])$. Since $\sum_{i=1}^NE_{Q^i}[Y^i]=0$ and $\R^N_0\subseteq \mcY$ then $Y+(-E_{\mathbf{Q}}[Y]) \in \widehat{\mcY}$, hence $f \in \R^N+ {\sf X}_{i=1} ^{N} K_i+ \widehat{\mcY}$, so that $f$ is $\widehat{\mcY}$-collectively replicable.

$ (4. \Leftrightarrow 5. \Leftrightarrow 6.)$: follows from  $ (1. \Leftrightarrow 2. \Leftrightarrow 3.)$ substituting $\mcY$ with $\widehat{\mcY}+\R^N_0$ and  noticing that: 
\begin{itemize}
\item $\widehat{\mcY}+\R^N_0$ satisfies Assumption \ref{ass:cone}, by Lemma \ref{lemma:hatyfinitegen}.
    \item  Since $(\widehat{\mcY}+\R^N_0) \subseteq \mcY$, then \textbf{NCA}$({\mcY})$ implies  
\textbf{NCA}$(\widehat{\mcY}+\R^N_0)$. Similarly,  \textbf{NCA}$(-{\mcY})$ implies  
\textbf{NCA}$(-(\widehat{\mcY}+\R^N_0))$

    \item $\mathcal{M}_e(\widehat{\mcY}+\R^N_0)=\mathcal{M}_e(\widehat{\mcY})$ (by direct verification), $\rho^{ \widehat{\mcY}}_+=\rho^{ \widehat{\mcY}+\R^N_0}_+$ (by \eqref{rhorho22} ) and $\rho^{ \widehat{\mcY}}_-=\rho^{ \widehat{\mcY}+\R^N_0}_-$ (by \eqref{rem:rhopm})
    \item $f$ is $\widehat{\mcY}$-replicable if and only if it is $(\widehat{\mcY}+\R^N_0)$-replicable, by \eqref{eqalityplusrn0}.
\end{itemize}

$(5. \Rightarrow 2.)$: It follows from  $\varnothing\neq \mathcal{M}_e(\mcY) \subseteq \mathcal{M}_e(\widehat{\mcY}) $.

From  $\mathcal{M}_e(\mcY) \subseteq \mathcal{M}_e(\widehat{\mcY}) $, we see that these sets must coincide when they are singletons. The remaining statements are consequences of $\mathcal{M}_e(\mcY) = \mathcal{M}_e(\widehat{\mcY}) $, the representation \eqref{pricing:hedging} and \eqref {pricing:hedgingBis} together with \eqref{rem:rhopm} and Lemma \ref{lemma:sticazzi}.

\end{proof}

\section{Examples} \label{example1}

In this section, we provide a collection of examples that serve both to illustrate the preceding results and to disprove some natural, albeit incorrect, conjectures suggesting that properties valid when admissible exchanges form a vector space extend verbatim to the case of finitely generated cones.
By construction, the superhedging functionals $\rho^{{\mcY}}_+$,\, $\rho^{\widehat{\mcY}}_+$,\, $\rho^N_+$
satisfy $\rho^{{N}}_+\geq \rho^{\widehat{\mcY}}_+\geq \rho^{{\mcY}}_+$.
While equality may occur in specific cases, as in collectively complete markets where $\rho^{ \widehat{\mcY}}_+=\rho^{ \mcY}_+$, strict inequalities are possible in general. 

\begin{example}
This example illustrates a situation in which $\rho^{\widehat{\mcY}}_+=\rho^{N}_+$.
As in Example \ref{ex1}, we consider again Example 7.2.3 of \cite{BDFFM25}, where the restriction of the set $\Me$ to $\mathcal F_1$ is given by
\[
\left\{
\left(
\left(\frac12 q,\,1-q,\,\frac12 q\right),
\left(\frac12 q,\,1-q,\,\frac12 q\right)
\right)
\,\middle|\,
0<q<1
\right\}.
\]
A straightforward computation yields
\[
\widehat{\mcY}
=
\bigl\{
(y,0,-y),\,(-y,0,y)
\;\big|\;
y\in\mathbb R
\bigr\}.
\]
Consequently,
$
\mathcal{M}_e\bigl(\widehat{\mcY}+\mathbb{R}^N_0\bigr)
=
\mathcal{M}_e(\widehat{\mcY})
=
M_e^1\times M_e^2,
$
and therefore, by \eqref{pricing:hedgingBis} and \eqref{rhoNN},
$
\rho^{\widehat{\mcY}}_+
=
\rho^{N}_+$.
This conclusion is not surprising. Indeed, in this example every component $\widehat Y^i$ of an exchange
$
\widehat Y\in\widehat{\mcY}
$
is individually replicable at zero initial cost by agent $i$, that is,
$
\widehat Y^i\in K_i,
\, i=1,2.
$
Hence, enlarging the set of admissible exchanges from $\{0\}$ to $\widehat{\mcY}$ does not create any additional hedging opportunities beyond those already available through the agents' individual markets.
\end{example}

\begin{example}\label{example22}
We provide an explicit example where $\mcY$ is a vector space such that $\R^N_0 \subseteq \mcY \subseteq \mcY_0$, $\mathrm{dim}(\mcY)$ {is finite}, \textbf{NCA}$(\mcY)$ holds true and the market is $\mcY$-collectively incomplete ($ \abs{\Me} =\infty) $, there exists a global arbitrage ($M^1_e \cap M^2_e = \emptyset$),  dim$(\widehat{\mcY})=2$ and $|\mathcal{M}_e(\widehat{\mcY})|=\infty$. 
Moreover, for an appropriate choice of the vector of claims $f$, we have 
$$\rho^{{\mcY}}_+(f)<\rho^{\widehat{\mcY}}_+(f)<\rho^N_+(f).$$

We consider a two-period market model, $\mathcal T=\{0,1,2\}$,  with two agents and two stocks and where each agent $i$, $i=1,2$, may invest only in the stock $X^i$ and in the riskless asset $X^0_t=1$ for all $t\in \mathcal T$. 
The evolution of the price processes is described in Figure \ref{figtree2}. We take $|\Omega|=8$, a common filtration: $\mathbb F=\mathbb F^1 =\mathbb F^2$, $\mcF_0=\{\emptyset, \Omega\},\mcF_2=\mathcal{P}(\Omega) $ and $$ \mcF_1=\sigma(A_1=\{\omega_1,\omega_2\},A_2=\{\omega_3,\omega_4\},A_3=\{\omega_5,\omega_6\}, A_4=\{\omega_7,\omega_8\}
).$$
\begin{figure}[hbt!]
\begin{center}
\tikzstyle{level 1}=[level distance=2cm, sibling distance=2.5cm,->]
\tikzstyle{level 2}=[level distance=1.5cm, sibling distance=1cm,->]

\tikzstyle{bag} = [text width=1.5em, text centered]
\tikzstyle{end} = []

\begin{tikzpicture}[grow=right, sloped]
\node[bag](c1){$8$}
    child {
        node[bag]{$4$}        
            child {
                node[end, label=right:
                    {$2$}](y18) {}
                edge from parent
                node[above] {}
                node[below]  {$\blue{1/2}$}
            }
            child {
                node[end, label=right:
                    {$6$}](y17) {}
                edge from parent
                node[above] {$\blue{1/2}$}
                node[below]  {}
            }
            edge from parent 
            node[above] {}
            node[below]  {$\blue{(1/2)-(q+q')}$}
    }
    child {
        node[bag]{$4$}        
            child {
                node[end, label=right:
                    {$3$}](y16) {}
                edge from parent
                node[above] {}
                node[below]  {$\blue{1/2}$}
            }
            child {
                node[end, label=right:
                    {$5$}](y15) {}
                edge from parent
                node[above] {$\blue{1/2}$}
                node[below]  {}
            }
            edge from parent 
            node[above] {$\blue{q'}$}
            node[below]  {}
    }
    child {
        node[bag]{$4$}        
            child {
                node[end, label=right:
                    {$2$}](y14) {}
                edge from parent
                node[above] {}
                node[below]  {$\blue{1/2}$}
            }
            child {
                node[end, label=right:
                    {$6$}](y13) {}
                edge from parent
                node[above] {$\blue{1/2}$}
                node[below]  {}
            }
            edge from parent 
            node[above] {$\blue{q}$}
            node[below]  {}
    }
    child {
        node[bag] {$12$}        
        child {
                node[end, label=right:
                    {$8$}] (y12){}
                edge from parent
                node[above] {}
                node[below]  {$\blue{3/4}$}
            }
            child {
                node[end, label=right:
                    {$24$}](y11) {}
                edge from parent
                node[above] {$\blue{1/4}$}
                node[below]  {}
            }
        edge from parent         
            node[above] {$\blue{1/2}$}
            node[below]  {}
    };

\node[bag](y21) at ([xshift=1.1cm]y11) {$\red{\omega_1}$};
\node[bag](y22) at ([xshift=1.1cm]y12) {$\red{\omega_2}$};
\node[bag](y23) at ([xshift=1.1cm]y13) {$\red{\omega_3}$};
\node[bag](y24) at ([xshift=1.1cm]y14) {$\red{\omega_4}$};
\node[bag](y25) at ([xshift=1.1cm]y15) {$\red{\omega_5}$};
\node[bag](y26) at ([xshift=1.1cm]y16) {$\red{\omega_6}$};
\node[bag](y27) at ([xshift=1.1cm]y17) {$\red{\omega_7}$};
\node[bag](y28) at ([xshift=1.1cm]y18) {$\red{\omega_8}$};
% %%%%%Secondo:
\node[bag](c2) at ([xshift=6cm]c1){$20$}
    child {
        node[bag]{$20$}        
            child {
                node[end, label=right:
                    {$16$}](z18) {}
                edge from parent
                node[above] {}
                node[below]  {$\blue{1/2}$}
            }
            child {
                node[end, label=right:
                    {$24$}](z17) {}
                edge from parent
                node[above] {$\blue{1/2}$}
                node[below]  {}
            }
            edge from parent 
            node[above] {}
            node[below]  {$\blue{1-((5/4)p'+(3/4)p)}$}
    }
    child {
        node[bag]{$4$}        
            child {
                node[end, label=right:
                    {$2$}](z16) {}
                edge from parent
                node[above] {}
                node[below]  {$\blue{1/2}$}
            }
            child {
                node[end, label=right:
                    {$6$}](z15) {}
                edge from parent
                node[above] {$\blue{1/2}$}
                node[below]  {}
            }
            edge from parent 
            node[above] {$\blue{(1/4)(p'-p)}$}
            node[below]  {}
    }
    child {
        node[bag]{$16$}        
            child {
                node[end, label=right:
                    {$12$}](z14) {}
                edge from parent
                node[above] {}
                node[below]  {$\blue{1/2}$}
            }
            child {
                node[end, label=right:
                    {$20$}](z13) {}
                edge from parent
                node[above] {$\blue{1/2}$}
                node[below]  {}
            }
            edge from parent 
            node[above] {$\blue{p}$}
            node[below]  {}
    }
    child {
        node[bag] {$24$}        
        child {
                node[end, label=right:
                    {$48$}] (z12){}
                edge from parent
                node[above] {}
                node[below]  {$\blue{1/4}$}
            }
            child {
                node[end, label=right:
                    {$16$}](z11) {}
                edge from parent
                node[above] {$\blue{3/4}$}
                node[below]  {}
            }
        edge from parent         
            node[above] {$\blue{p'}$}
            node[below]  {}
    };

\node[bag](z21) at ([xshift=1.1cm]z11) {$\red{\omega_1}$};
\node[bag](z22) at ([xshift=1.1cm]z12) {$\red{\omega_2}$};
\node[bag](z23) at ([xshift=1.1cm]z13) {$\red{\omega_3}$};
\node[bag](z24) at ([xshift=1.1cm]z14) {$\red{\omega_4}$};
\node[bag](z25) at ([xshift=1.1cm]z15) {$\red{\omega_5}$};
\node[bag](z26) at ([xshift=1.1cm]z16) {$\red{\omega_6}$};
\node[bag](z27) at ([xshift=1.1cm]z17) {$\red{\omega_7}$};
\node[bag](z28) at ([xshift=1.1cm]z18) {$\red{\omega_8}$};

\end{tikzpicture}
\end{center}
\caption{Tree for the stocks $(X^1,X^2)$ at times $t=0,1,2$.}
\label{figtree2}
\end{figure}

In Figure \ref{figtree2} the parameters $q,q',p,p'$ parametrizing the sets of equivalent martingale measures for the two stocks satisfy:
\begin{equation}
\label{eqconstraint}
    0<q<\frac12,\,0<q'<\frac12-q\quad\text{and}\quad 0<p<\frac12,\,p<p'<\frac45-\frac35p
\end{equation}

In the following, to simplify the computations, we will pick $f \in  L^0(\Omega, \mathbf{F}_1, P)$ with $\mcF_1$-measurable components, so that we only care about restrictions of measures to $\mcF_1$. We express measures $Q$ on $\mcF_1$ as $4$-tuples, in the form $(Q(A_1),Q(A_2),Q(A_3),Q(A_4))$.

Denoting by $M^i_e|_{\mcF_1}$ the collection of restrictions to $\mcF_1$ of elements in $M^i_e$, one readily verifies  that
\begin{equation}\label{mesonf1}
    \begin{split}
     M^1_e|_{\mcF_1}&=\bigg\{\bigg(\frac12, q,q', \frac12-(q+q')\bigg),0<q<\frac12,\,0<q'<\frac12-q \bigg\},\\
    M^2_e|_{\mcF_1}&=\bigg\{\bigg(p', p,\frac14(p'-p), 1-\bigg(\frac54 p'+\frac34 p\bigg)\bigg), 0<p<\frac12,\,p<p'<\frac45-\frac35p\bigg\}.   
    \end{split}
\end{equation}
We also select 
$$\mathcal{Y} = \left\{ Y \in (L^0(\Omega, \mathcal{F}_1, P))^2 \mid Y^1+Y^2 = 0  \right\}.$$
%$\mcY=\mcY_0\cap L^0(\Omega,\mcF_1,P)$. 
With this choice of $\mcY$, the set of collective equivalent  martingale measures becomes
 \begin{align*}
\mathcal M_{e}(\mcY) &:=\bigg\{ \mathbf Q=(Q^1,Q^2)  \mid Q^i \in  {M}^i_{e} \, \text{ and }   \,  E_{Q^1}[Y^1]+E_{Q^2}[Y^2] = 0 \text {   }\forall \,Y \in \mcY \bigg\}\\
&=\bigg\{ \mathbf Q=(Q^1,Q^2)  \mid Q^i \in  {M}^i_e \, \text{ s.t. } Q^1=Q^2 \text{ on } \mathcal F_1 \bigg\}
\end{align*}
and thus the set of restrictions to $\mcF_1$ of elements in $\Me $  is
\begin{equation}\label{MeMe}
\Me |_{\mcF_1}=\bigg\{\bigg(\bigg(\frac12, q,\frac18-\frac14 q, \frac38-\frac34q\bigg),\bigg(\frac12, q,\frac18-\frac14 q, \frac38-\frac34q
\bigg)\bigg), 0<q<\frac12\bigg\}.
\end{equation}
From  $\widehat{\mcY}:=\left \{ Y \in  \mcY \mid E_{Q^1}[Y^1]=E_{Q^2}[Y^2]=0 \,\, \text{ for all } \, (Q^1,Q^2) \in \Me \right \} \subseteq \mcY$,
we can now identify 
\begin{align*}
    \widehat{\mcY}=\big\{\big(&-(w+3z)1_{A_1}+(w+3z)1_{A_2}+4w1_{A_3}+4z1_{A_4},\\&(w+3z)1_{A_1}-(w+3z)1_{A_2}-4w1_{A_3}-4z1_{A_4} ), z,w\in \R \big\},
\end{align*}
that is $\widehat{\mcY}=\mathrm{span}(\widehat{Y}_1,\widehat{Y}_2)$
for $\widehat{Y}_1=(\widehat{Y}_1^1,\widehat{Y}_1^2)$, $\widehat{Y}_2=(\widehat{Y}_2^1,\widehat{Y}_2^2)$, where
\begin{align*}
\widehat{Y}_1^1&:=-1_{A_1}+1_{A_2}+41_{A_3}\, \text{ and }\, \widehat{Y}_1^2:=-\widehat{Y}_1^1 \\ 
    \widehat{Y}_2^1&:=-31_{A_1}+31_{A_2}+41_{A_4}\, \text{ and }\, \widehat{Y}_2^2:=-\widehat{Y}_2^1.
\end{align*}

Hence,
$\mathcal{M}_e(\widehat{\mcY})|_{\mcF_1}$ consists of all the pairs $(Q_1,Q_2)$ such that $Q_i\in M^i_e|_{\mcF_1}$ and such that
$E_{Q_1}[\widehat{Y}^1_1]+E_{Q_2}[\widehat{Y}^2_1]=0=E_{Q_1}[\widehat{Y}^1_2]+E_{Q_2}[\widehat{Y}^2_2]$ (note that $\widehat{Y}_1,\widehat{Y}_2$ are both $\mcF_1$-measurable).
Using \eqref{mesonf1} and these conditions we get the values $q'=\frac18-\frac14q$ and $p'=\frac12$. 

Imposing that $p<p'<4/5-3/5p$ we get that $q$ and $p$ satisfy: 
$0<q<1/2,0<p<1/2$.
We conclude that 
$$\mathcal{M}_e(\widehat{\mcY})|_{\mcF_1}=\bigg\{\bigg(\bigg(\frac12, q,\frac18-\frac14q, \frac38-\frac34q\bigg),
       \bigg(\frac12, p,\frac18-\frac14p,\frac38-\frac34p\Big)\bigg), 0<p,q<\frac12\bigg\}.$$

     Observe that the only element $\widehat{Y}=(\widehat{Y}^1,\widehat{Y}^2) \in \widehat{\mcY}$ having each component $\widehat{Y}^i$ replicable by investing only in the market $(X^0,X^i)$ is the zero element $(\widehat{Y}^1,\widehat{Y}^2)=(0,0)$.

Now pick $f^1=1_{A_3}, f^2=-f^1$ so that  $f=(f^1,f^2)\in\mcY$ but $(f^1,f^2)\notin \widehat{\mcY}$. By \eqref{pricing:hedging} and \eqref{pricing:hedgingBis} we have 
       \begin{align*}
           \rho^{{\mcY}}_+(f)&=0, \\ \rho^{\widehat{\mcY}}_+(f)&=\sup_{q,p\in(0,1/2)}\bigg(\frac18-\frac14q-\Big(\frac18-\frac14p\Big)\bigg)=\frac18,  \\
           \rho^N_+(f)&= \sup_{q\in (0,1/2)}\sup_{q'\in (0, 1/2-q)}q'+\sup_{p\in (0,1/2)}\sup_{p'\in (p, 4/5-3/5p)}-\frac14(p'-p)=\frac12,
       \end{align*}
       where the first equality follows from the fact that $f\in\mcY$, the second line of equalities follow picking $q\downarrow 0,p\uparrow\frac12$, the third one choosing $q\downarrow 0,q'\uparrow\frac12,p\uparrow \frac12,p'\downarrow\frac12$.
Hence for this choice of $f$ we have $\rho^{{\mcY}}_+(f)<\rho^{\widehat{\mcY}}_+(f)<\rho^N_+(f)$ as claimed.\\

\end{example}

\begin{example}\label{example42}
In this example $N=2$, $\mcY \subseteq \mcY_0$ is a convex cone satisfying Assumption \ref{ass:cone}, \textbf{NCA}$(\mcY)$ holds true but \textbf{NCA}$(-\mcY)$ fails, the market is $\mcY$-collectively complete, namely $\mathbb R^2+K_1\times K_2+\mathcal Y=(L^0(\Omega,\mathcal F,P))^2$, however $|\Me|=\infty$ and, for some $f \in (L^0(\Omega,\mathcal F,P))^2$ such that $ f $ and $(-f)$    are ${\mcY}$-collectively replicable, it holds:
\[
\rho_-^{\mathcal Y}(f)<\rho_+^{\mathcal Y}(f).
\]
In contrast with Proposition \ref{diversi} Item 2, this example proves that when \textbf{NCA}$(-\mcY)$ fails, the equality between superhedging and subhedging of a $\mcY$-replicable and $(-\mcY)$-replicable claim $f$ does not necessarily hold.

Let \(N=2\), $T=1$,
$\Omega=\{\omega_1,\omega_2,\omega_3\}$
with 
$\ P(\{\omega_j\})>0,\quad j=1,2,3.$
We take $\mathcal{F}=\mathcal{F}^i_1=\mathcal{P}(\Omega)$ and identify \(L^0(\Omega,\mathcal{F},P)\) with \(\mathbb R^3\). 
We consider trivial initial $\sigma$ algebras.  In addition to the investment in the riskless asset $X^0_t=1$, $t=0,1$, agent $1$ has access to the stock $X^1$ whose increments are given by
\[
\Delta X^1=(-1,0,1):=a,
\]
and agent $2$ has access to the stocks $X^2$ and $X^3$ whose increments are given by

\[
\Delta X^2=(-2,1,0):=b_1,
\qquad 
\Delta X^3=(-3,0,1):=b_2.
\]
We set
\[
K_1:=\operatorname{span}(\{a\}),\qquad
K_2:=\operatorname{span}(\{b_1,b_2\})
\]
and we recognize that both markets are arbitrage free, as the sets of equivalent martingale measures
for \(K_1\), and respectively for \(K_2\), are
\[
M^1_e=\{Q^1_t=(t,1-2t,t),\quad 0<t<\frac12\},
\]
\[
M^2_e=\left \{\bar Q^2:=\left(\frac16,\frac13,\frac12\right)\right\}.
\]
While the market of agent $1$ is incomplete, the market of agent $2$ is complete and thus 
\[
K_2=\{z\in\mathbb R^3:E_{\bar Q_2}[z]=0\}, \quad \mathbb R\mathbf 1+K_2=\mathbb R^3.
\]
Let
\[
V_1:=(1,-2,1),\qquad V_2:=(-4,2,2)
\]
and notice that $V_2=-V_1+3a$.
Let $\mathbf 1=(1,1,1)$ and define the cone of admissible exchanges by
\[
\mathcal Y
:=
\operatorname{cone}
\big\{
(\mathbf 1,-\mathbf 1),
(-\mathbf 1,\mathbf 1),
(V_1,-V_1),
(V_2,-V_2)
\big\} \subseteq \mcY_0.
\]
Equivalently, every \(Y\in\mathcal Y\) is of the form
$Y=(y,-y)$,
where
\[
y=c\mathbf 1+\alpha V_1+\beta V_2,
\quad c\in\mathbb R,\quad \alpha,\beta\ge 0.
\]
In particular, every admissible exchange is componentwise zero-sum:
$Y^1+Y^2=0$.

To determine $\mathcal M_e(\mathcal Y)$, compute
$E_{Q^1_t}[V_1]
=
6t-2, \, E_{\bar Q^2}[V_1]
=
0$.
Then the condition induced by \((V_1,-V_1)\in\mathcal Y\) is
\[
E_{Q^1_t}[V_1]-E_{\bar Q^2}[V_1]\le 0
\quad\Longleftrightarrow\quad
6t-2\le 0
\quad\Longleftrightarrow\quad
t\le \frac13.
\]
Similarly,
$E_{Q^1_t}[V_2]
=
2-6t, \, E_{\bar Q^2}[V_2]
=
1 
$
and thus the condition induced by \((V_2,-V_2)\in\mathcal Y\) is
\[
E_{Q^1_t}[V_2]-E_{\bar Q^2}[V_2]\le 0
\quad\Longleftrightarrow\quad
1-6t\le 0
\quad\Longleftrightarrow\quad
t\ge \frac16.
\]
Therefore
\[
\mathcal M_e(\mathcal Y)
=
\left\{
\left(Q^1_t,\bar Q^2\right)\mid
\frac16\le t\le \frac13
\right\}.
\]
In particular, \(\mathcal M_e(\mathcal Y)\neq\varnothing\) and hence, by Theorem \ref{IFTAP:cone}, \textbf{NCA}$(\mcY)$ holds true.
By reversing the above inequalities we also conclude that  \(\mathcal M_e(-\mathcal Y)= \varnothing\) and thus, by the same theorem, \textbf{NCA}$(-\mcY)$ fails.

We next prove that the market is \(\mathcal Y\)-collectively complete, namely
\[
\mathbb R^2+K_1\times K_2+\mathcal Y
=
(L^0(\Omega,\mathcal F,P))^2.
\]
Let \((g_1,g_2)\in (L^0(\Omega,\mathcal F,P))^2 \simeq \mathbb R^3\times\mathbb R^3\) be arbitrary.
The vectors \(\mathbf 1,a,V_1\) form a basis of \(\mathbb R^3\), since
\[
\det
\begin{pmatrix}
1&-1&1\\
1&0&-2\\
1&1&1
\end{pmatrix}
=
6\neq 0.
\]
Therefore there exist \(r,\eta,\gamma\in\mathbb R\) such that
$g_1=r\mathbf 1+\eta a+\gamma V_1$.
Write
$\gamma=\gamma^+-\gamma^-,\,
\gamma^+,\gamma^-\ge 0$,
and define
\[
y:=\gamma^+V_1+\gamma^-V_2. 
\]
Then $(y,-y) \in \mcY$ and, since \(V_2=-V_1+3a\), we get
\[
y
=
(\gamma^+-\gamma^-)V_1+3\gamma^-a
=
\gamma V_1+3\gamma^-a.
\]
Hence
\[
g_1
=
r\mathbf 1+(\eta-3\gamma^-)a+y=x_1\mathbf 1+k_1+y
\]
for some \(x_1\in\mathbb R\), \(k_1\in K_1\), and some \(y\) such that \((y,-y)\in\mathcal Y\).

Now consider the second component and  recall that
$\mathbb R\mathbf 1+K_2=\mathbb R^3$.
Therefore, applying this to \((g_2+y) \in \R^3\), there exist \(x_2\in\mathbb R\) and \(k_2\in K_2\) such that
\[
g_2+y=x_2\mathbf 1+k_2, \,  \text{ namely } \, g_2=x_2\mathbf 1+k_2-y.
\]
Thus
\[
(g_1,g_2)
=
(x_1,x_2)+(k_1,k_2)+(y,-y) \in \mathbb R^2+K_1\times K_2+\mathcal Y.
\]
Since $(g_1,g_2) \in (L^0(\Omega,\mathcal F,P))^2$ was arbitrary,
the market is \(\mathcal Y\)-collectively complete.\\
Now take any $f \in (L^0(\Omega,\mathcal F,P))^2$. Then $\pm f \in (L^0(\Omega,\mathcal F,P))^2 =\mathbb R^2+K_1\times K_2+\mathcal Y$, so that both $ f $ and $(-f)$    are ${\mcY}$-collectively replicable.

We now show that 
 the collective superhedging price and subhedging price of $f:=(V_1,-V_1)$ are not equal.
For every
$\mathbf Q=(Q^1_t,\bar Q^2)\in\mathcal M_e(\mathcal Y)$,
we have
\[
E_{Q^1_t}[f^1]+E_{\bar Q^2}[f^2]
=
E_{Q^1_t}[V_1]-E_{\bar Q^2}[V_1]
=
6t-2
\]
and from 
$\frac16\le t\le \frac13$
we get
$-1\le 6t-2\le 0$.
Therefore, by the collective pricing--hedging duality,
\[
\rho_+^{\mathcal Y}(f)
=
\sup_{\mathbf Q\in\mathcal M_e(\mathcal Y)}
\left(
E_{Q^1}[f^1]+E_{Q^2}[f^2]
\right)
=
0,
\]
and
\[
\rho_-^{\mathcal Y}(f)
=
\inf_{\mathbf Q\in\mathcal M_e(\mathcal Y)}
\left(
E_{Q^1}[f^1]+E_{Q^2}[f^2]
\right)
=
-1.
\]
Finally we observe that, by Theorem \ref{BaseFTAPII},
\[
\Pi(f)
=
\{(6t-2,0)\mid1/6\le t\le 1/3\}
=
[-1,0]\times\{0\},
\]
which is not relatively open.
\end{example}

\begin{example}[\(\mathbf{NCA}(\mathcal Y)+\mathbf{NCA}(-\mathcal Y)
\not\Rightarrow
\mathbf{NCA}(\operatorname{span}(\mathcal Y))\)]
\label{example43}
Let
\[
\Omega=\{\omega_1,\omega_2,\omega_3\},
\qquad 
\mathbb P(\{\omega_j\})>0,\quad j=1,2,3,
\]
and take \(N=2\), $T=1$. We take $\mathcal{F}=\mathcal{F}^i_1=\mathcal{P}(\Omega)$.
We identify \(L^0(\Omega,\mathcal{F})\) with \(\mathbb R^3\). 
We consider trivial initial $\sigma$ algebras.  Agent $i$ has access to the stock $X^i$ whose increments are given by
\[
\Delta X^1:=(2,-1,0),
\qquad 
\Delta X^2:=(-1,2,0).
\]
Then  $K_1=\mathrm{span}(\{\Delta X^1\}),
\, 
K_2:=\mathrm{span}(\{\Delta X^2\})$.
Let
$\mathbf 1=(1,1,1),
\, 
e_j:=\mathbf 1_{\{\omega_j\}},\, j=1,2,3
$
and consider the following cone of admissible exchanges 
\[
\mathcal Y
:=
\operatorname{cone}
\Big\{
(\mathbf 1,-\mathbf 1),
(-\mathbf 1,\mathbf 1),
(e_1,-e_1),
(e_2,-e_2)
\Big\}.
\]
 We will now show that  $\mathcal M_e(\mathcal Y)\neq \emptyset$, $\mathcal M_e(-\mathcal Y)\neq \emptyset$ but $\mathcal M_e({\mathrm{span}}(\mathcal Y))= \emptyset$. Applying Theorem \ref{IFTAP:cone} we see that both $\mathbf{NCA}(\mathcal Y)$ and $\mathbf{NCA}(-\mathcal Y)$ hold, but $
\mathbf{NCA}(\mathrm{span}(\mathcal Y))$ does \textbf{not} hold.
Let
\[
Q^1=(q^1_1,q^1_2,q^1_3),
\qquad
Q^2=(q^2_1,q^2_2,q^2_3),
\]
with all coordinates strictly positive and summing to \(1\). The individual martingale
conditions are
$
E_{Q^1}[\Delta X^1]=0,
E_{Q^2}[\Delta X^2]=0$. 
These conditions become
$2q^1_1-q^1_2=0,
-q^2_1+2q^2_2=0$.
Therefore
\[
M^1_e=\bigg\{Q^1_s=(s,2s,1-3s),
\,0<s<\frac13\bigg\},\qquad M^2_e=\bigg\{Q^2_t=(2t,t,1-3t),
\,0<t<\frac13\bigg\}
\]
We first compute \(\mathcal M_e(\mathcal Y)\). By definition,
$(Q^1_s,Q^2_t)\in M_e(\mathcal Y)$
if and only if
\[
E_{Q^1_s}[Y^1]+E_{Q^2_t}[Y^2]\le 0
\qquad \forall\,Y\in\mathcal Y.
\]
Since every \(Y\in\mathcal Y\) is of the form $Y=(Y^1,-Y^1)$, this is equivalent to
\[
E_{Q^1_s}[Y^1]-E_{Q^2_t}[Y^1]\le 0 \quad \forall\, Y^1 \text { such that } (Y^1,-Y^1) \in\mathcal Y.
\]
The terms involving \(\mathbf 1\) impose no restriction, since \(Q^1_s\) and \(Q^2_t\)
are probability measures. Hence it is enough to impose the inequalities for \(e_1\)
and \(e_2\).
For \(e_1\), we obtain
$
E_{Q^1_s}[e_1]-E_{Q^2_t}[e_1]
=
s-2t
\le 0
$,
and hence
$
s\le 2t.
$
For \(e_2\), we obtain
$
E_{Q^1_s}[e_2]-E_{Q^2_t}[e_2]
=
2s-t
\le 0
$,
and hence
$
2s\le t
$.
Therefore,
\[
\Me=\left\{
\bigl(Q^1_s,Q^2_t\bigr) \mid
Q^1_s=(s,2s,1-3s),\
Q^2_t=(2t,t,1-3t),\
0<s<\frac13,\ 0<t<\frac13,\, 2s\le t
\right\} \neq \emptyset.
\]
%A simple choice satisfying these constraints is
%$
%s=\frac{1}{10},
%t=\frac{1}{4}$, and we see that $M_e(\mathcal Y)\neq \emptyset$.
To compute \(\mathcal{M}_e(-\mathcal Y)\) we just need to reverse the additional inequalities above. 
Consequently
\[
\mathcal{M}_e(-\mathcal Y)=\left\{
\bigl(Q^1_s,Q^2_t\bigr)\mid
Q^1_s=(s,2s,1-3s),\
Q^2_t=(2t,t,1-3t),\
0<s<\frac13,\ 0<t<\frac13,\, s \geq 2t
\right\} \neq \emptyset.
\]
%A choice satisfying these constraints is
%$
%s=\frac14,
%t=\frac{1}{10}$, proving $M_e(-\mathcal Y)\neq \emptyset$.
Finally, to compute \(\mathcal M_e(\mathrm{span}(\mathcal Y))\), the additional inequalities above must become equalities, which implies  
$
s=2t$ and $
2s=t$
so that
$s=t=0$, which is impossible because equivalent probability measures must
assign strictly positive mass to every state. Hence
$
\mathcal M_e(\mathrm{span}(\mathcal Y))=\emptyset.
$
\end{example}

\paragraph{Acknowledgements}
The authors thank Frank Riedel for pointing out the remarkable relation between our framework and the one in \cite{Page2006,DanaLeVan2010} and subsequent works.

\paragraph{Data availability statement} Data sharing is not applicable to this article as no datasets were generated or analyzed during the current study.

\paragraph{Funding} A. Doldi and M. Frittelli are members of GNAMPA-INDAM.

{
\bibliographystyle{abbrv}  
\bibliography{BibAll}
}

%%%%%%%%%%%%%%%%%%%%%%%%%%%%%%%%%%

\end{document}